\newtheorem{theorem}{Theorem}[section]
\newtheorem{proposition}[theorem]{Proposition}
\newtheorem{definition}[theorem]{Definition}
\newtheorem{remark}[theorem]{Remark}
\newcommand{\Ci}{\mathbb{C}} 
\newcommand{\D}{\mathbb{D}}
\newcommand{\I}{\mathbb{I}}
\newcommand{\N}{\mathbb{N}}
\newcommand{\R}{\mathbb{R}}
\newcommand{\T}{\mathbb{T}}
\newcommand{\Z}{\mathbb{Z}}
\newcommand{\A}{\mathcal{A}}
\newcommand{\B}{\mathcal{B}}
\newcommand{\K}{\mathcal{K}}
\newcommand{\Oh}{\mathcal{O}}
\newcommand{\aloop}{\mathfrak{a}}
\newcommand{\bloop}{\mathfrak{b}}
\newcommand{\LD}{\mathfrak{L}}
\newcommand{\s}{\mathfrak{s}}
\newcommand{\dist}{\text{dist}}
\newcommand{\dotbold}[1]{\dot{\boldsymbol{#1}}}
\newcommand{\dd}{\text{d}}
\newcommand{\out}{\text{(out)}}
\newcommand{\HM}{\text{HM}}
\newcommand{\myarcsin}{\mathcal{U}}
\newcommand{\bbm}{\begin{bmatrix}}
\newcommand{\ebm}{\end{bmatrix}}
\DeclareMathOperator{\Ai}{Ai}
\DeclareMathOperator{\Real}{Re}
\DeclareMathOperator{\Imag}{Im}
\newcounter{RHP}
\newenvironment{RHP}[1][]{\refstepcounter{RHP}

    \textbf{Riemann-Hilbert Problem~\theRHP: #1}}
\newcommand{\LT}[1]{\begin{bmatrix}
    1 & 0 \\
    #1 & 1
\end{bmatrix}}
\newcommand{\UT}[1]{\begin{bmatrix}
    1 & #1 \\
    0 & 1
\end{bmatrix}} 
\title{Large-Parameter Asymptotics of \\ Generalized Hasting-McLeod Functions}
\author{Kurt Schmidt}
\address[K. Schmidt]{Department of Mathematical Sciences\\ University of Cincinnati\\ PO Box 210025\\ Cincinnati, OH 45221.}
\email{schmidku@mail.uc.edu}
\author{Robert Buckingham}
\address[R. Buckingham]{Department of Mathematical Sciences\\ University of Cincinnati\\ PO Box 210025\\ Cincinnati, OH 45221.}
\email{buckinrt@uc.edu}
\begin{document}
\phantom{.}
\vspace{-1cm}
\maketitle
\vspace{-1cm}
\begin{abstract}
    The generalized Hastings-McLeod solutions to the inhomogeneous Painlev\'{e}-II equation arise in multi-critical unitary random matrix ensembles, the chiral two-matrix model for rectangular matrices, non-intersecting squared Bessel paths, and non-intersecting Brownian motions on the circle. We establish the leading-order asymptotic behavior of the generalized Hastings-McLeod functions as the inhomogeneous parameter approaches infinity using the Deift-Zhou nonlinear steepest-descent method for Riemann-Hilbert problems. This analysis is done in both the pole-free region and pole region. The asymptotic formulae show excellent agreement with numerically computed solutions in both regions.
\end{abstract}

\tableofcontents

\newpage 

\section{Introduction}
\subsection{Background}
We consider the inhomogeneous Painlev\'e-II equation
\begin{equation} \label{PII} 
    u''(y) = 2u(y)^3 + yu(y)- \alpha,
\end{equation}
where $\alpha$ is the inhomogeneity parameter. The well-known Hastings-McLeod function, $u_\HM$, solves \eqref{PII} when $\alpha=0$ with boundary conditions $u_{\HM}(y) \sim \sqrt{-y/2}$ as $y \to - \infty$ and $u_{\HM}(y) \sim \Ai(y)$ as $y \to +\infty$ (where $\Ai$ is the Airy function) \cite{hastings1980boundary}. It arises in the Tracy-Widom distribution describing a wide range of random  phenomena, including, but not limited to, the largest eigenvalue of a  Gaussian Unitary Ensemble random matrix \cite{largestEigienVale} and the longest increasing subsequence of a permutation \cite{Baik1998OnTD}.

In 2008, Claeys, Kuijlaars, and Vanlessen found that if one perturbs the Gaussian Unitary Ensemble so that the limiting mean density function of eigenvalues vanishes quadratically at the origin, then perturbations of the Hastings-McLeod function arise in a double-scaling limit \cite{Claeys2005MulticriticalUR}. These perturbations solve \eqref{PII} for $\alpha>-1/2$ with boundary conditions 
\begin{equation}\label{BonCon}
    \begin{cases} \displaystyle
        u_{\HM}^{(\alpha)}(y)=\sqrt{-\frac{y}{2}}\left( 1+\Oh\left( 1/(-y)^{3/2} \right) \right) & \text{ as } y \to -\infty \\[10 pt] \displaystyle
        u_{\HM}^{(\alpha)}(y) = \frac{\alpha}{y}\left( 1+\Oh\left( 1/y^3 \right) \right) & \text{ as } y \to +\infty \hspace{.5 cm} (\alpha \not = 0). 
    \end{cases}
\end{equation}
This family of solutions to \eqref{BonCon} is named the generalized Hastings-McLeod functions.  Recently, these functions have been observed in interacting particle systems \cite{Buckingham2017TheKP, NonIntersectingSB} and in a different random matrix model \cite{chiral}. 
We establish asymptotic formulae for generalized Hastings-McLeod functions, $u_\HM^{(\alpha)}$, for large, positive half-integers $\alpha$. 

The starting point of our analysis is a Riemann-Hilbert representation for generalized Hastings-McLeod functions  discovered by Buckingham and Liechty in 2018 while investigating Dyson Brownian bridges \cite{Buckingham2017TheKP}. We apply the Deift-Zhou method of nonlinear steepest descent developed in \cite{Deift1993ASD}. This involves nontrivial transformations that force the jump discontinuities to be exponentially small as $\alpha \to \infty$ except at certain points or arcs. Ignoring these small jumps results in a  Riemann-Hilbert problem that is explicitly solvable and is called a \textit{model problem}. This provides a rigorous approximation, with controlled error, to the solution of the original Riemann-Hilbert problem for the generalized Hastings-McLeod functions. 


 For large $\alpha$, under the scaling 
 \begin{equation}
    x:=  - \frac{2^{1/3}}{\left(\alpha- \frac{1}{2}\right)^{2/3}} y,
 \end{equation}
 $u^{(\alpha)}_{\HM}$ has two distinct behaviors in the $x$-plane. On one hand, the leading-order asymptotic formula is an analytic function in a certain region. We call this section of the $x$-plane \textit{the pole-free region}. On the other hand, in the complement of the closure of the pole-free region, the leading-order asymptotic formula is not analytic and has poles at specific points (depending on $\alpha$). We call this section of the $x$-plane \textit{the pole region}. (See Proposition \ref{defOfGenusZeroRegion} for a precise description of the pole-free region.) Notice, for $\alpha$ large enough the scaled generalized Hastings-McLeod functions are analytic in the pole-free region and meromorphic in the pole region. Our Riemann-Hilbert analysis detects this bifurcation. Indeed, in the pole-free region the model problem is solved in terms of elementary functions, while in the pole region, it is solved in terms of Riemann $\Theta$-functions.

\subsubsection{Notation} Throughout the paper 
all matrices are denoted by a bold, uppercase letter with the exception of the Pauli matrix
\begin{equation}
    \sigma_3 := \begin{bmatrix}
        1 & 0 \\ 0 & -1
    \end{bmatrix}.
\end{equation}
We also use the following short hand:
\begin{equation}\label{notationDefs}
    \boldsymbol{U}_{k}(f) := \UT{e^{kf}}, \hspace{1cm} \boldsymbol{L}_{k}(f):= \LT{e^{kf}}, \hspace{1cm} \boldsymbol{T}(a):= \begin{bmatrix}
        0 & -a^{-1} \\
        a & 0
    \end{bmatrix}.
\end{equation}
Finally, given a smooth, oriented curve $\Sigma$ and a function $f$ that can be continuously extended to $z_0 \in \Sigma$,  we let $f_{+}(z_{0})$ denote the non-tangential limit of $f(z)$ as $z$ approaches $z_0$ from the left of $\Sigma$. Similarly, $f_{-}(z_0)$ denotes the non-tangential limit as $z$ approaches $z_0$ from the right of $\Sigma$.
\subsubsection{Outline of the Paper}
First, we state our main results in Theorem \ref{genus0 theorem} and Theorem \ref{genus1 theorem} and verify these results by comparing our asymptotic formulae to numerics. Second, in Section \ref{OGRHP} we state the Riemann-Hilbert representation (and its scaled version) of  the generalized Hastings-McLeod functions. Next, in Section \ref{GenusZeroAnalysis}, we give an explicit and precise definition for the pole-free region of the $x$-plane. Then, we proceed to apply the Deift-Zhou steepest-decent method in the pole-free region, leading to a proof of Theorem \ref{genus0 theorem}. In Section \ref{GenusOneAnalysis}, we carry out similar analysis in the pole region to prove Theorem \eqref{genus1 theorem}. Finally, in Section \ref{Numerics}, we review the numerical methods we implemented to verify our asymptotic formulae.

\subsection{Acknowledgments} 
The first author was supported by the Charles Phelps Taft Research Center by a Dissertation Fellowship.  The second author was supported by the National Science Foundation via grant DMS–2108019.  The authors also thank Deniz Bilman, Karl Liechty, Peter Miller, and Andrei Prokhorov for useful discussions.
\subsection{Results} 
It is convenient to state our results in terms of the large parameter $k$, where
\begin{equation}
    k:= \alpha - \frac{1}{2}.
\end{equation}

\begin{theorem} \label{genus0 theorem}
    Let $x$ be in the pole-free region of the $x$-plane, as defined in Section \ref{poleFreeRegSection}. Set $\Sigma_S$ to be the union of the closed rays emanating from the points $x= 3 e^{\frac{2\pi i}{3}}$ and $x= 3 e^{-\frac{2\pi i}{3}}$ with arguments $2\pi/3$ and $-2\pi/3$ respectively. Let $S(x) \equiv S$ denote the solution to the cubic equation
    \begin{equation}\label{cubicEqu}
        S^3 +xS -2i =0
    \end{equation}
    cut on $\Sigma_S$ which has asymptotic behavior $S(x) =-i\sqrt{x} + \Oh(1/x)$ as $x \to + \infty$ along the positive real line and $S(x)= 2i/x + \Oh\left( 1/x^2 \right)$ as $x \to - \infty$ along the negative real line. Then, for large $k \in \N$,
    \begin{equation}\label{genus0Result}
        -(2k)^{-1/3}u^{(k+1/2)}_{\HM}\left( - \frac{k^{2/3}}{2^{1/3}}x\right)= -i \frac{S(x)}{2} + \Oh\left( 1/k \right).
    \end{equation}
\end{theorem}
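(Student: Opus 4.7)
The plan is to carry out the Deift--Zhou nonlinear steepest-descent method on the (rescaled) Riemann--Hilbert representation of $u_{\HM}^{(\alpha)}$ that will be stated in Section \ref{OGRHP}, with $k = \alpha - 1/2$ playing the role of the large parameter. After rescaling the spectral variable so that the jump exponents appear as $\e^{\pm k \theta(z;x)}$ for a suitable phase $\theta$, the first substantive step is the introduction of a scalar \emph{$g$-function}: an analytic function $g(z;x)$, bounded at infinity, whose derivative is the appropriate branch of an algebraic function cut on a single arc $\gamma(x)$ joining two conjugate branch points. The cubic \eqref{cubicEqu} enters precisely here; $S(x)$ will essentially be the value of this branch at $z=\infty$ (after a rescaling), so the asymptotic answer is already encoded in the choice of $g$.

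Given $g$, I would conjugate the RHP solution by $\e^{-k(g(z) + \ell/2)\sigma_3}$ for an appropriate Lagrange constant $\ell$, chosen so that the jumps outside $\gamma(x)$ become triangular with exponentially decaying off-diagonal entries, while the jump on $\gamma(x)$ factorizes into a product of the $\boldsymbol{L}_k$, $\boldsymbol{T}$, and $\boldsymbol{U}_k$ building blocks of \eqref{notationDefs}. The standard lens-opening transformation then moves the $\boldsymbol{L}_k$ and $\boldsymbol{U}_k$ factors off $\gamma(x)$ onto contours on which they are exponentially small in $k$. What remains, outside small disks around the two endpoints of $\gamma(x)$, is a model RHP with only a constant anti-diagonal jump on $\gamma(x)$, which I would solve explicitly by a two-sheeted square-root parametrization of $\gamma(x)$, producing a global parametrix $\boldsymbol{P}^{\out}$ in terms of elementary functions. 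Inside small disks around the endpoints I would build the standard Airy parametrices, matching $\boldsymbol{P}^{\out}$ on the bounding circles to within $\Oh(1/k)$.

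The ratio of the transformed solution to the composite parametrix then solves a small-norm RHP with jumps $\Oh(1/k)$ on the matching circles and exponentially small jumps elsewhere, so it equals $\boldsymbol{I} + \Oh(1/k)$ at infinity. Extracting the $z^{-1}$ coefficient of the large-$z$ expansion of the original RHP solution and unwinding the chain of transformations yields \eqref{genus0Result}, with the leading term $-iS(x)/2$ coming from the appropriate moment of $g(z;x)$ at infinity, equivalently from the distinguished branch of the cubic \eqref{cubicEqu}.

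The main obstacle will be the signature (or $g$-function) analysis: one must verify that, after subtraction of an appropriate constant, $\Real(g(z;x))$ is positive on one side of $\gamma(x)$ and negative on the other, uniformly for $x$ in any compact subset of the pole-free region, and that the associated steepest-descent contours can be drawn consistently. This level-line analysis is, in fact, exactly what defines the pole-free region, and its boundary should correspond precisely to the two critical values $x = 3 \e^{\pm 2\pi i/3}$ at which the topology of the relevant level sets of $\Real(g)$ changes. Once the signature is under control, the remainder of the Deift--Zhou pipeline is routine, and the $\Oh(1/k)$ error propagates directly from the small-norm estimate.
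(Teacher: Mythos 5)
Your outline tracks the paper's proof closely: introduce a $g$-function associated with a single band $\Sigma$ whose endpoints are determined by the cubic \eqref{cubicEqu}, conjugate and open a lens, solve the remaining two-point RHP with a fourth-root $\beta$-function, install Airy parametrices at the endpoints, and close with a small-norm argument. Two details are wrong as stated, however. First, $g$ cannot be bounded at infinity: the normalization of Riemann--Hilbert Problem \ref{RHPMScaled} is $\boldsymbol{M}(z)z^{-k\sigma_3}\to\I$, so the conjugation $e^{-kg(z)\sigma_3}$ must absorb the factor $z^{k\sigma_3}$, which forces $g(z)=\log z+\Oh(1/z)$ (cf.\ \eqref{gfunNormg0}) and brings with it the logarithmic cut $L$ in addition to the band $\Sigma$. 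A bounded $g$ would leave the transformed problem with $z^{\pm k}$ growth at infinity and the small-norm machinery would not apply.

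Second, the leading term $-iS(x)/2$ does not come from the $1/z$-moment of $g$: tracing the extraction formula \eqref{goodRiddance} through \eqref{gfunLargezExpan}, the $kg_{-1}$ contributions to $[\boldsymbol{M}_{-1}]_{22}$ and to $[\boldsymbol{M}_{-2}]_{12}/[\boldsymbol{M}_{-1}]_{12}$ cancel exactly, and the answer instead comes from the outer-parametrix moments. From \eqref{P22Genus0} and \eqref{P12Genus0} one finds $[\dotbold{P}_{-1}]_{22}-[\dotbold{P}_{-2}]_{12}/[\dotbold{P}_{-1}]_{12}=-(a+b)/2=-S/2$, which is where the cubic root $S(x)$ actually surfaces. (A smaller slip: the band endpoints satisfy $a=-\overline{b}$ only for $x\in\R$; for generic $x$ in the pole-free region they are not conjugate.) With these corrections your plan reproduces the paper's argument and gives the claimed $\Oh(1/k)$ error.
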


\begin{theorem} \label{genus1 theorem}
    Let $x$ be in the pole region of the $x$-plane, i.e.\ the complement of the closure of the pole-free region. Suppose $A,B,C,D \in \Ci$ satisfy the moment conditions \eqref{g1MomentConditions} and the Boutroux conditions \eqref{Boutroux}. For large integers $k$, if $x \in \mathscr{S}_k$ (see \eqref{SwissCheese}), then
    \begin{equation}\label{StatementThatDoesNotSayMuch}
        -(2k)^{-1/3}u^{(\alpha)}_{\HM}\left( - \frac{k^{2/3}}{2^{1/3}}x\right) = i\left(  \frac{-[\dotbold{Q}_{-2}]_{12}+[\dotbold{Q}_{-1}]_{22}[\dotbold{Q}_{-1}]_{12}+ \Oh(1/k)}{ [\dotbold{Q}_{-1}]_{12} + \Oh(1/k)} \right),
    \end{equation}
    where the quantities $[\dotbold{Q}_{-2}]_{12}$, $[\dotbold{Q}_{-1}]_{12}$, and $[\dotbold{Q}_{-1}]_{22}$ are defined in \eqref{dotQ 12-entry -1}, \eqref{dotQ 12-entry -2}, and \eqref{dotQ 22-entry -1}. Further, if $[\dotbold{Q}_{-1}]_{12} \ne 0$, then the left-hand side of \eqref{StatementThatDoesNotSayMuch} can be expressed in terms of Riemann $\Theta$-functions. That is,
    \begin{equation}\label{genusOneResult}
        -(2k)^{-1/3}u^{(\alpha)}_{\HM}\left( - \frac{k^{2/3}}{2^{1/3}}x\right) = i \left(\A_{-1} \cdot (\LD_{22}-\LD_{12})- \frac{B^2+D^2-A^2-C^2}{2(B+D-A-C)} \right) + \Oh(1/k),
    \end{equation} where
    \begin{equation}\label{L22}
        \LD_{22} = \frac{\Theta'(\A(\infty)+\A(Q)+\K+kF_1U;\B)}{\Theta(\A(\infty)+\A(Q)+\K+kF_1U;\B)} - \frac{\Theta'(\A(\infty)+\A(Q)+\K;\B)}{\Theta(\A(\infty)+\A(Q)+\K;\B)},
    \end{equation}
    \begin{equation}\label{L12}
        \LD_{12} =  \frac{\Theta'(\A(\infty)-\A(Q)-\K+kF_1U;\B)}{\Theta(\A(\infty)-\A(Q)-\K+kF_1U;\B)} - \frac{\Theta'(\A(\infty)-\A(Q)-\K;\B)}{\Theta(\A(\infty)-\A(Q)-\K;\B)},
    \end{equation}
    and $\Theta(\cdot)$, $\B$, $\A_{-1}$, $\A(\infty)$, $\A(\cdot)$, $Q$, $\K$, $F_1$, and $U$ are defined in \eqref{defOfRiemannThetaFunction}, \eqref{defOfScriptedB}, \eqref{LargeZAbelMap}, \eqref{AbelMapDef}, \eqref{zero of fDia}, \eqref{RiemannConstant}, \eqref{F1Def}, and \eqref{Udef}.
    
\end{theorem}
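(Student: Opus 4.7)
The plan is to extend the Deift-Zhou nonlinear steepest-descent analysis that underlies Theorem \ref{genus0 theorem} from the genus-zero setting to a genus-one one. The first step is to pin down how $u_{\HM}^{(\alpha)}$ is read off the transformed RHP solution. Since the Painlev\'e transcendent is encoded in the first two Laurent coefficients at $z=\infty$ of the original Riemann-Hilbert matrix from Section \ref{OGRHP}, unwinding the chain of transformations backwards produces a formula of exactly the shape \eqref{StatementThatDoesNotSayMuch}: the denominator is $[\dotbold{Q}_{-1}]_{12}$, whose zeros mark the poles of $u_{\HM}^{(\alpha)}$, and the numerator is a bilinear combination of entries of $\dotbold{Q}_{-1}$ and $\dotbold{Q}_{-2}$. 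This reduces matters to computing the two-term large-$z$ expansion of $\dotbold{Q}$ uniformly in $x$.

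The core of the proof is the $g$-function construction. In the pole region, four distinct branch points $A,B,C,D$ appear in place of the two branch points implicit in the cubic $S^{3}+xS-2\ii=0$ of Theorem \ref{genus0 theorem}. Writing $g'(z)=R(z)$ with $R(z)^{2}=(z-A)(z-B)(z-C)(z-D)$, I would impose the moment conditions \eqref{g1MomentConditions} so that conjugation by $\e^{kg\sigma_{3}}$ correctly normalizes the behaviour at infinity, and the Boutroux conditions \eqref{Boutroux} so that $\Real g$ is single-valued on both homology cycles of the resulting elliptic curve. The Boutroux conditions are exactly what guarantees that, after the $g$-transform and the standard lens-opening based on the triangular factorizations $\boldsymbol{L}_{k}(\cdot)$ and $\boldsymbol{U}_{k}(\cdot)$ from \eqref{notationDefs}, every lens jump is $\boldsymbol{I}+\Oh(\e^{-ck})$.

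After opening lenses, the residual large-$k$ problem has $z$-independent jumps on the two bands from $A$ to $B$ and from $C$ to $D$ (and across the $B$-cycle), and exponentially small jumps elsewhere. The outer parametrix is a scalar-times-matrix construction on the genus-one Riemann surface $y^{2}=(z-A)(z-B)(z-C)(z-D)$: a Szeg\H{o}-type scalar prefactor multiplied by ratios of Riemann $\Theta$-functions evaluated at shifted Abel-Jacobi images, with the shift $kF_{1}U$ encoding the Boutroux phase. Airy parametrices dress up neighborhoods of $A,B,C,D$, and a small-norm argument on the error $\boldsymbol{E}=\dotbold{Q}\cdot(\text{parametrix})^{-1}$ yields $\Oh(1/k)$ control. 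Reading the first two Laurent coefficients of the outer parametrix at $z=\infty$, expressing them in terms of logarithmic derivatives of the theta quotients \eqref{L22} and \eqref{L12}, and substituting into the formula from the first step produces \eqref{genusOneResult}.

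The main obstacle is the coupled solvability and smooth $x$-dependence of the moment and Boutroux systems on $(A,B,C,D)$, together with the uniform invertibility of the theta-function model problem on the set $\mathscr{S}_{k}$ defined in \eqref{SwissCheese}. Its complement is, by design, a shrinking family of disks around the zeros of $\Theta(\A(\infty)-\A(Q)-\K+kF_{1}U;\B)$; at these points $[\dotbold{Q}_{-1}]_{12}\to 0$ and, via \eqref{StatementThatDoesNotSayMuch}, $u_{\HM}^{(\alpha)}$ develops a pole, so the Swiss-cheese structure of $\mathscr{S}_{k}$ reflects the theta divisor rather than any defect of the method.
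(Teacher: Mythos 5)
Your proposal follows the same overall route as the paper: a genus-one $g$-function built on the quartic Riemann surface with branch points $A,B,C,D$; the moment conditions to fix the normalization at infinity; the Boutroux conditions to ensure that the constants appearing in the band/gap jumps of the phase are real (so all triangular lens jumps decay); a theta-function outer parametrix corrected by a Szeg\H{o}-type prefactor, with the $kF_{1}U$ shift carrying the Boutroux phase; Airy inner parametrices at the four endpoints; a small-norm argument for the error; and extraction of the first two moments of $\dotbold{Q}$. This matches the paper, including the geometric explanation of $\mathscr{S}_{k}$ as the complement of shrinking disks around the theta divisor where the model problem degenerates.

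One substantive misstep: you wrote ``$g'(z)=R(z)$'' with $R(z)^{2}=(z-A)(z-B)(z-C)(z-D)$. That cannot be the definition of the $g$-function's derivative: since $R(z)=z^{2}+\Oh(z)$ at infinity, this would give $g(z)\sim z^{3}/3$, and conjugating by $e^{kg\sigma_{3}}$ would destroy, not produce, the normalization $\boldsymbol{P}(z)\to\I$. No choice of moment conditions can repair this. The paper instead sets
\begin{equation*}
G'(z)=\frac{R(z)}{2\pi i}\int_{\Sigma_1\cup\Sigma_2}\frac{i\theta'(w)}{R_{+}(w)(w-z)}\,\dd w,
\end{equation*}
and the moment conditions \eqref{g1MomentConditions} force this Cauchy transform to satisfy $G'(z)=1/z+\Oh(1/z^{2})$, precisely matching the $z^{-k\sigma_3}$ normalization of Riemann-Hilbert Problem~\ref{RHPMScaled}. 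Equivalently, $2H'(z)$ (not $G'$) is proportional to $R(z)$ near the band endpoints, which is what powers the conformal maps $\tau_{p}$. A further omitted intermediate step worth naming is the removal of the $\Gamma$-jump from the outer model problem via the scalar function $F(z)$ before the theta-function ansatz is made; without this, the outer parametrix does not reduce to the off-diagonal constant-jump form that the Baker--Akhiezer construction solves. These are technical rather than conceptual gaps, and the rest of your plan aligns with the paper's proof.
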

Here, we compare the results between the asymptotic formulae derived via the steepest-decent method and the numerical methods discussed in Section \ref{Numerics}. In Figure \ref{realSlices} we take a slice on the real $x$-line. Since the real line is contained in the pole-free region, the asymptotic formula does not depend on $k$. So, as $k \to \infty$, $u^{(k+\frac{1}{2})}_{\HM}$ converges  (under the appropriate scaling) to the right-hand side of \eqref{genus0Result}. Notice, we can observe this convergence even with small values of $k$. In Figure \ref{slicePlots} we take a horizontal slice (namely the slice $\Imag(x)=-9i$) through the pole region of the $x$-plane. For $x$ not in a neighborhood of a pole, we again note an excellent match between the asymptotics and numerics even for small $k$.
\begin{figure}[h]
    \centering
    \begin{tabular}{cc}
        \scalebox{.585}{\includegraphics{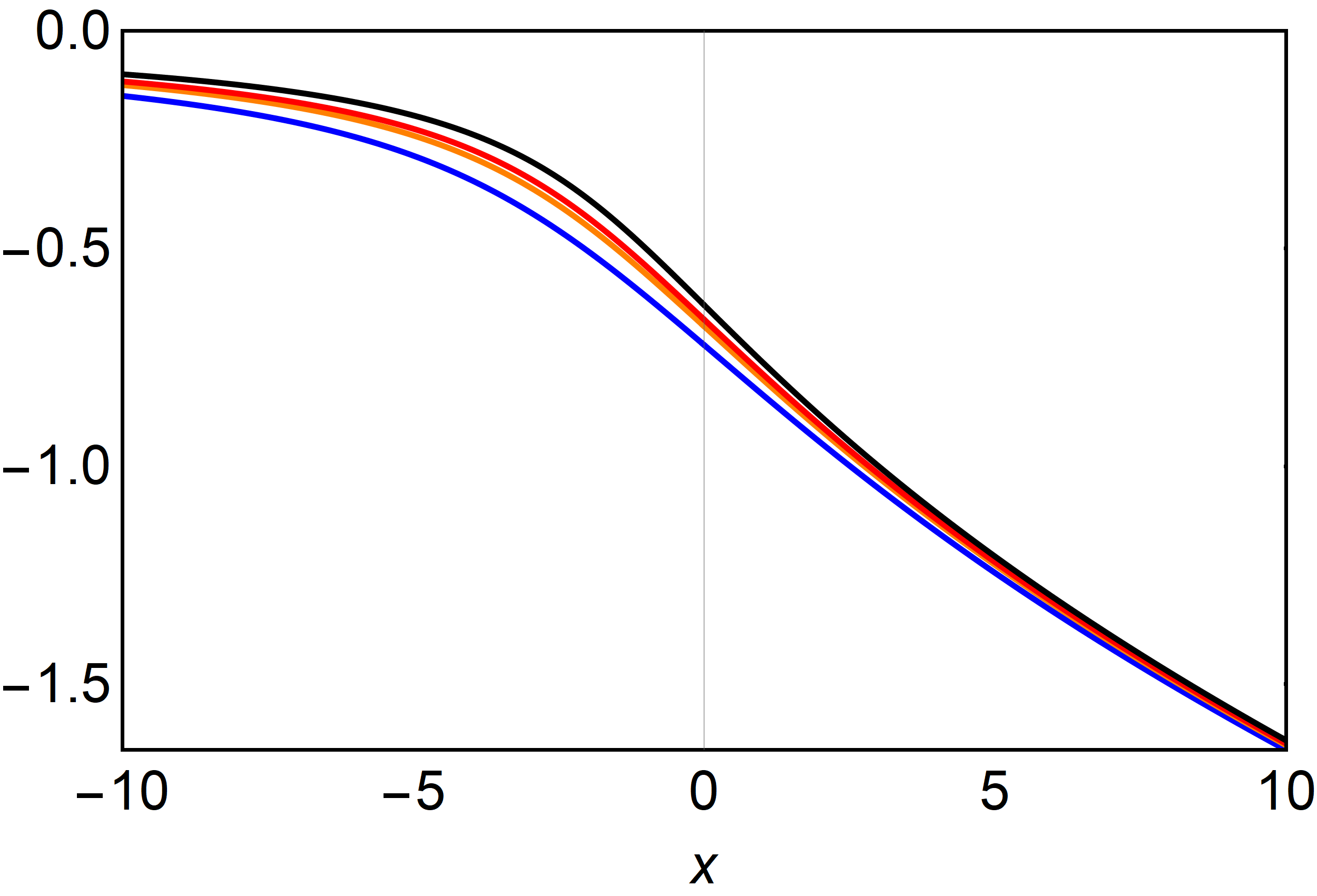}} & \scalebox{.585}{\includegraphics{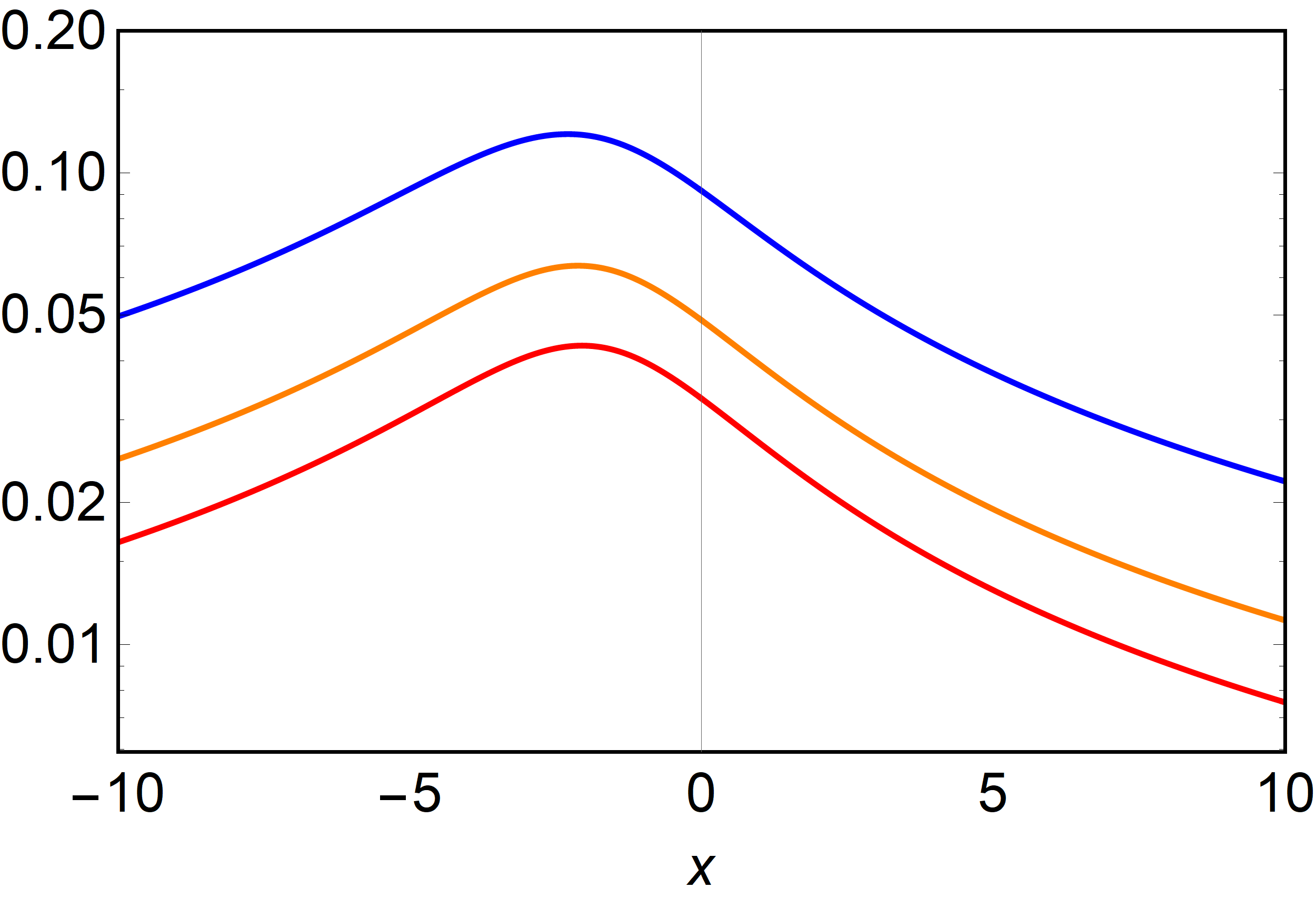}}
    \end{tabular}
    \caption{The left panel is a plot of the left-hand side of \eqref{genus0Result} evaluated on the real line for $k=1$ (blue), $2$ (orange), $3$ (red) compared to the black curve that is the right-hand side of \eqref{genus0Result}. The right panel is a logarithmic plot of the errors.} \label{realSlices}
\end{figure}

\begin{figure}
    \centering
    \begin{tabular}{ccc}
        $|\cdot|$ & $\Real(\cdot)$ & $\Imag(\cdot)$\\
        \scalebox{.39}{\includegraphics{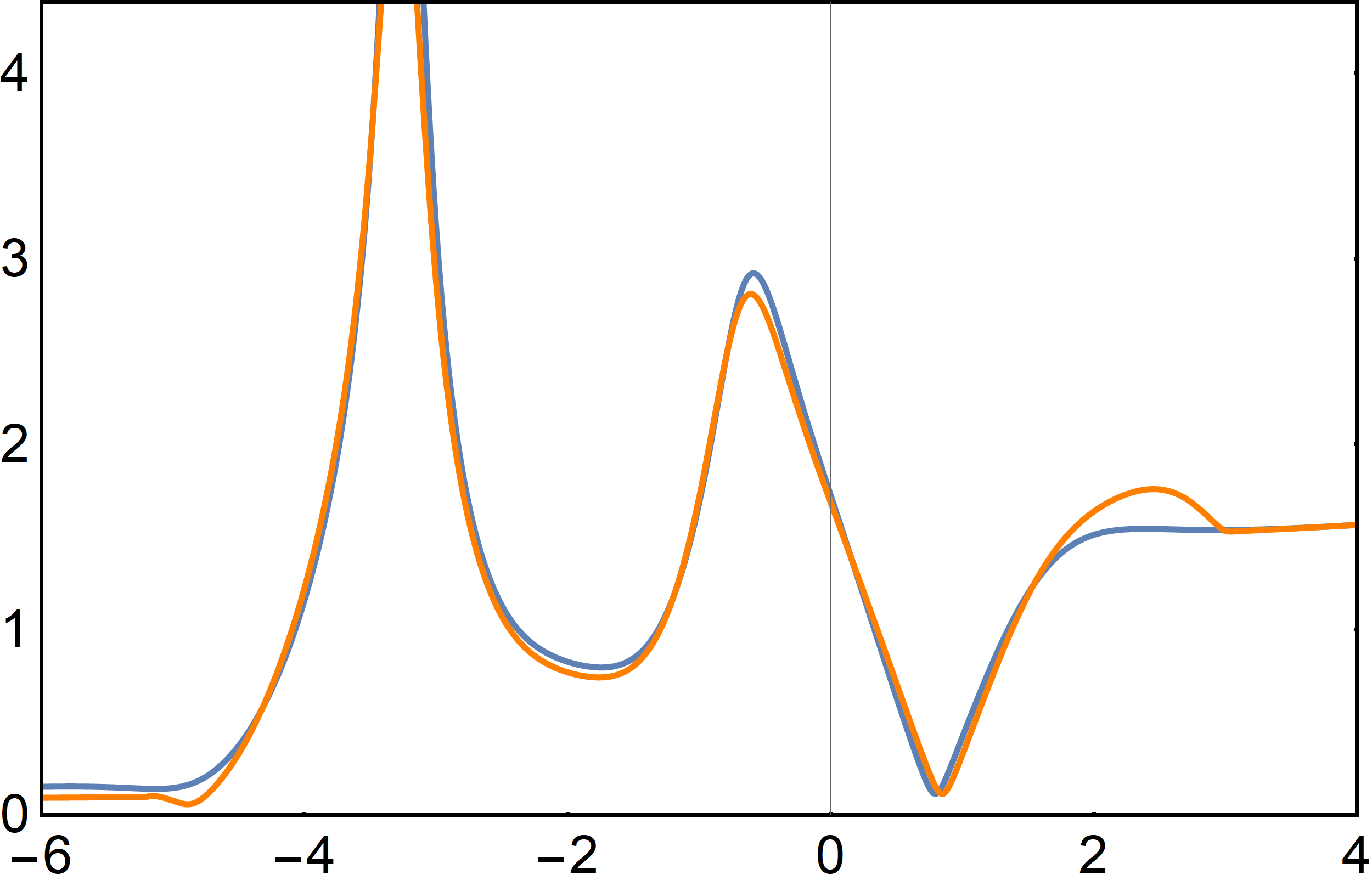}} & \scalebox{.39}{\includegraphics{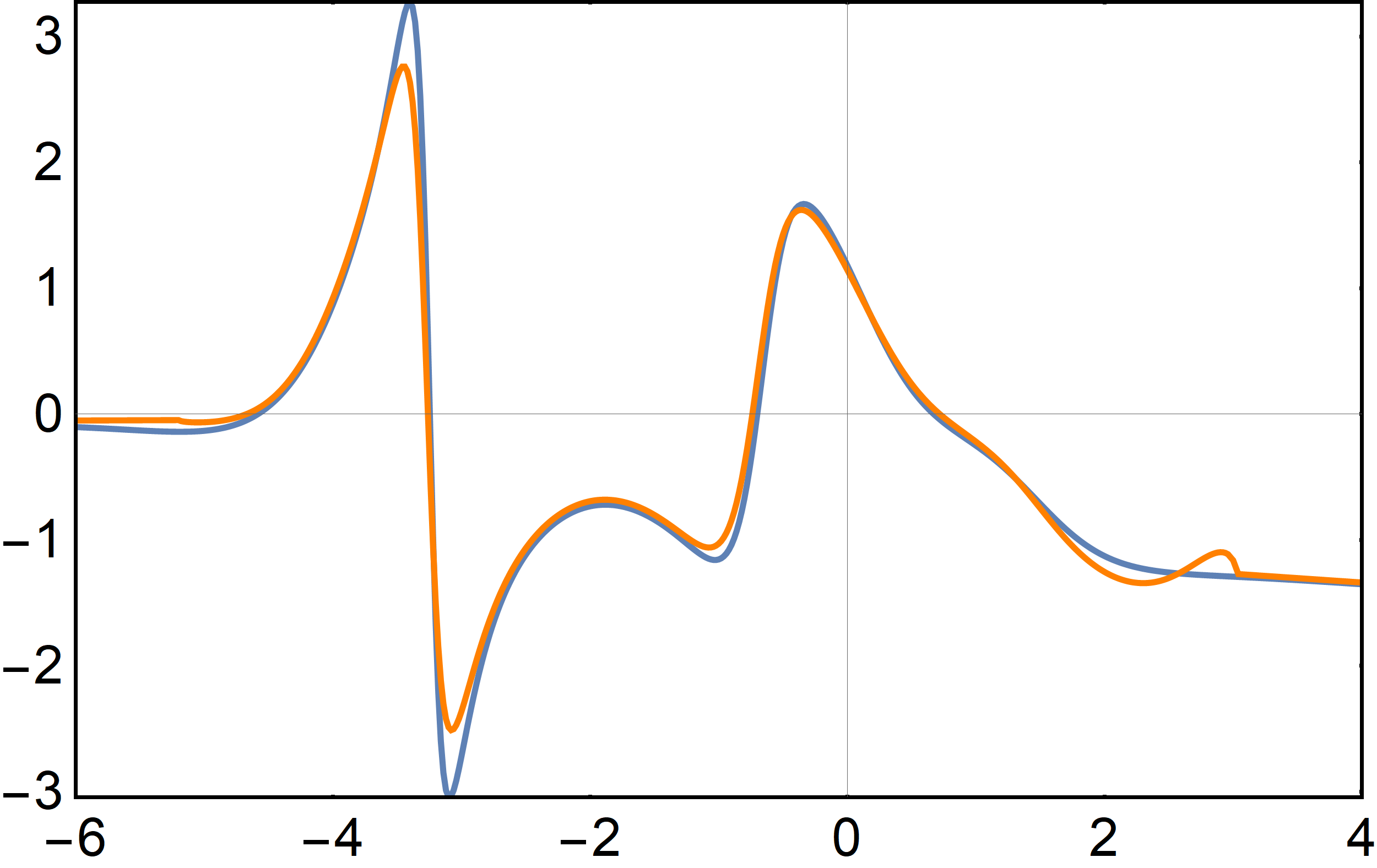}} & \scalebox{.39}{\includegraphics{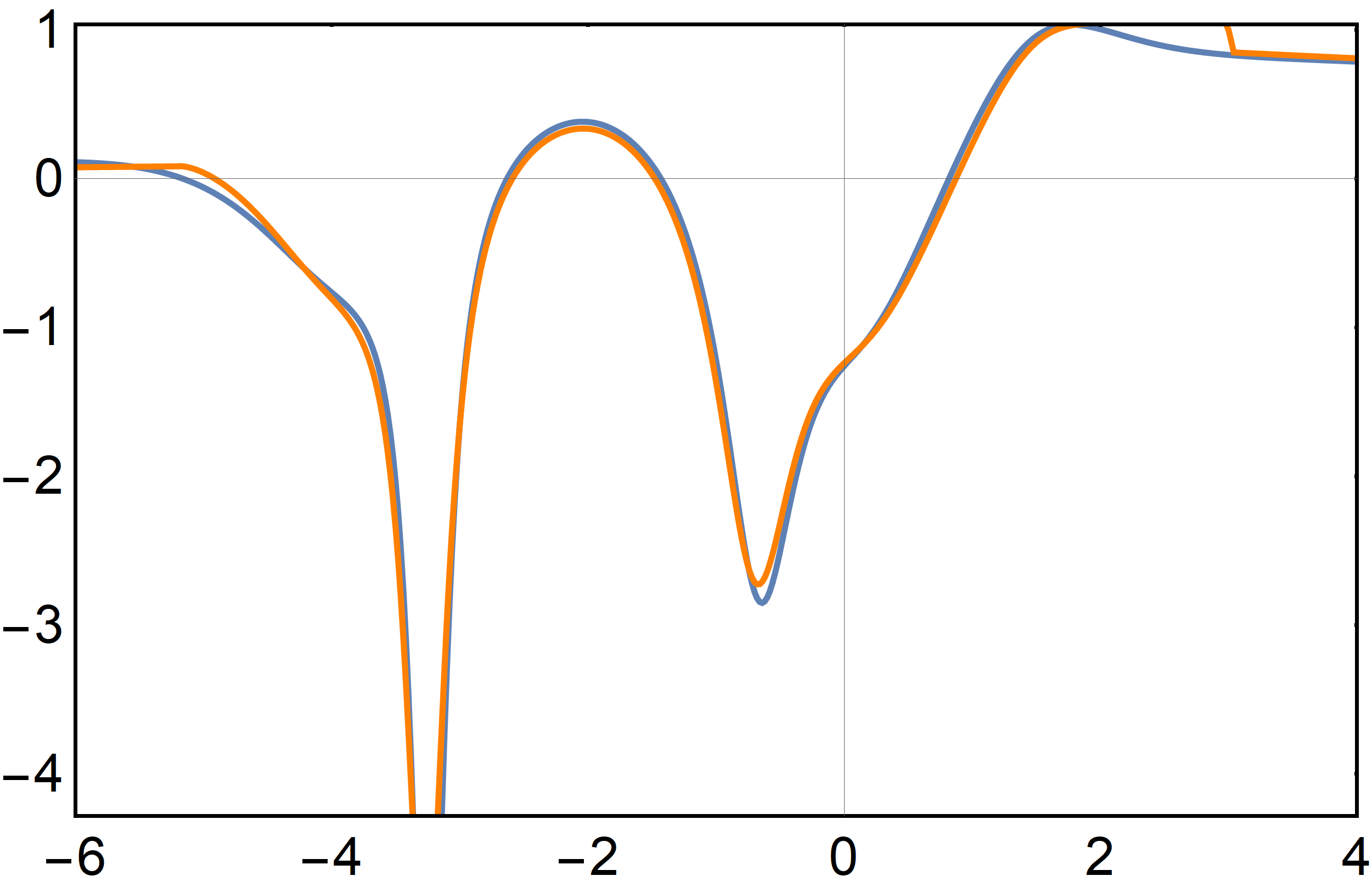}}\\
        \scalebox{.39}{\includegraphics{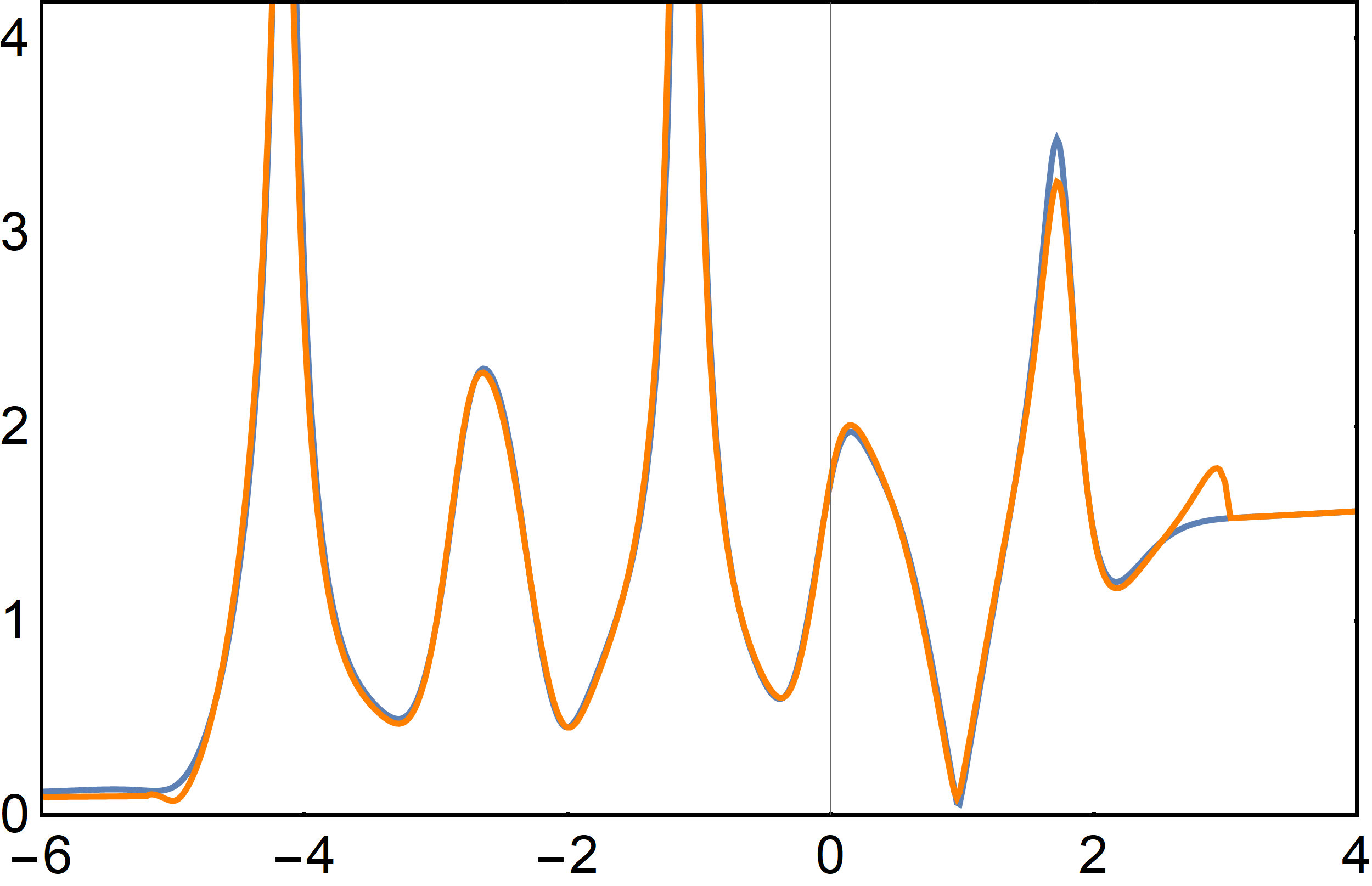}} & \scalebox{.39}{\includegraphics{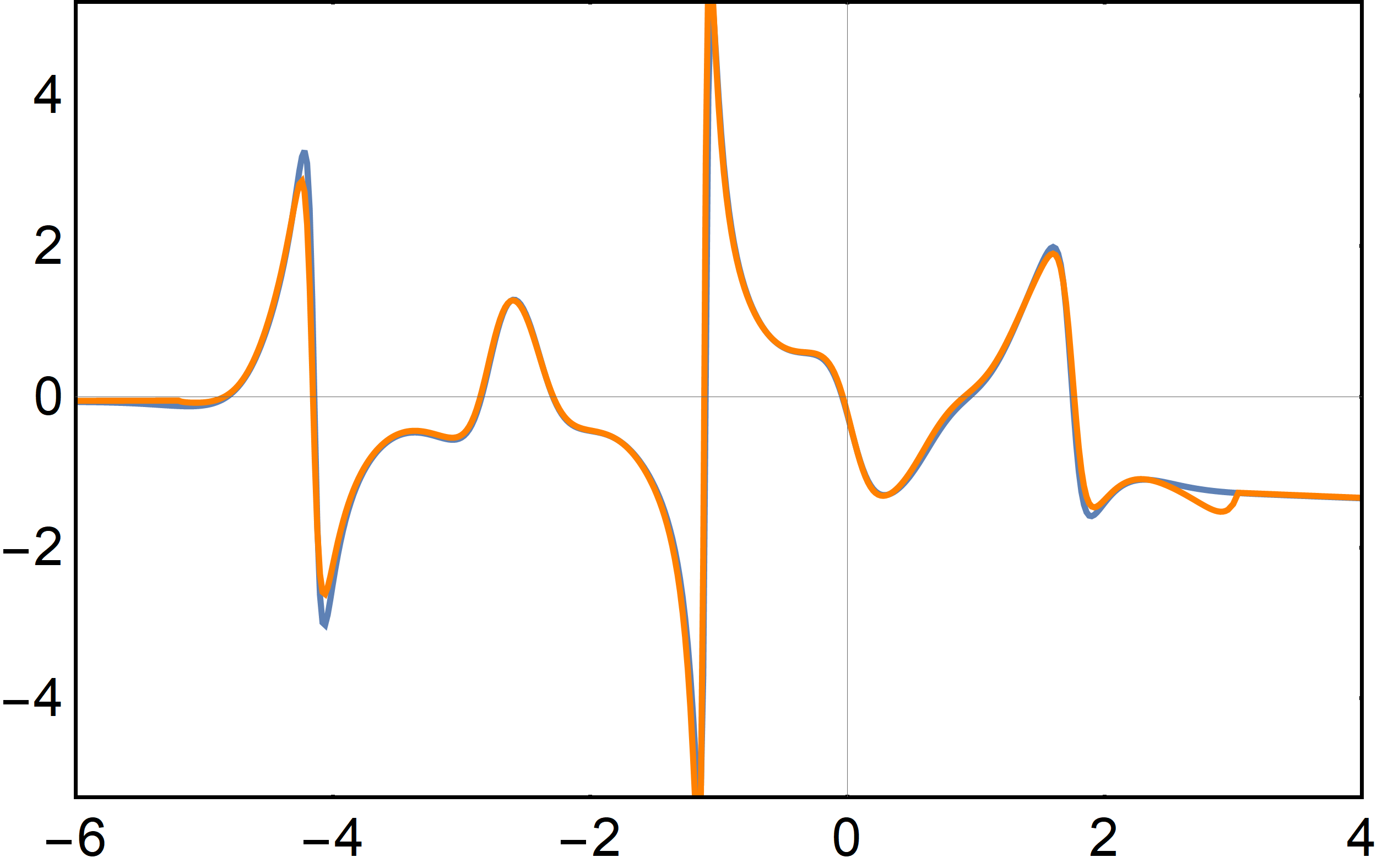}} & \scalebox{.39}{\includegraphics{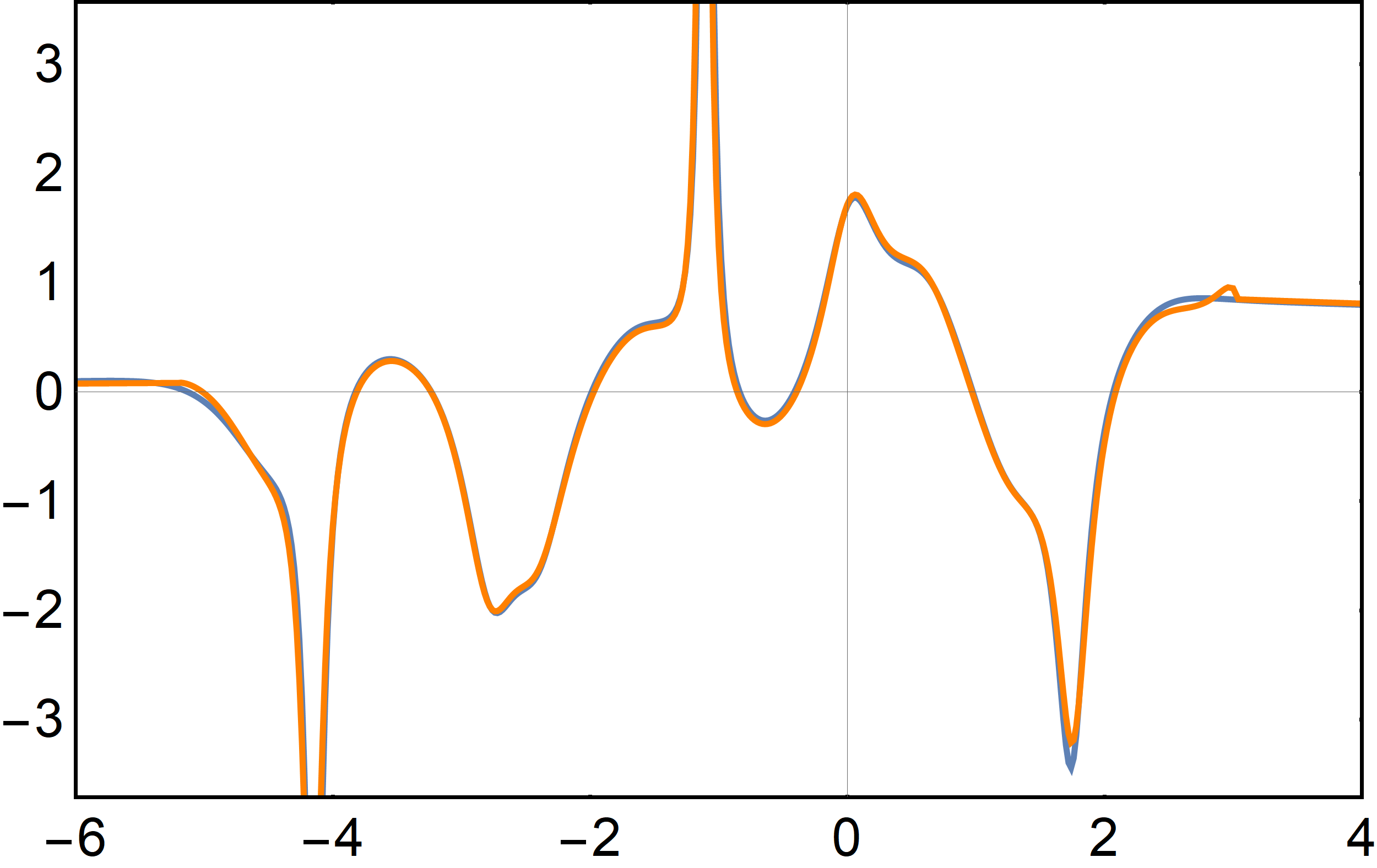}}\\
        \scalebox{.39}{\includegraphics{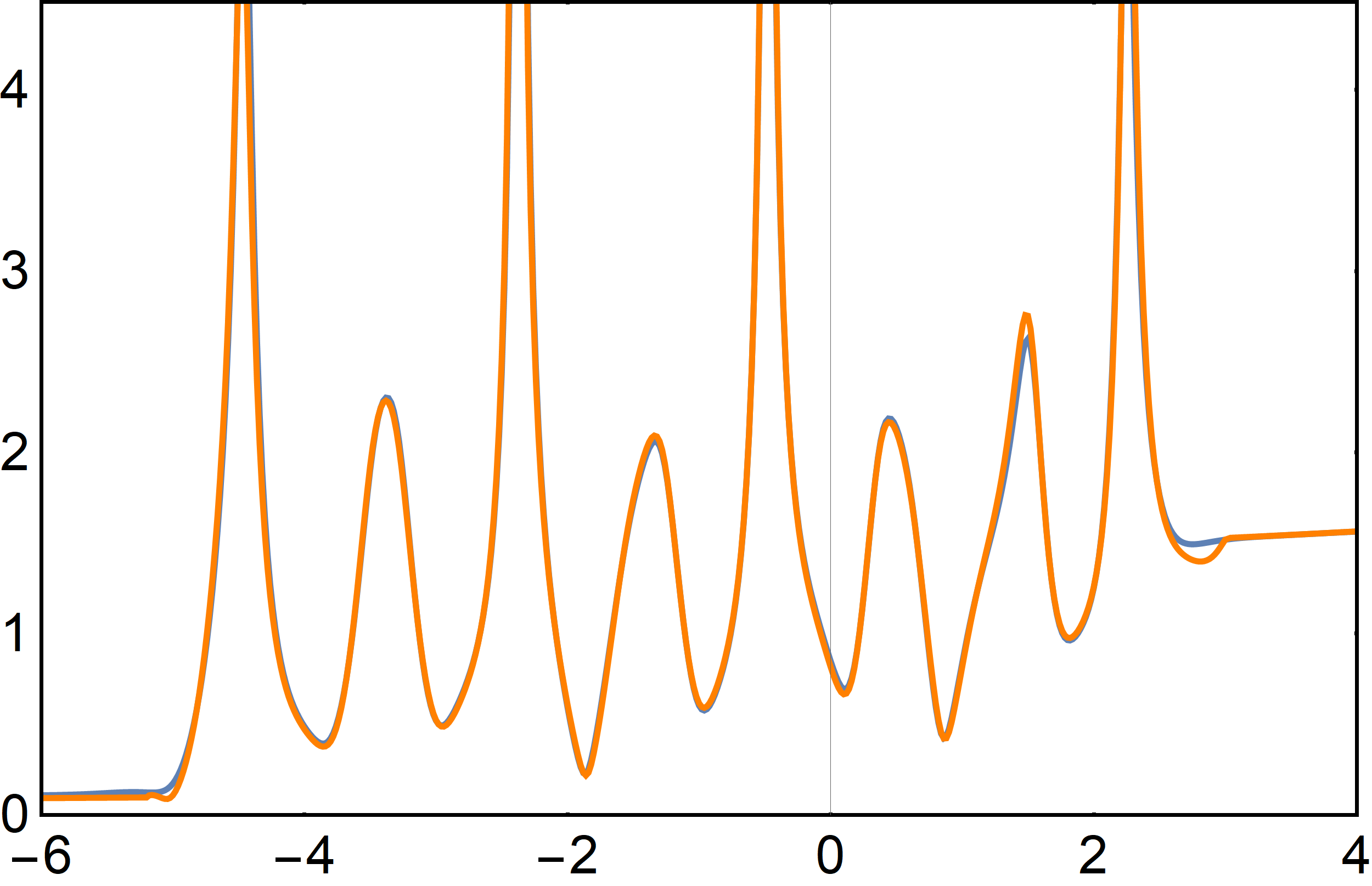}} & \scalebox{.39}{\includegraphics{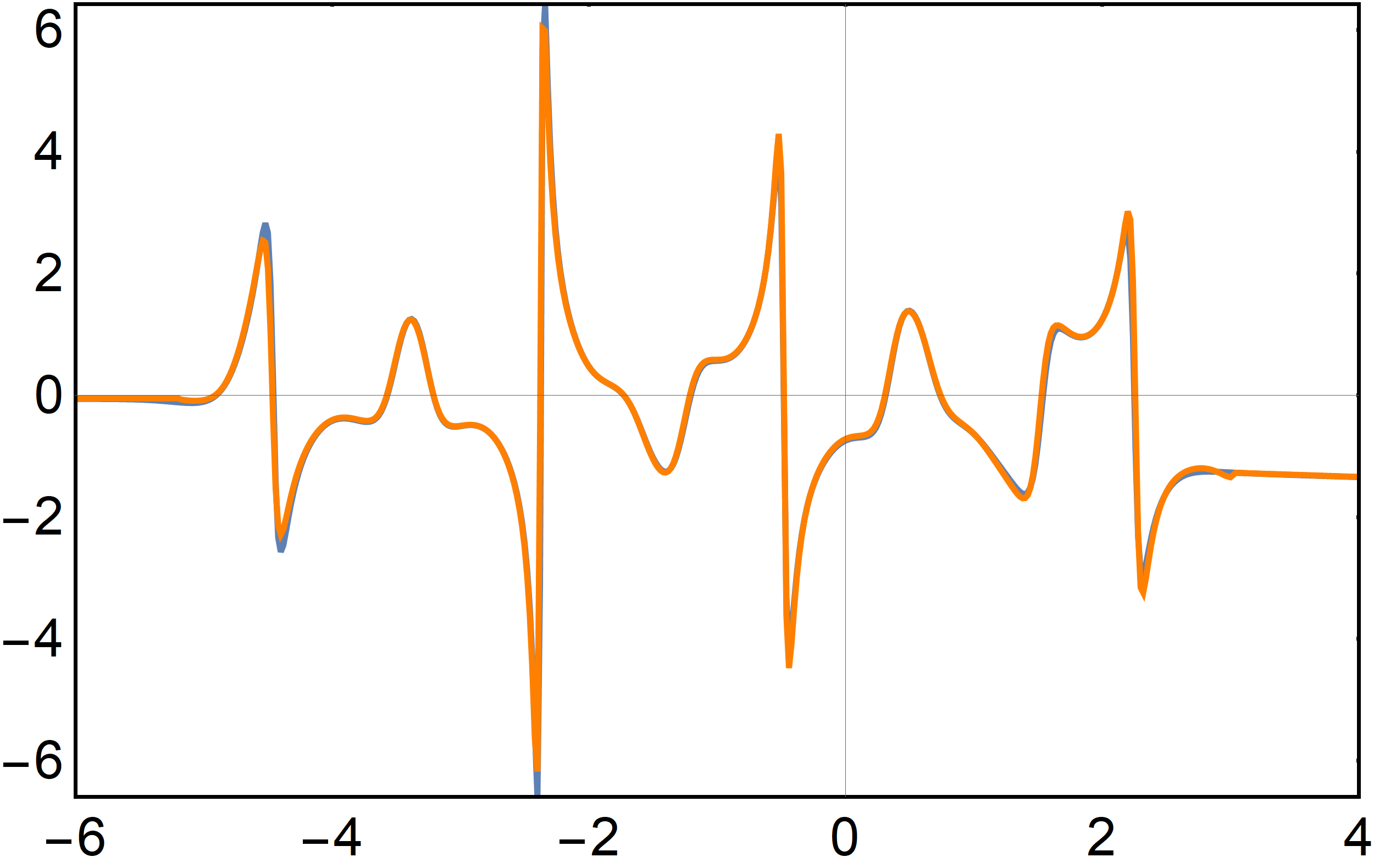}} & \scalebox{.39}{\includegraphics{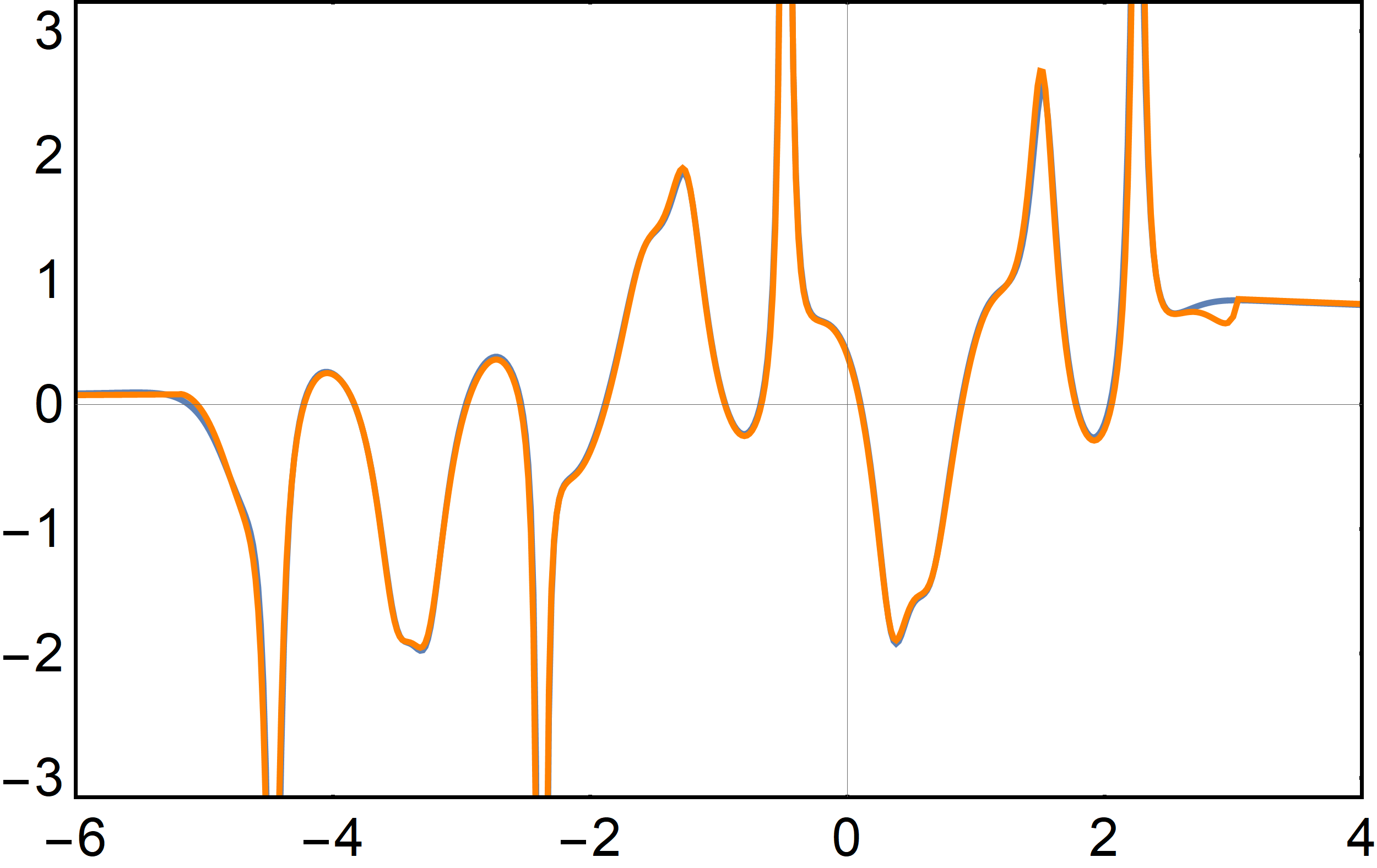}}
    \end{tabular}
    \caption{Plots of the left-hand side of \eqref{StatementThatDoesNotSayMuch} evaluated on the slice $\Imag(x)=-9$ in the $x$-plane for $k=1$ (top), $2$ (middle), $3$ (bottom). The orange curves show the asymptotic formulae described in Theorems \ref{genus0 theorem} and \ref{genus1 theorem}, while the blue curves are the result of the numerical methods discussed in Section \ref{Numerics}.} \label{slicePlots}
\end{figure}

\begin{figure}[]
    \centering
    \begin{tabular}{ccc}
        Asymptotic Formula & & Numerical Plot \\
        \scalebox{.49}{\includegraphics{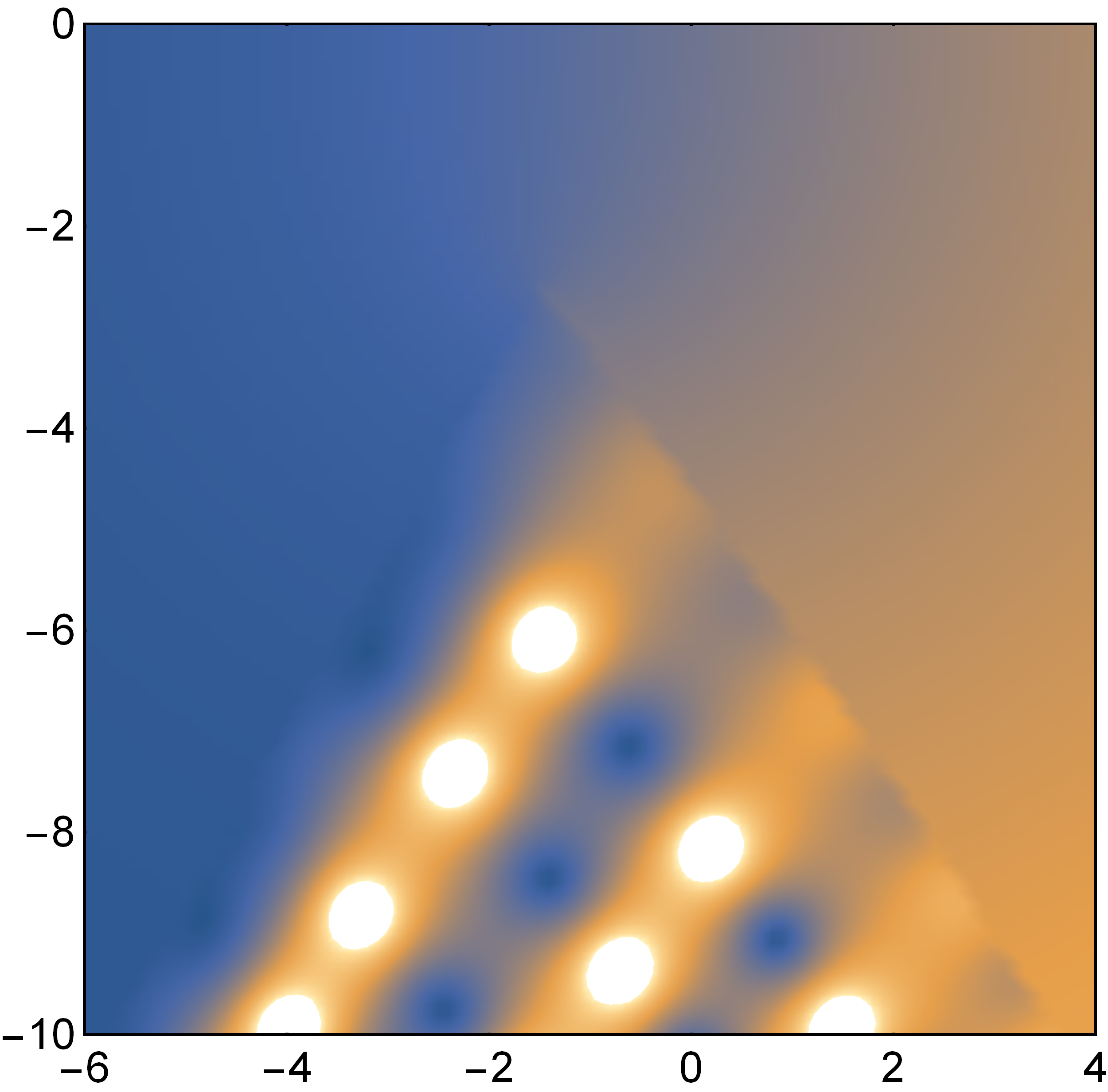}} & & \scalebox{.49}{\includegraphics{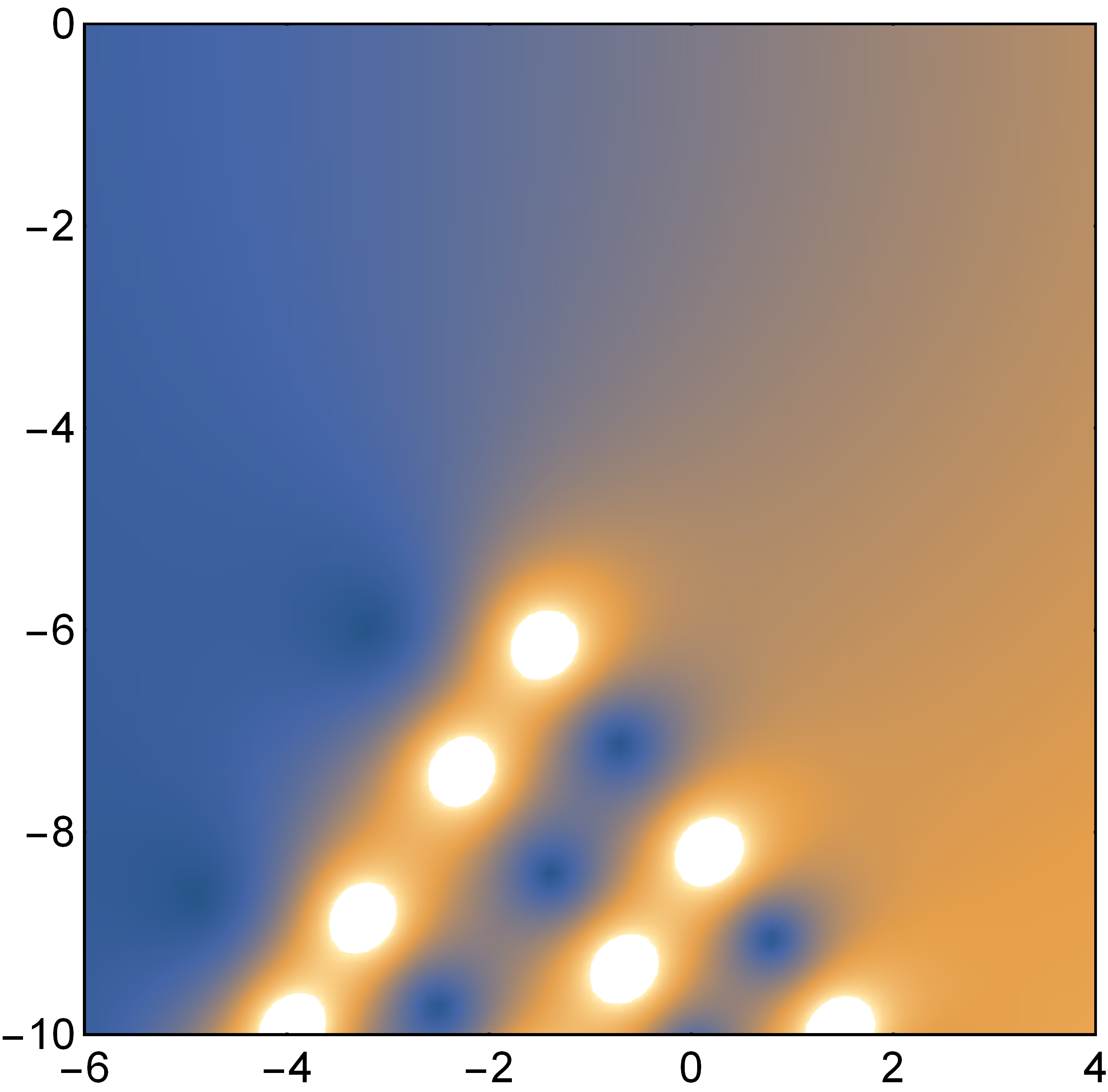}} \\
        \scalebox{.49}{\includegraphics{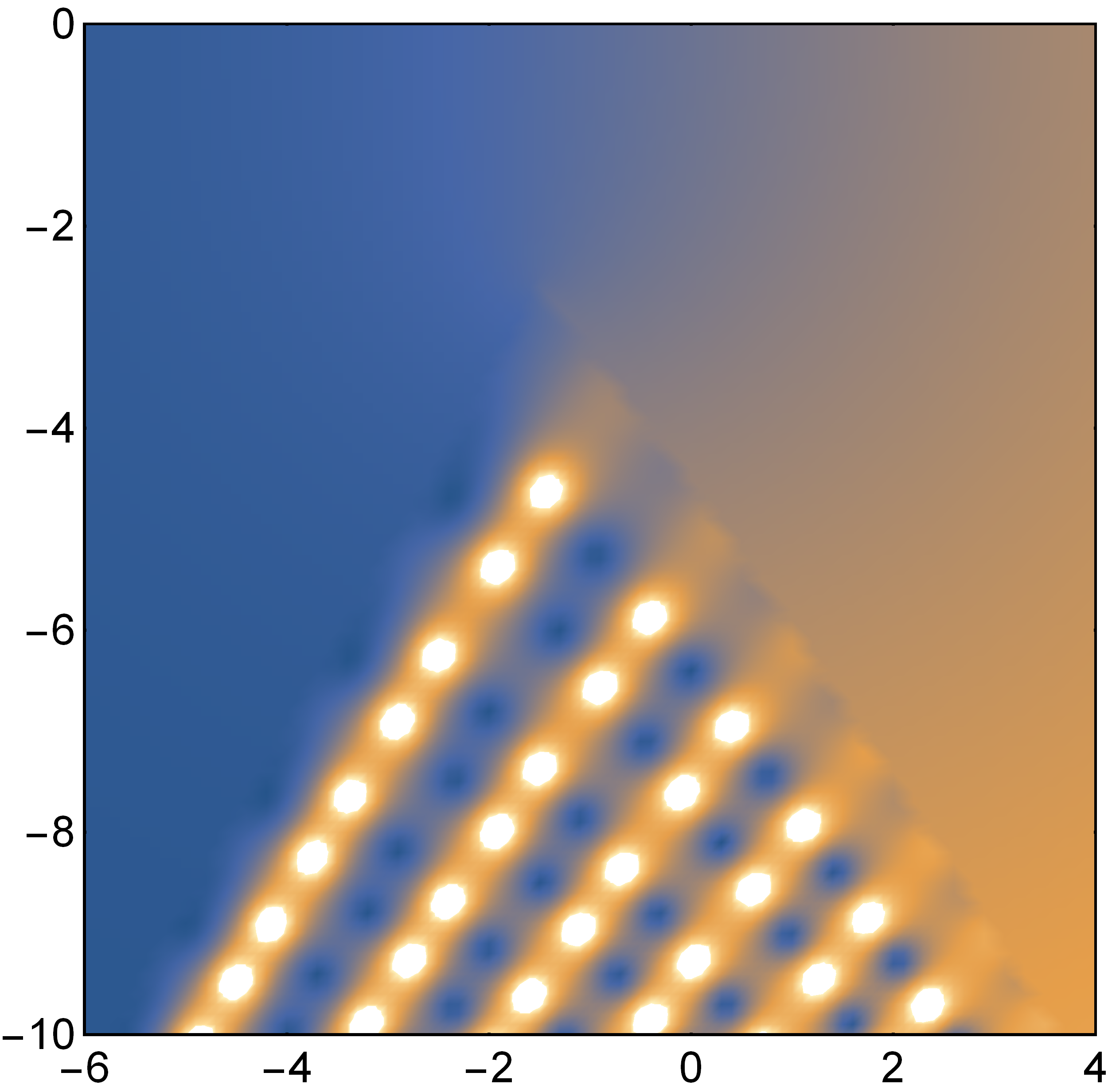}} & & \scalebox{.49}{\includegraphics{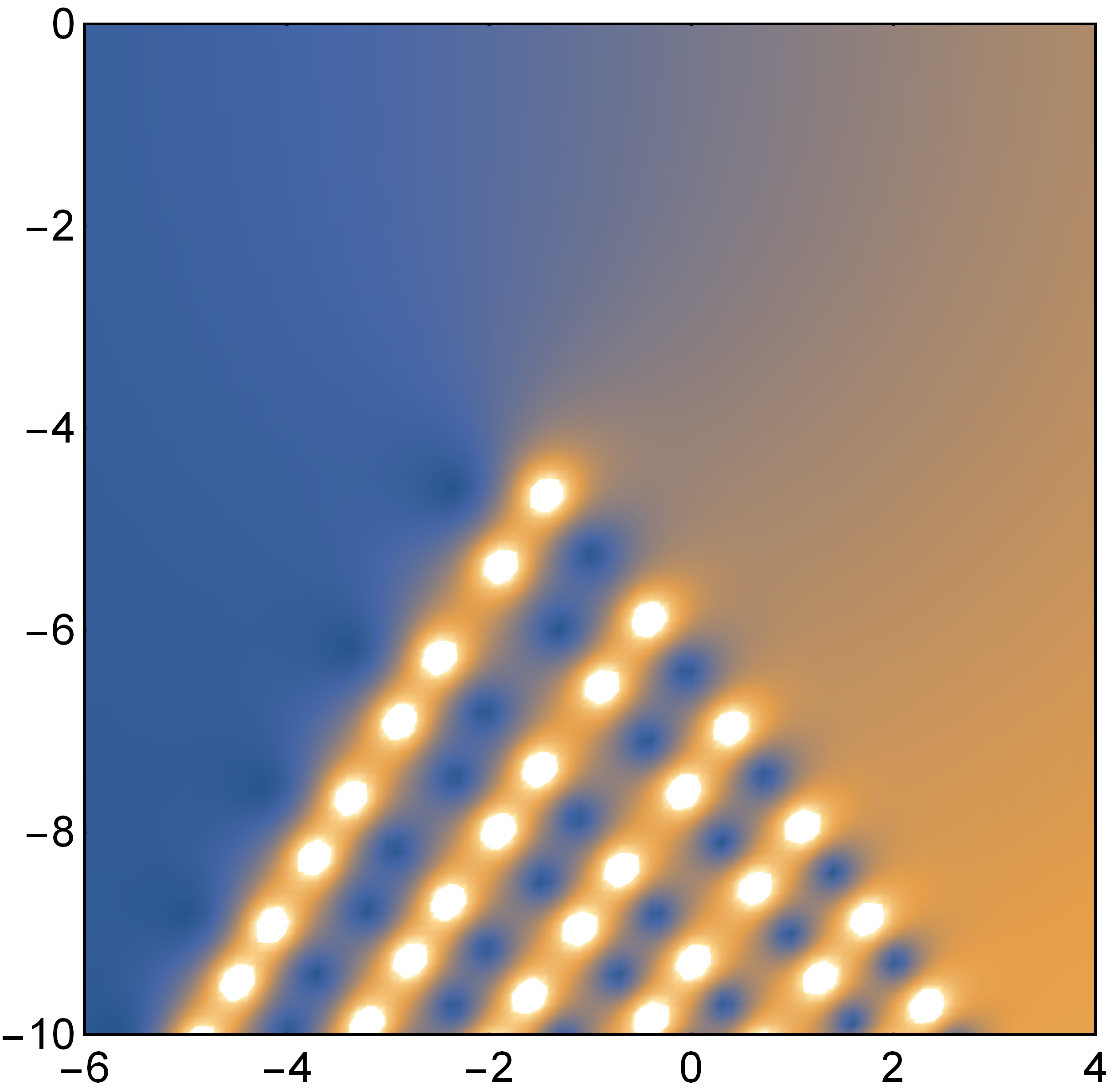}} \\
        \scalebox{.49}{\includegraphics{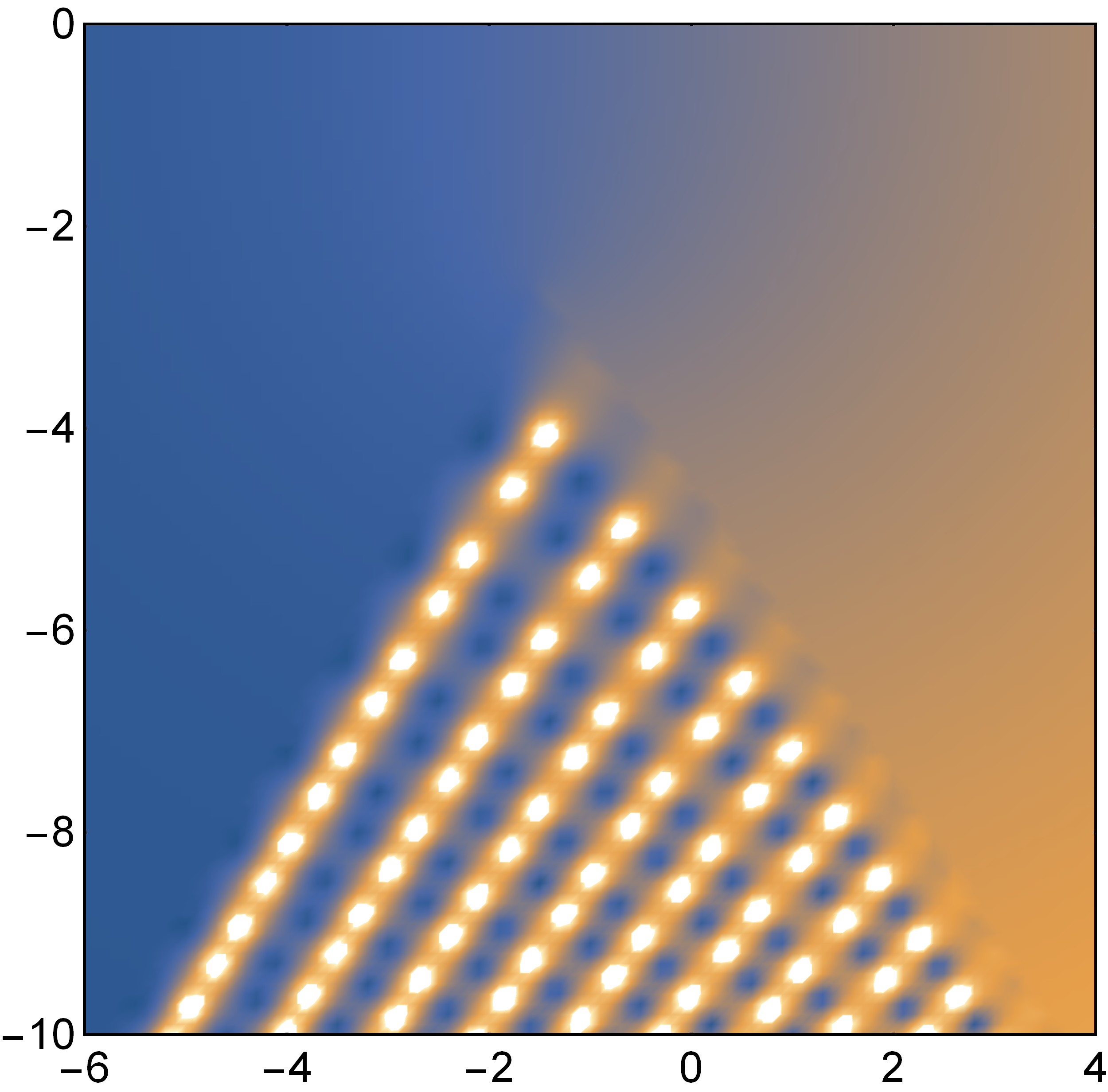}} & & \scalebox{.49}{\includegraphics{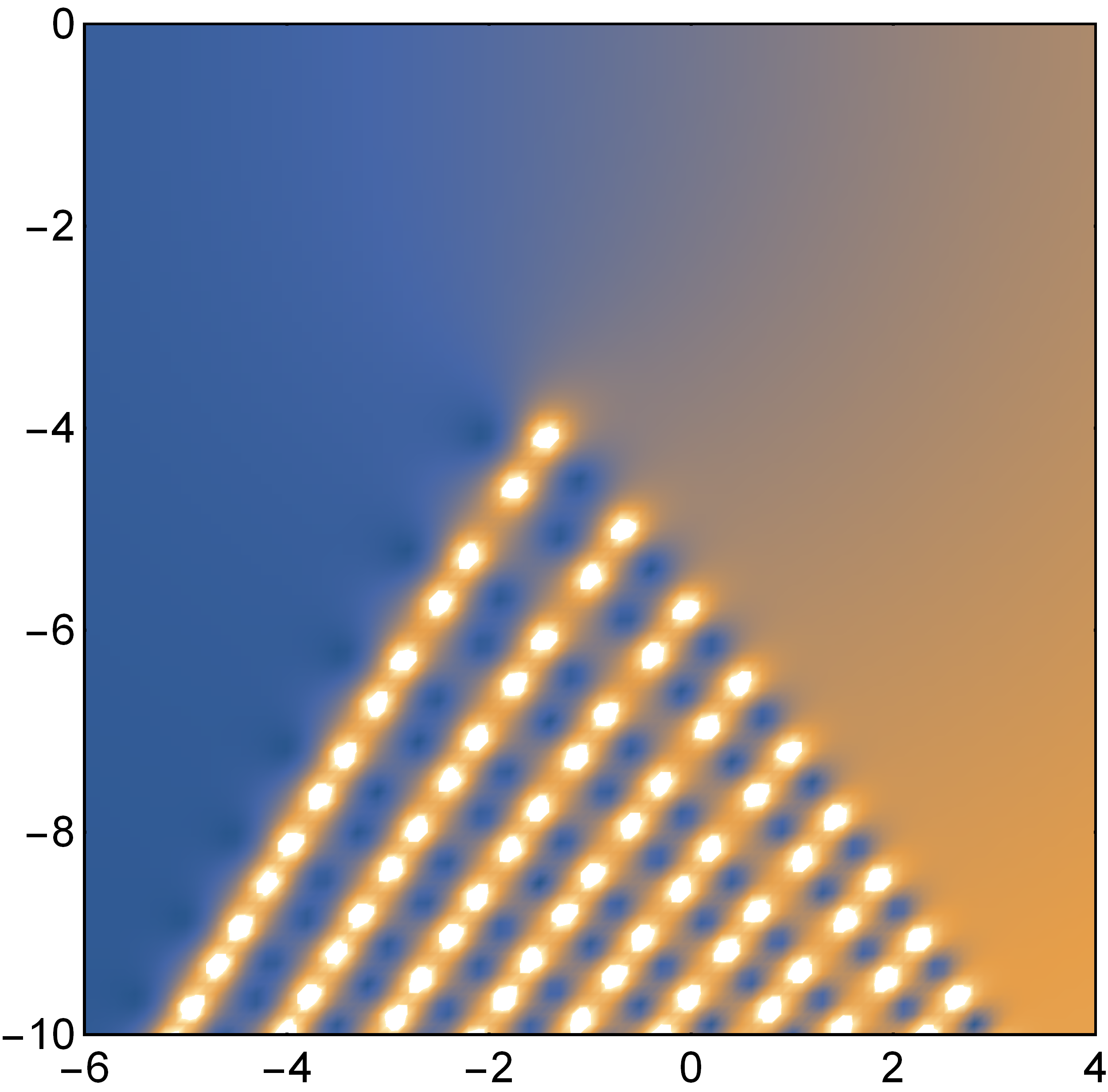}} 
    \end{tabular}
    \caption{Density plots in the $x$-plane of the scaled generalized Hastings-McLeod functions for $k=1,2,3$. The left column uses the asymptotic formulae described in Theorems \ref{genus0 theorem} and \ref{genus1 theorem}, while the right column uses the numerical methods discussed in Section \ref{Numerics}.}
\end{figure}

\section{The Initial Riemann-Hilbert Problem} \label{OGRHP}

The Painlev\'e-II equation \eqref{PII} is equivalent to 
\begin{equation}\label{NoncanonPII}
    \frac{1}{2} p''(s) = p(s)^3 - sp(s) + \frac{1}{2} + k.
\end{equation}
Indeed, since  $u_\HM^{(\alpha)}(y)$ satisfies \eqref{PII} it follows
\begin{equation}\label{AnnoyingScalingFormula}
    p_k(s) := -2^{1/3} u_\HM^{(k+ 1/2)} \left( -2^{1/3} s \right)
\end{equation}
satisfies \eqref{NoncanonPII} (recall, $k=\alpha+1/2$).  

In 2018, Buckingham and Liechty discovered $p_k(s)$ can be extracted  from the following $2 \times 2$ Riemann-Hilbert problem \cite{Buckingham2017TheKP}.

\begin{RHP}\label{RHPZ}
    For each $k \in \N$ and $s \in \Ci$, find a $2 \times 2$ matrix-valued function $\boldsymbol{Z}\left( \zeta ; s ,k \right) \equiv \boldsymbol{Z}(\zeta)$ so that
    \begin{enumerate}
        \item \textbf{Analyticity}: The matrix-valued function $\boldsymbol{Z}(\zeta)$ is analytic in $\zeta$ except on the four rays $\arg(\zeta) \in \left\{\pm\frac{\pi}{6}, \pm \frac{5\pi}{6}\right\}$ and is H\"older continuous up to the boundary in a neighborhood of $\zeta=0$.
        
        \item \textbf{Jump Condition:}  $\boldsymbol{Z}(\zeta)$ can be continuously extended to the boundary and the values taken by $\boldsymbol{Z}(\zeta)$ on the four rays of discontinuity are related by the jump condition $\boldsymbol{Z}_+(\zeta) =\boldsymbol{Z}_-(\zeta) \boldsymbol{V}^{(\boldsymbol{Z})}(\zeta)$, where the jump matrix $\boldsymbol{V}^{(\boldsymbol{Z})}(\zeta)$ is as shown in Figure \ref{RHPJumpsZ} and all rays are oriented towards infinity.
        \begin{figure}[h]
            \centering
            \begin{tikzpicture}[]
        
                \draw[thick] (0,0) -- (30:1.5cm);
                \draw[->, thick] (30:1.5cm) -- (30:3cm) node[right=0cm] { $
                \begin{bmatrix}
                    1 & 0 \\
                    e^{2i\left(\frac{4}{3}\zeta^3+s\zeta\right)} & 1
                \end{bmatrix}
                $};
                \draw[thick] (0,0) -- (150:1.5cm);
                \draw[->, thick] (150:1.5cm) -- (150: 3cm) node[left = 0cm] {
                    $
                    \begin{bmatrix}
                        1 & 0 \\
                        -e^{2i\left(\frac{4}{3}\zeta^3+s\zeta\right)} & 1
                    \end{bmatrix}
                    $};
                \draw[thick] (0,0) -- (210: 1.5 cm);
                \draw[->, thick] (210: 1.5cm) -- (210: 3cm) node[left = 0cm] {    $
                \begin{bmatrix}
                    1 & e^{-2i\left(\frac{4}{3}\zeta^3+s\zeta\right)} \\
                    0 & 1
                \end{bmatrix}
                $};
                \draw[thick] (0,0) -- (330: 1.5cm);
                \draw[->, thick] (330: 1.5 cm) -- (330: 3cm) node[right = 0cm] {    $
                \begin{bmatrix}
                    1 & -e^{-2i\left(\frac{4}{3}\zeta^3+s\zeta\right)} \\
                    0 & 1
                \end{bmatrix}
                $};
                \node at (0,-.25) {$0$};    
            \end{tikzpicture}
            \caption{The jumps of $\boldsymbol{Z}(\zeta;s,k)$.}  
            \label{RHPJumpsZ}
        \end{figure}
      
        \item \textbf{Normalization:} There exist constant (in $\zeta$) matrices $\boldsymbol{Z}_{-1}(s;k)\equiv \boldsymbol{Z}_{-1}$ and $\boldsymbol{Z}_{-2}(s;k) \equiv \boldsymbol{Z}_{-2}$ such that, for large $\zeta$,
        \begin{equation}
            \boldsymbol{Z}(\zeta)\zeta^{-k\sigma_3} = \I + \frac{\boldsymbol{Z}_{-1}}{\zeta} + \frac{\boldsymbol{Z}_{-2}}{\zeta^2} + \Oh(1/\zeta^3).
        \end{equation}
    \end{enumerate}
\end{RHP}
In particular, the extraction formula is 
\begin{equation} \label{unScaledExactSolution}
    p_k(s)= 2i \left(  [\boldsymbol{Z}_{-1}(s;k)]_{22} - \frac{[\boldsymbol{Z}_{-2}(s;k)]_{12}}{[\boldsymbol{Z}_{-1}(s;k)]_{12}}\right).
\end{equation}
We introduce the scalings
\begin{equation}\label{Scalings}
    z := \left( \frac{2}{k} \right)^{1/3} \zeta \hspace{.5 cm} \text{and} \hspace{.5 cm} x := \left( \frac{2}{k} \right)^{2/3} s
\end{equation}
and define
\begin{equation}
    \boldsymbol{M}(z;x,k) \equiv \boldsymbol{M}(z) := \left( \frac{k}{2}\right)^{-\frac{k}{3}\sigma_3}  \boldsymbol{Z}\left( \left( \frac{k}{2} \right)^{1/3} z ; \left( \frac{k}{2} \right)^{2/3} x ,k\right).
\end{equation}
$\boldsymbol{M}(z)$ solves the following Riemann-Hilbert problem:
\begin{RHP}\label{RHPMScaled}
    For each positive integer $k$, find $\boldsymbol{M}\left( z; x,k \right) \equiv \boldsymbol{M}(z)$ so that
    \begin{enumerate}
        \item \textbf{Analyticity}: $\boldsymbol{M}(z)$ is analytic in $z$ off the jumps contours shown in Figure \ref{RHPJumpsM}.
        \item \textbf{Jump Condition:}  $\boldsymbol{M}(z)$ can be continuously extended to the boundary and the boundary values taken by $\boldsymbol{M}(z)$ are related by the jump condition $\boldsymbol{M}_+(z)=\boldsymbol{M}_-(z)\boldsymbol{V}^{(\boldsymbol{M})}(z)$, where $\boldsymbol{V}^{(\boldsymbol{M})}(z)$ is as shown in Figure \ref{RHPJumpsM}, and
        \begin{equation}
            \theta(z:x) := \frac{4}{3} z^3 + xz.
        \end{equation}
        \begin{figure}[h]
            \centering
            \begin{tikzpicture}[]
                
                \draw[thick] (0,0) -- (30:1.5cm);
                \draw[->,  thick] (30:1.5cm) -- (30:3cm) node[right=0cm] {$\boldsymbol{L}_k(i\theta)$};
                \draw[thick] (0,0) -- (150:1.5cm);
                \draw[->,  thick] (150:1.5cm) -- (150: 3cm) node[left=0cm] {$\boldsymbol{L}^{-1}_k(i\theta)$};
                \draw[thick] (0,0) -- (210: 1.5 cm);
                \draw[->,  thick] (210: 1.5cm) -- (210: 3cm) node[left=0cm] {$\boldsymbol{U}_{k}(-i\theta)$};
                \draw[thick] (0,0) -- (330: 1.5cm);
                \draw[->, thick] (330: 1.5 cm) -- (330: 3cm) node[right=0cm] {$\boldsymbol{U}_{k}^{-1}(-i\theta$)};
                \node at (0,-.25) {$0$};           
            \end{tikzpicture}
            \caption{The jumps of $\boldsymbol{M}(z)$. Recall, $\boldsymbol{L}_{k}$ and $\boldsymbol{U}_k$ are defined \eqref{notationDefs}.}
            \label{RHPJumpsM}
        \end{figure} 
        
        \item \textbf{Normalization:}  There exist constant (in $z$) matrices $\boldsymbol{M}_{-1}(x;k) \equiv \boldsymbol{M}_{-1}$ and $\boldsymbol{M}_{-2}(x;k) \equiv \boldsymbol{M}_{-2}$ such that, for large $z$  
        \begin{equation}\label{unwindingLargeMExp}
            \boldsymbol{M}(z) z^{-k\sigma_3} = \I + \frac{\boldsymbol{M}_{-1}}{z} + \frac{\boldsymbol{M}_{-2}}{z^2} + \Oh\left(1/z^3\right).  
        \end{equation}
    \end{enumerate}
\end{RHP}
A direct calculation shows that under this scaling the extraction formula \eqref{unScaledExactSolution} becomes
\begin{equation}\label{goodRiddance}
    p_k\left( \left(\frac{k}{2}\right)^{2/3} x\right)= 2i\left(\frac{k}{2}\right)^{1/3} \left([\boldsymbol{M}_{-1}(x;k)]_{22}-\frac{[\boldsymbol{M}_{-2}(x;k)]_{12}}{[\boldsymbol{M}_{-1}(x;k)]_{12}}\right).
\end{equation}

\section{Analysis in the Pole-Free Region} \label{GenusZeroAnalysis}

In this section we perform the Riemann-Hilbert analysis in the pole-free region. Our approach is modeled from the methodology presented in Section 3 of \cite{RationalPainleveII}. In the first part, we introduce a technical but standard tool called the $g$-function. Once that is established, we will be able to define the pole-free region of the $x$-plane. In the subsequent section, we will use the $g$-function to reduce Riemann-Hilbert Problem \ref{RHPMScaled}  to an explicitly solvable Riemann-Hilbert problem, thereby giving us a rigorous way to approximate the left-hand side of  \eqref{unScaledExactSolution}. 
    
\subsection{The g-function} \label{the g-function section}
We start by constructing the $g$-function for the pole-free region. We  define the functions
    \begin{equation}\label{Delta a and b}
        \Delta(x) \equiv \Delta := \left( - \frac{4i}{S} \right)^{1/2}, \hspace{.18 in} a(x)\equiv a:= \frac{S-\Delta}{2}, \hspace{.18 in} b(x)\equiv b := \frac{S+\Delta}{2}, \hspace{.18 in} c(x)\equiv c:= -\frac{S}{2}. 
    \end{equation}
    The square root in \eqref{Delta a and b} is principal. This choice guarantees that  $\Real(a) \le \Real(b)$. Now, let $\Sigma$ denote a finite, simple, oriented arc that connects $a$ to $b$. (See Remark \ref{sigmaPlacementg0} on the exact placement of $\Sigma$). Given $\Sigma$, let $r(z;x)\equiv r(z)$ be the function analytic for $z \in \Ci \setminus \Sigma$ that satisfies the conditions
    \begin{equation}\label{defOfr(z)}
        r(z)^2 = (z-a)(z-b) \hspace{.33 in } \text{and} \hspace{.33 in}  r(z)=z+\Oh(1) \text{ as } z \to \infty.
    \end{equation}
    To define the $g$-function, we first define its derivative and integrate up. So,
    \begin{equation}\label{gprimeG0}
        g'(z;x) \equiv g'(z) := \frac{i\theta'(z)}{2} - i(S+2z)r(z),
    \end{equation}
    where $g'$ is cut on $\Sigma$. We find that the large-$z$ expansion of $g'(z)$ is
   \begin{equation}\label{gPrime in terms of a and b}
    g'(z) = i\frac{2S^2+2x+\Delta^2}{4} + i\frac{S\Delta^2}{4z}  + \Oh\left(1/z^2 \right).
   \end{equation}
    An equivalent way to write \eqref{cubicEqu} and $\Delta^2$ is
    \begin{equation}\label{system of S and Delta}
        2S^2 + 2x + \Delta^2= 0 \hspace{.66cm} \text{and} \hspace{.66cm} \Delta^2= -\frac{4i}{S}.
    \end{equation}
   Plugging \eqref{system of S and Delta} into \eqref{gPrime in terms of a and b}, we see that $g'(z)=1/z + \Oh(1/z^2)$ as $z \to \infty$. Thus, $g'(z)-1/(z-a)$ is integrable at $z=\infty$ and we can use the fundamental theorem of calculus to define $g(z)$ as an antiderivative of $g'(z)$. However, we must introduce a logarithmic cut. Let $L$ denote an oriented, unbounded arc from $z=\infty$ to $z=a$. Further, suppose that $L$ only intersects $\Sigma$ at $z=a$ and that, for large enough $z$, $L$ coincides with the negative real numbers. Now, we define $g(z)$ as an integral where the path of integration can be any path that does not cross $\Sigma$. Set
    \begin{equation}\label{gofzFormulag0}
        g(z;x) \equiv g(z) := \log(z-a) + \int_{\infty}^{z} g'(w) - \frac{1}{w-a} \dd w.
    \end{equation}

    For notational convenience, we write
   \begin{equation}\label{defhg0}
    h(z;x) \equiv h(z) := \frac{i\theta(z)}{2}- g(z), \hspace{.2 in} z \in \Ci \setminus \Sigma, \hspace{.2 in} x \in \Ci \setminus \Sigma_S.
   \end{equation}
   The behavior of $h(z)$ near the points $z=a$ and $z=b$ plays an important role in the Riemann-Hilbert analysis. Let $\D_a$ and $\D_b$ be sufficiently small disks around $z=a$ and $z=b$, respectively. Notice that $L \cup \Sigma$ divides $\D_a$ into two components. Let $\D_a^{\uparrow}$ denote the component that is on the plus side (or left side) of $L \cup \Sigma$, and let $\D_a^{\downarrow}$ denote the component that is on the minus side (or right side) of $L \cup \Sigma$.

   We now list some basic properties of $g$ and $h$ that we will use throughout the Riemann-Hilbert analysis.

   \begin{proposition}\label{properties of g and h g0}
        Given that $S,\Delta,a,b$ are defined, guaranteeing that $g(z)$ and $h(z)$ are well-defined by \eqref{gofzFormulag0} and \eqref{defhg0}, respectively, then for each $x \in \Ci \setminus \Sigma_S$ there exists a complex number $\lambda(x) \equiv \lambda$ such that for all $z \in \Sigma$ we have that
        \begin{equation} \label{jumpsofhg0}
            h_{+}(z) + h_{-}(z) = i\theta(z) - g_{+}(z) - g_{-}(z) = -\lambda.
        \end{equation}
        Further, if $x \ne 3e^{\pm 2\pi i /3}$, then under the right perturbation, $2h+\lambda$ locally behaves like a $3/2$-root function near the endpoints $a$ and $b$. More precisely, if we consider the functions
        \begin{equation}
            J_{a}(z) = \begin{cases}
                \left( 2h(z)+\lambda - 2\pi i \right) & \text{if }z \in \D_a^{\uparrow} \\
                \left( 2h(z)+\lambda + 2\pi i \right) & \text{if } z \in \D_a^{\downarrow}
            \end{cases}
        \end{equation}
        and
        \begin{equation}
            J_{b}(z) = 2h(z) +\lambda,
        \end{equation}
        then for each endpoint $p \in \{a,b\}$ there exists $\kappa_p \in \Ci \setminus \{0\}$ such that
        \begin{equation} \label{nearEndpointsG0}
            J_{p}(z) = \kappa_p (z-p)^{3/2} + \Oh\left( (z-p)^{5/2} \right).
        \end{equation}
        Additionally, when $x \in \R$, for each $z \in \Ci \setminus (\Sigma \cup L)$
        \begin{equation}\label{symOfhg0}
            \Real \left(2h\left(-\overline{z}; x \right) + \lambda(x) \right) =  \Real \left(2h(z;x) + \lambda(x)\right).
        \end{equation}
        Furthermore, for $x \in \R$,
        \begin{equation}\label{Re(2h+lambda) on the real line}
            \text{there are at most two real }z\text{-values} \text{ such that } Re\left( 2h(z)+\lambda \right) =0.
        \end{equation}
        Also,
        \begin{equation} \label{logJump}
            g_{+}(z) - g_{-}(z) = -2\pi i, \hspace{.2 cm} z \in L.
        \end{equation}
        Finally, as $z \to \infty$
        \begin{equation} \label{gfunNormg0}
            g(z) = \log(z) + \Oh(1/z).
        \end{equation}
   \end{proposition}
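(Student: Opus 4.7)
The plan is to verify each clause using the explicit formulas for $g$, $h$, and $r$ together with the algebraic identities in (\ref{system of S and Delta}); most claims are short computations, with the real-axis statements requiring the bulk of the work. First I would handle (\ref{jumpsofhg0}), (\ref{logJump}), and (\ref{gfunNormg0}). On $\Sigma$ the identity $r_+ + r_- = 0$ gives $g'_+ + g'_- = i\theta'$, and integrating along $\Sigma$ from a base point yields $g_+ + g_- = i\theta + \lambda$ for a single complex constant $\lambda = \lambda(x)$; substituting $h = i\theta/2 - g$ then gives (\ref{jumpsofhg0}). For (\ref{logJump}), $g'$ is analytic across $L$, so the only jump comes from the $\log(z-a)$ term in (\ref{gofzFormulag0}), contributing $-2\pi i$ with the stated orientation. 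For (\ref{gfunNormg0}), (\ref{gPrime in terms of a and b}) together with (\ref{system of S and Delta}) gives $g'(z) - 1/(z-a) = \mathcal{O}(1/z^2)$, so integration along a path avoiding $\Sigma$ and $L$ yields $g(z) = \log(z-a) + \mathcal{O}(1/z) = \log z + \mathcal{O}(1/z)$.

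Next I would prove (\ref{nearEndpointsG0}) by differentiating to obtain $(2h+\lambda)'(z) = i\theta'(z) - 2g'(z) = 2i(S+2z)r(z)$, and factoring $r(z) = \psi_p(z)(z-p)^{1/2}$ near $p \in \{a,b\}$ with $\psi_p$ analytic and nonvanishing in $\D_p$. Integrating from $p$, on which $2h + \lambda$ vanishes by (\ref{jumpsofhg0}) and continuity, yields $J_p(z) = \kappa_p(z-p)^{3/2} + \mathcal{O}((z-p)^{5/2})$ with $\kappa_p = (4i/3)(S+2p)\psi_p(p)$; the $\pm 2\pi i$ offsets in the two branches of $J_a$ exactly compensate the logarithmic jump (\ref{logJump}) as $z$ crosses $L$ inside $\D_a$. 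A short calculation shows $\kappa_p = 0$ iff $\Delta = 0$ (impossible off $\Sigma_S$) or $S + 2p = 0$; the latter, combined with (\ref{cubicEqu}) and $\Delta^2 = -4i/S$, is equivalent to $x \in \{3e^{\pm 2\pi i/3}\}$, precisely the excluded points.

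The hard part will be (\ref{symOfhg0}) and (\ref{Re(2h+lambda) on the real line}). For real $x$, the cubic (\ref{cubicEqu}) is invariant under $S \mapsto -\overline{S}$, and the prescribed branch of $S$ forces $S(x) \in i\R$ and $\Delta(x) \in \R$; hence $\overline{a} = -b$, so the entire configuration (endpoints and $\Sigma$) is invariant under $z \mapsto -\overline{z}$. To prove (\ref{symOfhg0}) I would introduce $\tilde{g}(z) := -\overline{g(-\overline{z})}$, which is holomorphic off the reflected cut; a short computation using $r(-\overline{z}) = -\overline{r(z)}$ gives $\tilde{g}'(z) = -g'(z)$, so $g(z) + \overline{g(-\overline{z})}$ is constant in the common domain of analyticity, and this constant is purely imaginary, as one checks by comparing asymptotics at a large positive real point. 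Taking real parts gives $\Real(g(z)) = \Real(g(-\overline{z}))$, and combining with the elementary identity $\Real(i\theta(-\overline{z})) = \Real(i\theta(z))$ for real $x$ yields (\ref{symOfhg0}). Finally, for (\ref{Re(2h+lambda) on the real line}), I would restrict $\Real(2h+\lambda)$ to $\R$ and analyze $\tfrac{d}{dz}\Real(2h+\lambda) = \Real(2i(S+2z)r(z))$; since $S \in i\R \setminus \{0\}$, the critical equation $S + 2z = 0$ has no real root, and a sign analysis of $\Real(2i(S+2z)r(z))$ on each of the intervals into which $\Sigma \cap \R$ partitions $\R$, aided by the reflection symmetry (\ref{symOfhg0}), produces a monotonicity pattern admitting at most two sign changes. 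The main obstacle will be the bookkeeping in this final step, namely determining the sign of $\Real(2i(S+2z)r(z))$ on each subinterval and ruling out additional zeros using the asymptotic conditions on $S(x)$.
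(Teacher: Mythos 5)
Your arguments for \eqref{jumpsofhg0}, \eqref{logJump}, \eqref{gfunNormg0}, and \eqref{nearEndpointsG0} are standard and match what the paper dismisses as direct computation. For \eqref{symOfhg0}, the Schwarz-reflection route you take is sound and essentially equivalent to the paper's argument in Appendix~\ref{proofApp} (the paper works with $j(u,v):=h(z)-h(-\overline z)$ and its partials; you work with $\tilde g(z):=-\overline{g(-\overline z)}$; both reduce to $S(x)$ purely imaginary, $a=-\overline b$, and $r(-\overline z)=-\overline{r(z)}$). Beware a sign slip, though: $\tilde g'=-g'$ gives $\tilde g+g$ constant, i.e.\ $g(z)-\overline{g(-\overline z)}$ constant, \emph{not} $g(z)+\overline{g(-\overline z)}$; with your plus sign and a purely imaginary constant you would instead get $\Real g(z)=-\Real g(-\overline z)$. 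Also, "a large positive real point" sends $-\overline z$ onto the logarithmic cut $L$; checking on the imaginary axis, where $-\overline z=z$, is cleaner.

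The genuine gap is \eqref{Re(2h+lambda) on the real line}. Reducing (via symmetry and Rolle/monotonicity) to showing $\Real h'(u)\ne0$ for real $u\ne0$ is the right move, but the only concrete input you offer is that $S+2z=0$ has no real root. That only gives $h'(u)\ne0$ on $\R$, which controls neither the sign nor the argument of $h'(u)$. The paper's proof has to do substantially more: it confines $\arg h'(u)\in(\pi/2,\pi)$ for $u>0$ by combining (i) $h'(0)=-S|b|$ so $\arg h'(0)=\pi/2$; (ii) an explicit formula for $\frac{d}{du}\arg h'(u)$ whose sign is that of $|a|^2-3u^2$, so the argument increases on $(0,|a|/\sqrt3)$ and decreases thereafter; (iii) $\arg h'(u)\to\pi/2$ as $u\to+\infty$; and (iv) a direct upper bound $\arg h'(u)<\pi$. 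The derivative-of-argument computation is the indispensable technical input, and your "sign analysis on subintervals" neither supplies it nor a substitute. Finally, "intervals into which $\Sigma\cap\R$ partitions $\R$" is off target: for real $x$ both $a$ and $b$ lie in the open lower half-plane and $\Sigma$ need not meet $\R$ at all; the only relevant split of $\R$ is at $u=0$, via the reflection symmetry you already established.
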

Most of these statements are standard and follow from direct computation, however we furnish a proof of statements \eqref{symOfhg0} and \eqref{Re(2h+lambda) on the real line}in Appendix \ref{proofApp}.
The function $h(z)$, and the constant $\lambda$ can both be explicitly expressed in a closed form. Consider
\begin{equation}\label{etaHelperFunDef}
    \eta(z) := \big((1-z)(1+z)\big)^{1/2} = i\big((z-1)(z+1)\big)^{1/2},
\end{equation}
where $\eta$ is cut on $[-1,1]$ and as $z \to \infty$, $\eta(z)=iz + \Oh(1/z).$ Note that on the cut $[-1,1]$, both the plus and minus boundary values of $\eta$ are real valued. Now, we let $\myarcsin$ denote 
\begin{equation}\label{ArcSinBranch}
    \myarcsin(z) := -i\log\left( iz + \eta(z)\right).
\end{equation}
Then,
\begin{equation} \label{hofzExact}
    h(z) =  \frac{i}{6} r(z) \left(4z^2+2Sz+2x\right) - 2i\myarcsin\left( \left( \frac{a-z}{a-b} \right)^{1/2} \right) + \frac{i x S}{6} + \log\left(\frac{-4}{\Delta} \right)+ \frac{1}{3}.
\end{equation}
Moreover, as $\lambda = -(h_{+} + h_{-})$,
\begin{equation}\label{defOfLambda}
    \lambda = -2\left( \frac{ixS}{6}+ \log\left( \frac{-4}{\Delta} \right) + \frac{1}{3}\right).
\end{equation}

\subsection{Defining the Pole-Free Region} \label{poleFreeRegSection}

We now have the necessary objects established to define the pole-free region. In particular, $x$-membership of the pole-free region will depend on the topology of the zero-level set of $\Real(2h+\lambda)$ in the $z$-plane (for motivation see Figure \ref{zeroLevelLinesGenusZeroAll}). The following definition is directly adapted from Definition 2 in \cite{RationalPainleveII}.

\begin{definition}\label{genus0Def}
    We say $x \in \Ci$ is in the pole-free region of the complex $x$-plane if the following five conditions hold:
\begin{enumerate}
    \item[\textbf{(a1)}] There are exactly three arcs of the zero-level set of $\Real(2h+\lambda)$ that terminate at the endpoint $z=a$.
    \item[\textbf{(a2)}] There exist two curves emanating from $z = a(x)$ that never cross the zero-level set of $\Real(2h+\lambda)$ and tend to infinity at distinct angles $5\pi/6$ and $-5\pi/6$. 
    \item[\textbf{(a3)}] There are exactly three arcs of the zero-level set of $\Real(2h+\lambda)$ that terminate at the endpoint $z=b$.
    \item[\textbf{(a4)}] There exist two curves emanating from $z = b$ that never cross the zero-level set of $\Real(2h+\lambda)$ and tend to infinity at distinct angles $\pi/6$ and $-\pi/6$. 
    \item[\textbf{(a5)}] There exists a finite curve that connects $z=a$ to $z=b$ that never crosses the zero-level set of $\Real(2h+\lambda)$. We further require that the curve is below $\Sigma$. That is, except for the endpoints $z=a$ and $z=b$, the curve is entirely to the right side of $\Sigma$.
\end{enumerate}
\end{definition}
 As $x$ varies, for $x$ to leave or enter the pole-free region it is necessary (but not sufficient) for the zero-level set of $\Real(2h(z;x)+\lambda(x))$ to undergo a sudden topological change. That is, the zero-level set must intersect itself. Further, the Cauchy-Riemann equations tell us this can only occur at a point $z_0$ where $h'(z_0;x)=0$ (here the derivative is with respect to $z$). However, by \eqref{gprimeG0} and \eqref{defhg0}, for each fixed $x \in \Ci \setminus \Sigma_S$, $h'(z;x)$ can only vanish at $z=a(x)$, $z=b(x)$, or $z=c(x)$. In fact, since $a(x)$ and $b(x)$ will always lie on the zero-level set (by construction) and $a(x) \ne b(x)$ (because $\Delta(x)$ never vanishes), our point of interest is $z=c(x)$. Specifically, we want the $x$-values where $c(x)$ collides with the zero-level set and so we consider 
\begin{equation}\label{auxBoundaryFunction}
    \mathfrak{c}(x) := \Real(2h(c(x);x)+\lambda(x)).
\end{equation}
Notice that $g(z)$ (and hence $h(z)$) is continuous at $z=b$. So, at $z=b$ \eqref{jumpsofhg0} reads $h_{+}(b) + h_{-}(b) = 2h(b)= -\lambda$. Thus, by the fundamental theorem of calculus we have 
\begin{equation}\label{integralFormulaofGothc}
    \mathfrak{c}(x) = 2\Real \int_{b}^{c} h'(w) \dd w,
\end{equation}
where the path of integration does not cross $\Sigma$. Notice, this implies that $\mathfrak{c}(x)$ is continuous. The zero-level set of $\mathfrak{c}(x)$ divides the $x$-plane into four connected, open components $\Omega_{\text{right}}$, $\Omega_{\text{up}}$, $\Omega_{\text{left}}$, and $\Omega_{\text{down}}$; see Figure \ref{boundaryandBifunCurve}.

\begin{figure}[h]
    \centering
    \scalebox{.55}{\includegraphics{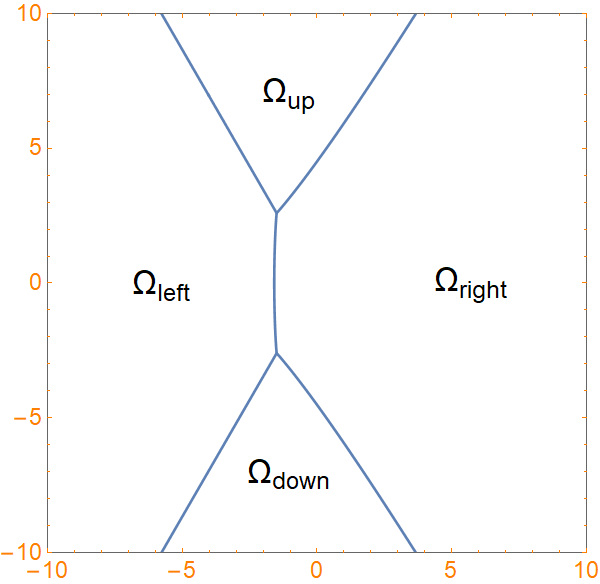}}
    \caption{The zero-level set of $\mathfrak{c}(x)$ in the $x$-plane.}
    \label{boundaryandBifunCurve}
\end{figure}

\begin{proposition}\label{defOfGenusZeroRegion}
    The pole-free region of the complex $x$-plane is exactly the union of $\Omega_{\text{left}}$, $\Omega_{\text{right}}$, and their shared boundary minus the apex points $x=3e^{\pm 2\pi i /3}$.
\end{proposition}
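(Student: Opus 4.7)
The plan is a continuity/bifurcation argument. For $x$ in a fixed connected component of $\Ci \setminus \mathfrak{c}^{-1}(0)$, the point $c(x)$ stays off the zero-level set of $\Real(2h(\cdot\,;x) + \lambda(x))$, and the functions $a, b, c, h, \lambda$ depend continuously on $x$. Hence the \emph{critical graph} of $\Real(2h+\lambda)$ -- meaning the zero-level set together with its critical points $a, b, c$ -- undergoes no topological change across the component, and conditions (a1)--(a5) are preserved throughout. It therefore suffices to check (a1)--(a5) at a single representative $x$ in each of the four components $\Omega_{\text{left}}, \Omega_{\text{right}}, \Omega_{\text{up}}, \Omega_{\text{down}}$ and to handle the boundary arcs separately.

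The identification of $\mathfrak{c}^{-1}(0)$ as the bifurcation locus follows because a topological change in the zero-level set of the harmonic function $\Real(2h+\lambda)$ can only occur at a zero of its complex derivative, i.e.\ at a critical point of $h$. By \eqref{gprimeG0} and \eqref{defhg0} the only critical points of $h$ are $a(x), b(x)$, and $c(x)$. The points $a$ and $b$ always lie on the zero-level set by \eqref{jumpsofhg0}, and remain distinct for $x \notin \Sigma_S$ because $\Delta(x) \ne 0$, so the three-arc structure guaranteed by \eqref{nearEndpointsG0} at each of $a, b$ persists. The only remaining source of topological change is a collision of $c(x)$ with the zero-level set, which by \eqref{auxBoundaryFunction} is precisely the condition $\mathfrak{c}(x) = 0$.

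To verify (a1)--(a5) in $\Omega_{\text{right}}$ and $\Omega_{\text{left}}$ I would pick test points on the real axis: a large positive $x$ for $\Omega_{\text{right}}$ and a large negative $x$ for $\Omega_{\text{left}}$. Using the asymptotics of $S(x)$ recorded in Theorem \ref{genus0 theorem}, the closed form \eqref{hofzExact}, the real-line symmetry \eqref{symOfhg0}, and the bound \eqref{Re(2h+lambda) on the real line}, one can sketch the zero-level set explicitly. At each of $a$ and $b$ the local $3/2$-root behavior \eqref{nearEndpointsG0} produces three arcs at $120^{\circ}$ angles; the leading $z^3$-behavior of $\theta$ shows that two of these escape to infinity along the asymptotic directions $\pm 5\pi/6$ at $a$ and $\pm \pi/6$ at $b$, and the remaining pair connects $a$ to $b$ on the correct side of $\Sigma$, so (a1)--(a5) all hold. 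For representative points in $\Omega_{\text{up}}$ and $\Omega_{\text{down}}$ the analogous computation shows that the saddle-connection pattern has been reorganized: the arc emanating from $c(x)$ now either blocks the connecting curve required by (a5) or steals one of the escape trajectories required by (a2) or (a4), so at least one condition fails. On the shared boundary between $\Omega_{\text{left}}$ and $\Omega_{\text{right}}$ (excluding the apex points) the point $c(x)$ lies on the zero-level set but not at $a$ or $b$, so the local arc counts at the endpoints are unchanged and (a1)--(a5) still hold, justifying the inclusion of this boundary arc in the pole-free region.

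Finally, at the apex points $x = 3 e^{\pm 2\pi i /3}$ the discriminant of \eqref{cubicEqu} vanishes; a short computation with the cubic shows that on the relevant branch $c(x)$ merges with $a(x)$ or $b(x)$, so the $3/2$-root behavior in \eqref{nearEndpointsG0} upgrades to a higher-order root and additional zero-level arcs sprout from the degenerate endpoint, violating (a1) or (a3). The main obstacle in executing this plan is the topological check in $\Omega_{\text{up}}$ and $\Omega_{\text{down}}$: one must confirm that the bifurcation at $\mathfrak{c}(x) = 0$ truly triggers a \emph{global} rearrangement of the saddle-connection graph rather than a merely local rearrangement that leaves (a1)--(a5) intact. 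In practice this amounts to tracking the integral curves of $\Imag(2h+\lambda)$ through the collision event, most cleanly via the explicit formula \eqref{hofzExact} together with a careful count of critical trajectories emanating from the three saddles $a$, $b$, $c$.
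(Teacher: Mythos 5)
Your strategy---a continuity/bifurcation argument reducing the check to one representative point in each component of $\Ci \setminus \mathfrak{c}^{-1}(0)$---is exactly the paper's approach, including the identification of $\mathfrak{c}^{-1}(0)$ as the bifurcation locus via the critical points $a$, $b$, $c$ of $h$. Two remarks. First, your sketch conflates two distinct families of rays: the zero-level arcs of $\Real(2h+\lambda)$ escape to infinity at the six angles $0, \pm\pi/3, \pm 2\pi/3, \pi$ (where $\sin(3\arg z)=0$), whereas $\pm 5\pi/6$ and $\pm\pi/6$ are the directions for the deformed jump rays postulated in (\textbf{a2}) and (\textbf{a4}), which live in the open sectors \emph{between} the zero-level arcs. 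Relatedly, your picture of two escaping arcs at each of $a$ and $b$ plus one finite arc joining them is the $\Omega_{\text{right}}$ configuration only; in $\Omega_{\text{left}}$ all six arcs coming in from infinity pair with the six local arcs at $a$, $b$, and there is no finite zero-level arc joining the endpoints. The paper accordingly treats the two regions with different arguments: for $\Omega_{\text{left}}$ it shows that no zero-level arc crosses the imaginary axis, while for $\Omega_{\text{right}}$ it shows exactly two crossings and tracks the reconnection of arcs through the bifurcation point $x_0$. A rigorous write-up needs this case distinction.

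Second, the obstacle you flag at the end is genuine: showing that at least one of (\textbf{a1})--(\textbf{a5}) \emph{fails} throughout $\Omega_{\text{up}}$ and $\Omega_{\text{down}}$, and that the shared boundary (minus the apexes) still satisfies all five, requires a global analysis of the saddle-connection graph that local information at $c(x)$ alone does not supply. It is worth knowing, though, that the paper's own proof establishes only the inclusion direction---that $\Omega_{\text{left}}$ and $\Omega_{\text{right}}$ lie in the pole-free region---and defers the exclusion of the other components, the exclusion of the apex points, and the inclusion of the shared boundary arc to the informal ``surveying'' subsection; even the $\Omega_{\text{right}}$ reconnection argument leans on a numerically observed monotonicity of $\mathfrak{c}$ on the real line. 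So your plan is broader in scope than what the paper actually carries out, but it would still need to close exactly the gap you name.
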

\begin{proof}
    Notice that conditions (\textbf{a1})--(\textbf{a5}) are preserved as $x$ varies continuously as long as $x$ does not cross the zero-level set of $\mathfrak{c}(x)$. So, in order to show that $\Omega_{\text{left}}$ (or $\Omega_{\text{right}}$) is contained in the pole-free region it suffices to show that at least one of its points is in the pole-free region.
    
    A direct calculation shows that for, $x \in \R$,
    \begin{equation} \label{x-asymptotics}
       \mathfrak{c}(x) = \begin{cases}
        \displaystyle \frac{\sqrt{2}}{3} (-x)^{3/2} + \Oh(1) & x \to - \infty 
         \\[10 pt] 
         \displaystyle -\frac{2}{3}x^{3/2} - \log\left( x^{3/2} \right) + \Oh(1)  & x \to +\infty.
        \end{cases}
    \end{equation}
    Given that $\Omega_{\text{left}}$ and $\Omega_{\text{right}}$ are defined by the zero-level set of $\mathfrak{c}(x)$, \eqref{x-asymptotics} tells us that $\mathfrak{c}(x)$ is positive in $\Omega_{\text{left}}$ and negative in $\Omega_{\text{right}}$.

    Now, we show that $\Omega_{\text{left}}$ is in the pole-free region. Fix $x_{\ell} \in \Omega_{\text{left}} \cap \R$. Notice, as $z \to \infty$ the $i\theta(z)$ part of $h(z)$ dominates because $g(z)= \log(z) + \Oh(1/z)$ (see \eqref{defhg0} and \eqref{gfunNormg0}). Thus, at $z=\infty$ we deduce that the zero-level set of $\Real(2h+\lambda)$ will have exactly six arcs at angles $0$, $\pi/3$, $2\pi/3$, $\pi$, $-2\pi/3$, and $-\pi/3$. Further, \eqref{nearEndpointsG0} shows that at $z=b$ there are exactly three arcs of the zero-level set of $\Real(2h+\lambda)$ emanating from it at angles separated by $2\pi /3$. Similarly, \eqref{nearEndpointsG0} shows that there are exactly three arcs of the zero-level set of $\Real(2h+\lambda)$ emanating from $z=a$ at angles separated by $2\pi /3$.

    As $\Real(2h(z) +\lambda)$ is harmonic in $z$ in $\Ci \setminus (\Sigma \cup L)$, the aforementioned arcs must connect pairwise  to each other. Further, once connected these six arcs comprise the entirety of the zero-level set of $\Real(2h+\lambda)$. In particular, it does not contain a finite arc, nor a closed loop. Indeed, the former case would violate the fact $\Real(2h+\lambda)$ is infinitely differentiable, while the latter case would violate the maximum modulus principle. So, our current task is to determine which configurations of connections are valid (i.e.\ connections that are consistent with Proposition \ref{properties of g and h g0}).

    We start by analyzing $2h'(z)$ on the imaginary axis in order to determine how many times the zero-level set intersects the imaginary axis. Towards that goal we consider the function $f :\R \to \Ci$ defined by 
    \begin{equation}\label{def of F aux function g0}
       f(v;x) \equiv f(v):= 2h(iv) + \lambda.
    \end{equation}
    A direct calculation shows that when $x$ is real, $a = - \overline{b}$, $c$ is purely imaginary, and $f'$ is a real-valued function with the sign chart shown in Figure \ref{signChartImaginaryAxishPrime}.
    \begin{figure}[h]
        \centering
        \begin{tikzpicture}
    
            \draw[->] (0,0) -- (6,0);
            \draw[->] (0,0) -- (-6,0);
    
            \draw[fill=black] (-1.5,0) circle (.075cm);
            \node at (-1.5,-.5) {$\rho$};
    
            \draw[fill=black] (1.5,0) circle (.075cm);
            \node at (1.5,-.5) {$\Imag(c)$};
    
            \node at (-3.75,.5) {{\LARGE $+$}};
    
            \node at (0,.5) {{\LARGE $-$}};
    
            \node at (3.75,.5) {{\LARGE $+$}};
    
            \node at (6.5,0) {$v$};
    
        \end{tikzpicture}
        \caption{A sign chart for $f'(v)$ (for any $x \in \R$), where $\rho$ is the modulus of the intersection point of $\Sigma$ and the imaginary axis, i.e.\ $i\rho \in \Sigma$.}
        \label{signChartImaginaryAxishPrime}
    \end{figure}
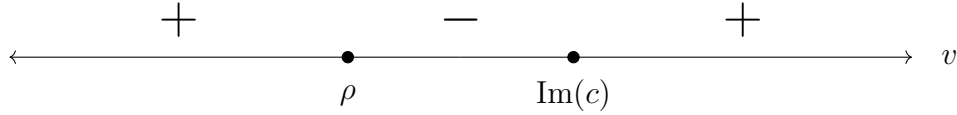

    Reading from Figure \ref{signChartImaginaryAxishPrime}, we see that on the interval $(\Imag(c),\infty)$ $\Real(f)$ has positive slope, while on $(\rho, \Imag(c))$ $F$ has negative slope. Since $\Real(f(\Imag c(x_{\ell})))=\mathfrak{c}(x_{\ell})>0$, it follows $\Real(f)$ does not vanish on $(\rho, \infty)$ and is positive. 
    
    At $v=\rho$, $f$ has a jump discontinuity because $i\rho \in \Sigma$ and $h$ has a jump discontinuity on $\Sigma$. In fact, \eqref{jumpsofhg0} implies that  $f_{+}(\rho) = - f_{-}(\rho)$. Thus, $\Real(f_{-}(\rho))<0$. Moreover, the fact $f$ has positive slope on $(-\infty, \rho)$ implies $\Real(f)$ does not vanish on $(-\infty,\rho)$. Therefore, $\Real(2h+\lambda)$ does not vanish on the imaginary axis. This means none of the zero-level arcs of $\Real(2h+\lambda)$ cross the imaginary axis.

    Next, since $x_{\ell} \in \R$,  the real part of $h(z)$ is symmetric about the imaginary axis, i.e.\ \eqref{symOfhg0} holds. Hence, we need only to consider the left half of the $z$-plane to ascertain how the zero-level arcs of $\Real(2h+\lambda)$ are connected.
    Now, label the zero-level arcs at $\infty$ with angles $2\pi/3$, $\pi$, and $-2\pi/3$ as $\textsc{arc}_1$, $\textsc{arc}_2$, and $\textsc{arc}_3$. For the arcs emanating from $z=a$, we label (in a counterclockwise fashion) $\textsc{arc}_{\alpha}$, $\textsc{arc}_{\beta}$, and $\textsc{arc}_{\gamma}$. Notice, if one of the arcs at $\infty$ (on the left side of the plane) did not connect to one of the arcs emanating from $z=a$, then two arcs at $z=a$ would be forced to connect to each other (because to connect to an arc in the right half of the plane would require it to cross the imaginary axis). However, this is not a valid connection because it forms a closed loop. This means we can assume without loss of generality that $\textsc{arc}_1$ is connected to $\textsc{arc}_{\alpha}$ (otherwise relabel the arcs emanating from $z=a$). Further, $\textsc{arc}_2$ is forced to connect to $\textsc{arc}_{\beta}$. Otherwise, the connection between $\textsc{arc}_3$ and $\textsc{arc}_{\beta}$ would intersect a zero-level arc; see Figure \ref{connectingLines}. Therefore, we have a complete picture of possible configurations of the zero-level arcs in the left half of the $z$-plane. Indeed, $\textsc{arc}_1$ is connected to $\textsc{arc}_{\alpha}$,  $\textsc{arc}_2$ is connected to $\textsc{arc}_{\beta}$, and  $\textsc{arc}_3$ is connected to $\textsc{arc}_{\gamma}$. Moreover, we can complete our picture of the entire $z$-plane using the symmetry \eqref{symOfhg0}; see Figure \ref{in Omega right}. Finally, we see that $x_{\ell}$ satisfies all of the axioms in the definition of the pole-free region and therefore $\Omega_{\text{left}}$ is completely contained in the pole-free region.  

    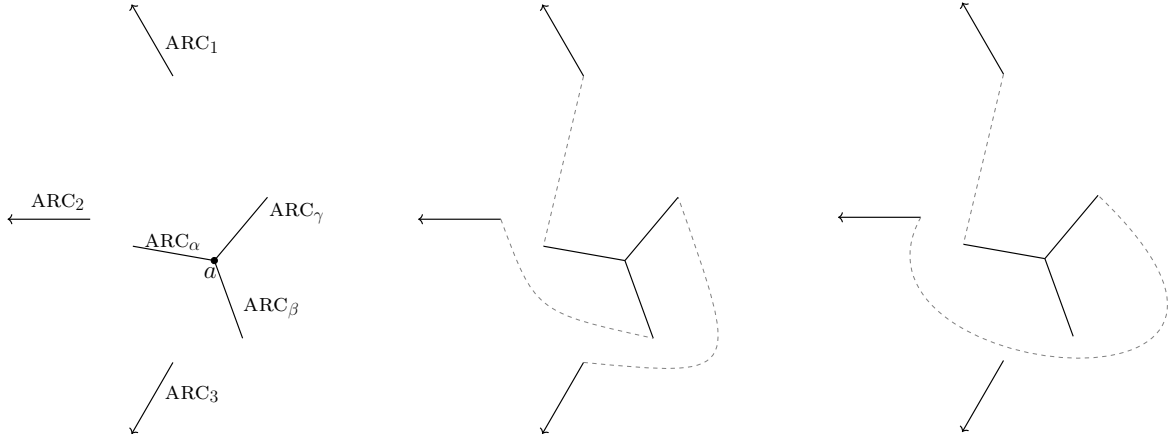
\begin{figure}
        \centering
        \begin{tabular}[h]{ccccc}
            \scalebox{.55}{\begin{tikzpicture}
        
                \draw[fill=black] (-1,-1) circle (.075cm);
                \draw [thick, xshift=-1cm, yshift=-1cm] (0,0) -- (50: 2cm);
                \node[ xshift=-1cm, yshift=-1cm] at (30: 2.3cm) {{\Large $\textsc{arc}_{\gamma}$}};
                \draw [thick, xshift=-1cm, yshift=-1cm] (0,0) -- (170: 2cm);
                \node[ xshift=-1cm, yshift=-1cm] at (156: 1.1cm) {{\Large $\textsc{arc}_{\alpha}$}};
                \draw [thick, xshift=-1cm, yshift=-1cm] (0,0) -- (290: 2cm);
                \node[ xshift=-1cm, yshift=-1cm] at (320: 1.8cm) {{\Large $\textsc{arc}_{\beta}$}};
                \node at (-1.1cm,-1.3cm) {{\Large $a$}};

                \draw[->, thick] (120 : 4cm) -- (120: 6cm);
                \node at (110: 4.5cm) {{\Large $\textsc{arc}_1$}};
        
                \draw[->,thick] (180: 4cm) -- (180: 6cm);
                \node at (175: 4.8cm) {{\Large $\textsc{arc}_2$}};
        
                \draw[->, thick] (-120: 4cm) -- (-120: 6cm);
                \node at (-110: 4.5cm) {{\Large $\textsc{arc}_3$}};
            \end{tikzpicture}}
            &\textcolor{white}{ fill}&
            \scalebox{.55}{\begin{tikzpicture}
                  
                \draw[->, thick] (120 : 4cm) -- (120: 6cm);
        
                \draw[->,thick] (180: 4cm) -- (180: 6cm);
                
                \draw[->, thick] (-120: 4cm) -- (-120: 6cm);

                \draw [thick, xshift=-1cm, yshift=-1cm]  (0,0) -- (170: 2cm);
                \draw [dashed,  gray] (120: 4cm) [xshift=-1cm, yshift=-1cm] -- (170: 2cm);
                
                \draw [thick, xshift=-1cm, yshift=-1cm] (0,0) -- (290: 2cm);
                \draw [dashed,  gray] (180: 4cm) [xshift=-1cm, yshift=-1cm].. controls (-150: 2.5cm) ..(290: 2cm);
        
                \draw [thick, xshift=-1cm, yshift=-1cm] (0,0) -- (50: 2cm);
                \draw [draw,  dashed, gray] (-120: 4cm) [xshift=-1cm, yshift=-1cm].. controls (-45: 4cm) ..(50: 2cm);
        
            \end{tikzpicture}}
            &\textcolor{white}{ fill}&
            \scalebox{.55}{\begin{tikzpicture}
    
                \draw[->, thick] (120 : 4cm) -- (120: 6cm);
        
                \draw[->,thick] (180: 4cm) -- (180: 6cm);
                
                \draw[->, thick] (-120: 4cm) -- (-120: 6cm);

                \draw [thick, xshift=-1cm, yshift=-1cm]  (0,0) -- (170: 2cm);
                \draw [dashed,  gray] (120: 4cm) [xshift=-1cm, yshift=-1cm] -- (170: 2cm);
                
                \draw [thick, xshift=-1cm, yshift=-1cm] (0,0) -- (290: 2cm);      
        
                \draw [thick, xshift=-1cm, yshift=-1cm] (0,0) -- (50: 2cm);
                \draw [dashed, gray] (180: 4cm) [xshift=-1cm, yshift=-1cm].. controls (-150: 6cm) and (-30: 8.5cm) ..(50: 2cm);

            \end{tikzpicture}}
    
        \end{tabular}
        \caption{Finding all of the possible configurations of the zero-level arcs of $\Real(2h+\lambda)$ in the left half of the $z$-plane, when $ x \in \Omega_{\text{left}}$. The left diagram shows the starting point, the middle diagram shows a valid configuration, while the right diagram shows an invalid configuration.}
        \label{connectingLines}
    \end{figure}

    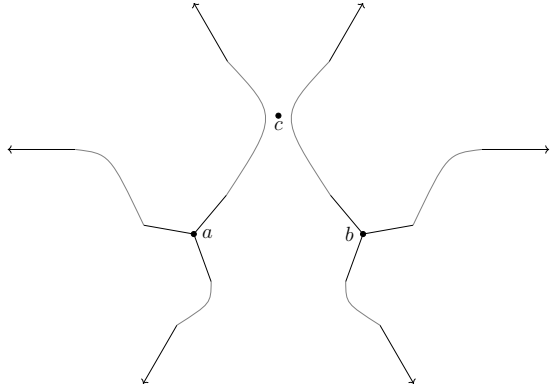
\begin{figure}
        \centering
        \scalebox{.45}{\begin{tikzpicture}
        
            \draw[fill=black] (-2.5,-2.5) circle (.075cm);
            \node at (-2.1cm,-2.5cm) {{\Large $a$}};
            \draw [thick, xshift=-2.5cm, yshift=-2.5cm] (0,0) -- (50: 1.5cm);
            \draw [thick, xshift=-2.5cm, yshift=-2.5cm] (0,0) -- (170: 1.5cm);
            \draw [thick, xshift=-2.5cm, yshift=-2.5cm] (0,0) -- (290: 1.5cm);

            \draw[fill=black] (2.5,-2.5) circle (.075cm);
            \node at (2.1cm,-2.5cm) {{\Large $b$}};
            \draw [thick, xshift=2.5cm, yshift=-2.5cm] (0,0) -- (130: 1.5cm);
            \draw [thick, xshift=2.5cm, yshift=-2.5cm] (0,0) -- (250: 1.5cm);
            \draw [thick, xshift=2.5cm, yshift=-2.5cm] (0,0) -- (10: 1.5cm);
    
            \draw[fill=black] (0,1) circle (.075cm);
            \node at (0,.7) {{\Large $c$}};

            \draw[->, thick] (120 : 3cm) -- (120: 5cm);
            
            \draw[thick,gray] (120: 3cm).. controls (0,1).. (-1.53582,-1.35093);
            
            \draw[->,thick] (180: 6cm) -- (180: 8cm);
    
            \draw[thick,gray] (180: 6cm).. controls (-4.98861, -1.11976+1) .. (-3.97721, -2.23953);
    
            \draw[->,thick] (240: 6cm) -- (240: 8cm);
    
            \draw[thick,gray] (240: 6cm).. controls (-2.49348+.5, -4.55285) .. (-1.98697, -3.90954);
    
            \draw[->,thick] (300: 6cm) -- (300: 8cm);
    
            \draw[thick,gray] (300: 6cm).. controls (2.49348-.5, -4.55285) .. (1.98697, -3.90954);
    
            \draw[->,thick] (0: 6cm) -- (0: 8cm);
    
            \draw[thick,gray] (6,0).. controls (4.98861, -1.11976+1) .. (3.97721, -2.23953);
    
            \draw[->, thick] (60: 3cm) -- (60: 5cm);
    
            \draw[thick,gray] (60: 3cm).. controls (0,1).. (1.53582, -1.35093);
        \end{tikzpicture}}
        \caption{Valid connections of the zero-level arcs of $\Real(2h+\lambda)$ for $x \in \Omega_{\text{left}}$.}
        \label{in Omega right}
    \end{figure} 

    We now show $\Omega_{\text{right}}$ is contained in the pole-free region. Fix $x_{r} \in \Omega_{\text{right}} \cap \R$. The same analysis gives us the same setup as in the previous case. That is, we know that the zero-level set of $\Real(2h+\lambda)$ will have exactly six unbounded arcs that tend to $\infty$ at angles $0$, $\pi/3$, $2\pi/3$, $\pi$, $-2\pi/3$, and $-\pi/3$, and  there are exactly three arcs of the zero-level set of $\Real(2h+\lambda)$ emanating from the points $z=a$ and $z=b$ at angles separated by $2\pi/3$. We now need to determine the valid connections between these arcs. Using the same the argument as in the previous case, we can use Figure \ref{signChartImaginaryAxishPrime} to find that the zero-level lines of $\Real(2h(z)+\lambda)$ cross the imaginary axis in the $z$-plane at two distinct points. 
    
    Notice, due to the symmetry condition \eqref{symOfhg0}, the topology of the zero-level lines differs from the previous case. Also, since $\mathfrak{c}(x)$ is continuous, \eqref{x-asymptotics} and the intermediate value theorem imply there exists $x_0 \in (x_{\ell}, x_{r})$ such that $\Real(2h(c;x_0)+\lambda(x_0))=0$. Here, we use the assumption that $\mathfrak{c}(x)$ only vanishes once on the real line, as indicated by the numerical plots. Indeed, plots suggest this map is monotone.
   
    To determine the valid connections of the zero-level arcs of $\Real(2h+\lambda)$ at $x_r$, we will use the fact that these arcs deform continuously as $x$ goes from $x_{\ell}$ to $x_r$ along the real line. As $x$ varies, we know that $\mathfrak{c}(x)$ vanishes at $x_0$. By the symmetry condition \eqref{symOfhg0} there are only two mechanisms for $\mathfrak{c}(x)$ to vanish on the real line. Namely, the arc that starts at $z=a$ that goes to $\infty$ at an angle of $2\pi/3$ can collide with the arc that starts at $z=b$ that goes to $\infty$ at an angle of $\pi/3$, or the arc that starts at $z=a$ that goes to $\infty$ at an angle of $-2\pi/3$ can collide with the arc that starts at $z=b$ that goes to $\infty$ at an angle of $-\pi/3$. Any other collision would cause $\mathfrak{c}(x)$ to vanish at least twice on $\R$. Further, since $\Imag(c)>0$ and the collision must occur at $z=c$, \eqref{Re(2h+lambda) on the real line} allows us to rule out the latter case. Therefore, the zero-level arcs of $\Real(2h+\lambda)$ when $x=x_0$ is topologically equivalent to Figure \ref{atTheBirfurPoint}.

    \begin{figure}
        \centering
        \scalebox{.45}{\begin{tikzpicture}    
            \draw[fill=black] (-2.5,-2.5) circle (.075cm);
            \node at (-2.1cm,-2.5cm) {{\Large $a$}};
            \draw [thick, xshift=-2.5cm, yshift=-2.5cm] (0,0) -- (50: 1.5cm);
            \draw [thick, xshift=-2.5cm, yshift=-2.5cm] (0,0) -- (170: 1.5cm);
            \draw [thick, xshift=-2.5cm, yshift=-2.5cm] (0,0) -- (290: 1.5cm);

            \draw[fill=black] (2.5,-2.5) circle (.075cm);
            \node at (2.1cm,-2.5cm) {{\Large $b$}};
            \draw [thick, xshift=2.5cm, yshift=-2.5cm] (0,0) -- (130: 1.5cm);
            \draw [thick, xshift=2.5cm, yshift=-2.5cm] (0,0) -- (250: 1.5cm);
            \draw [thick, xshift=2.5cm, yshift=-2.5cm] (0,0) -- (10: 1.5cm);

            \draw[fill=black] (0,1) circle (.075cm);
            \node at (.35,1) {{\Large $c$}};

            \draw[->, thick] (120 : 3cm) -- (120: 5cm);
        
            \draw[thick,gray] (120: 3cm).. controls (.5,1).. (-1.53582,-1.35093);
        
            \draw[->,thick] (180: 6cm) -- (180: 8cm);

            \draw[thick,gray] (180: 6cm).. controls (-4.98861, -1.11976+1) .. (-3.97721, -2.23953);

            \draw[->,thick] (240: 6cm) -- (240: 8cm);

            \draw[thick,gray] (240: 6cm).. controls (-2.49348+.5, -4.55285) .. (-1.98697, -3.90954);

            \draw[->,thick] (300: 6cm) -- (300: 8cm);

            \draw[thick,gray] (300: 6cm).. controls (2.49348-.5, -4.55285) .. (1.98697, -3.90954);

            \draw[->,thick] (0: 6cm) -- (0: 8cm);

            \draw[thick,gray] (6,0).. controls (4.98861, -1.11976+1) .. (3.97721, -2.23953);

            \draw[->, thick] (60: 3cm) -- (60: 5cm);

            \draw[thick,gray] (60: 3cm).. controls (-.5,1).. (1.53582, -1.35093);
        \end{tikzpicture}}
        \caption{The zero-level arcs of $\Real(2h+\lambda)$ when $x=x_0$.}
        \label{atTheBirfurPoint}
    \end{figure}

    When $x$ is pushed to the right of $x_0$ the arcs break apart so that there is a bounded arc that starts at $z=a$ and goes to $z=b$ and there is an unbounded arc that tends to $\infty$ in the left half of the $z$-plane at an angle of $2\pi/3$ and tends to $\infty$ in the right half at an angle of $\pi/3$. This breaking mechanism is forced because the zero-level set of $\Real(2h+\lambda)$ intersects the imaginary axis twice. So, we see that $x_r$ satisfies all of the axioms in the definition of the pole-free region; see Figure \ref{inOmegaRight}, and therefore  $\Omega_{\text{right}}$ is completely contained in the pole-free region.

    \begin{figure}[h]
        \centering
        \scalebox{.45}{\begin{tikzpicture}
        
            \draw[fill=black] (-2.5,-2.5) circle (.075cm);
            \node at (-2.1cm,-2.5cm) {{\Large $a$}};
            \draw [thick, xshift=-2.5cm, yshift=-2.5cm] (0,0) -- (50: 1.5cm);
            \draw [thick, xshift=-2.5cm, yshift=-2.5cm] (0,0) -- (170: 1.5cm);
            \draw [thick, xshift=-2.5cm, yshift=-2.5cm] (0,0) -- (290: 1.5cm);

            \draw[fill=black] (2.5,-2.5) circle (.075cm);
            \node at (2.1cm,-2.5cm) {{\Large $b$}};
            \draw [thick, xshift=2.5cm, yshift=-2.5cm] (0,0) -- (130: 1.5cm);
            \draw [thick, xshift=2.5cm, yshift=-2.5cm] (0,0) -- (250: 1.5cm);
            \draw [thick, xshift=2.5cm, yshift=-2.5cm] (0,0) -- (10: 1.5cm);
    
            \draw[fill=black] (0,1) circle (.075cm);
            \node at (.35,1) {{\Large $c$}};

            \draw[->, thick] (120 : 3cm) -- (120: 5cm);
            
            \draw[thick,gray] (120: 3cm).. controls (0,1).. (60: 3cm);;
            
            \draw[->,thick] (180: 6cm) -- (180: 8cm);
    
            \draw[thick,gray] (180: 6cm).. controls (-4.98861, -1.11976+1) .. (-3.97721, -2.23953);
    
            \draw[->,thick] (240: 6cm) -- (240: 8cm);
    
            \draw[thick,gray] (240: 6cm).. controls (-2.49348+.5, -4.55285) .. (-1.98697, -3.90954);
    
            \draw[->,thick] (300: 6cm) -- (300: 8cm);
    
            \draw[thick,gray] (300: 6cm).. controls (2.49348-.5, -4.55285) .. (1.98697, -3.90954);
    
            \draw[->,thick] (0: 6cm) -- (0: 8cm);
    
            \draw[thick,gray] (6,0).. controls (4.98861, -1.11976+1) .. (3.97721, -2.23953);
    
            \draw[->, thick] (60: 3cm) -- (60: 5cm);
    
            \draw[thick,gray]  (-1.53582,-1.35093).. controls (0,1).. (1.53582,-1.35093);
    
        \end{tikzpicture}}
        \caption{The zero-level arcs of $\Real(2h+\lambda)$ when $x\in \Omega_{\text{right}}$.}
        \label{inOmegaRight}
    \end{figure}
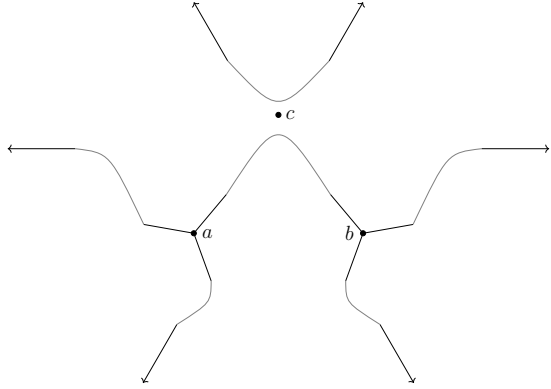
\end{proof}
\begin{remark}\label{sigmaPlacementg0}
    The zero-level set of $\Real(2h+\lambda)$ is independent of the placement of the band $\Sigma$. That is, the zero-level set of $\Real(2h+\lambda)$ is invariant with respect to the choice of $\Sigma$ as long as $\Sigma$ is a finite, simple curve that connects $a$ to $b$. Therefore, we are at liberty to pick $\Sigma$. For the ease of describing other objects in the analysis we choose $\Sigma$ such that
\begin{itemize}
    \item When $x \in \Omega_{\text{left}}$, we require that (i)  $\Sigma \cap \D_a$ is contained  in the zero-level arc of $\Real(2h+\lambda)$ that emanates from $z=a$ and goes to $\infty$ at an angle of $2\pi/3$ and (ii) $\Sigma \cap \D_b$ is contained in the zero-level arc of $\Real(2h+\lambda)$ that emanates from $z=b$ and goes to $\infty$ at an angle of $\pi/3$. 
    \item When $x \in \Omega_{\text{right}}$, $\Sigma$ coincides with the zero-level arc of $\Real(2h+\lambda)$ that connects $z=a$ to $z=b$.
\end{itemize}
\end{remark}
\begin{remark}
    The section of the boundary that borders $\Omega_{\text{left}}$ is the two straight rays that comprise $\Sigma_S$ (the branch cut of $S(x)$). The proof is straightforward and involves evaluating \eqref{hofzExact} on the boundary.
\end{remark}

\subsection[short]{Surveying the Pole-Free Region}
We now survey the pole-free region. See Table \ref{TableOfxValues}  for a list of all of the $x$-values we use to graph the zero-level lines of $\Real(2h+\lambda)$. 

\begin{table}[h]
    \centering
    \begin{tabular}{ccccc}
        $9e^{2\pi i /3} \approx -4.5 + 7.794i$ & & & & $ \approx 1.5+ 6.678i$ \\
        $ 3e^{2\pi i /3}-3 \approx -4.5 + 2.598i$ && $3e^{2\pi i /3} \approx -1.5 + 2.598$ && $3e^{2\pi i /3}+3 \approx 1.5 + 2.598i$ \\
        $ -4.5$ && $x_0 \approx -1.588$ && $1.5$ \\
        $3e^{-2\pi i /3}-3 \approx -4.5 - 2.598i $  && $3e^{-2\pi i /3} \approx -1.5 - 2.598$ && $3e^{-2\pi i /3}+3 \approx 1.5 - 2.598i$\\
        $9e^{-2\pi i /3} \approx -4.5 - 7.794i$ & & & & $\approx 1.5 - 6.678i$ \\
    \end{tabular}
    \caption{$x$-values corresponding to Figure \ref{zeroLevelLinesGenusZeroAll}}
    \label{TableOfxValues}
\end{table}

\begin{figure}
    \centering
    \begin{tabular}[h]{ccc}
        \scalebox{.3}{\includegraphics{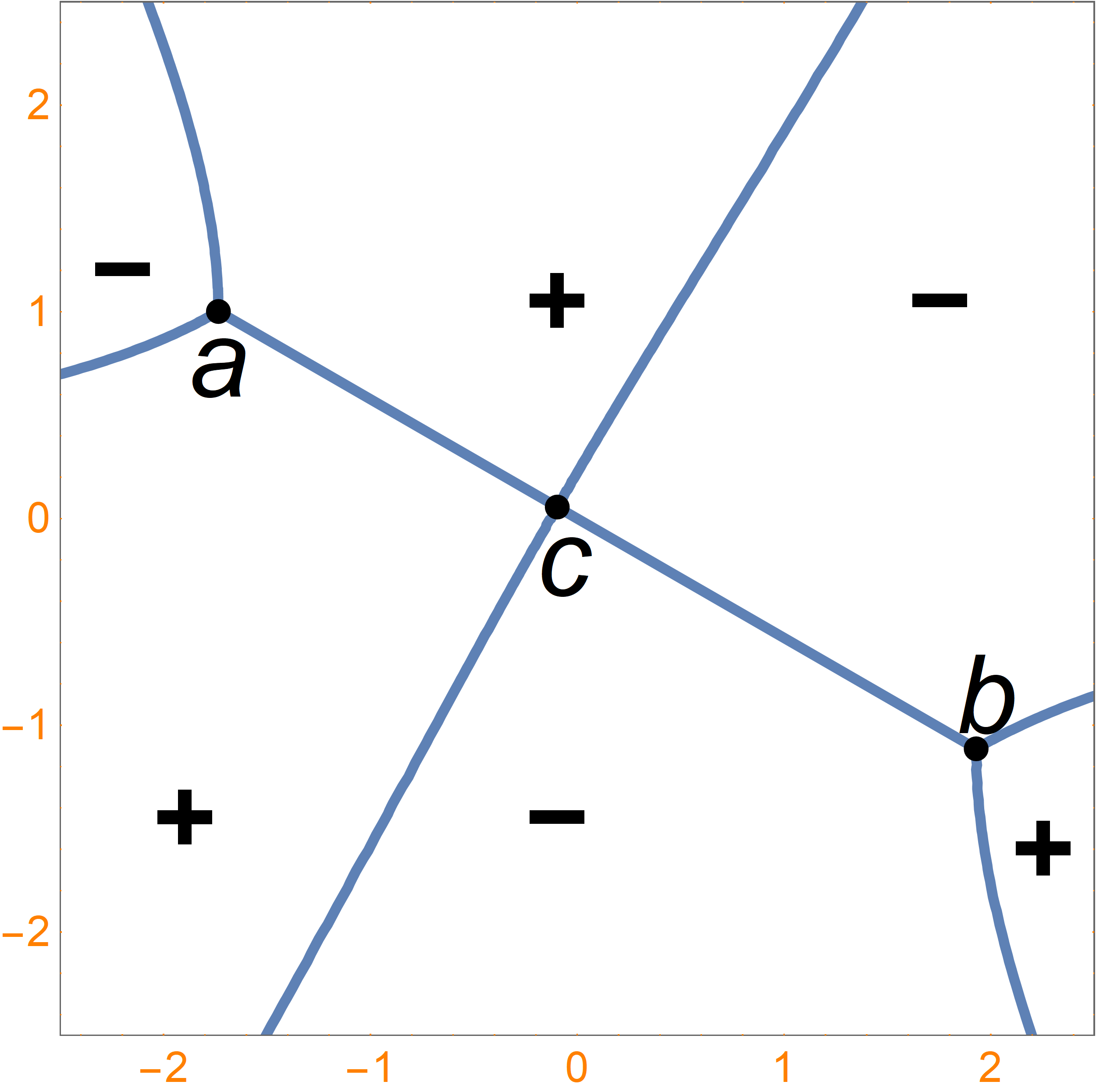}} & \scalebox{.3}{\includegraphics{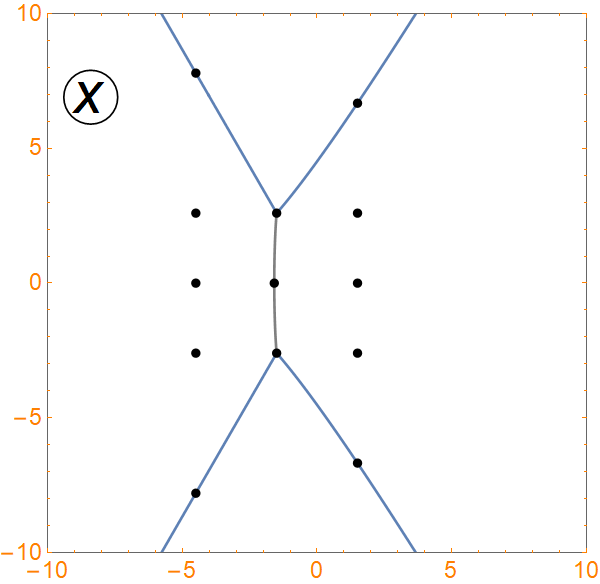}} & \scalebox{.3}{\includegraphics{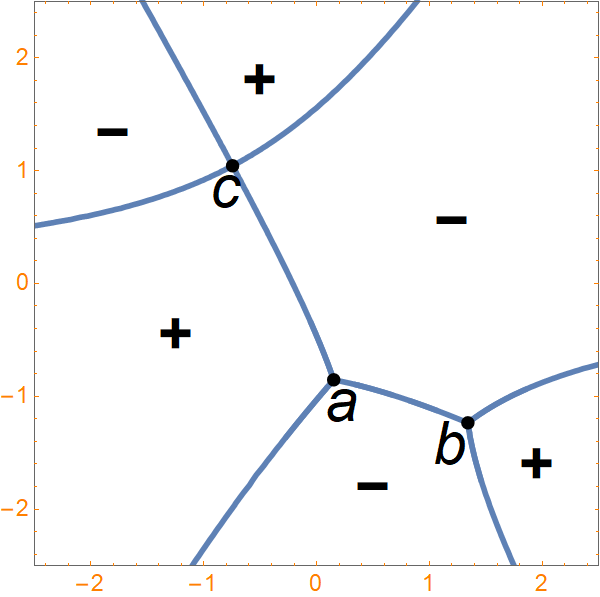}} \\
        \scalebox{.3}{\includegraphics{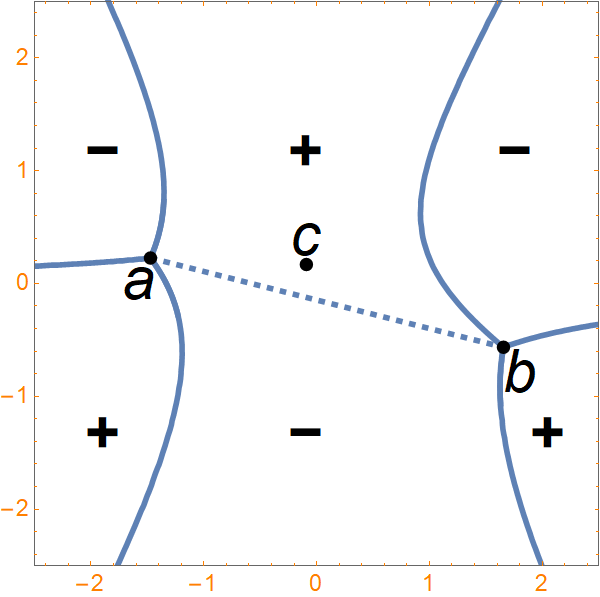}} & \scalebox{.3}{\includegraphics{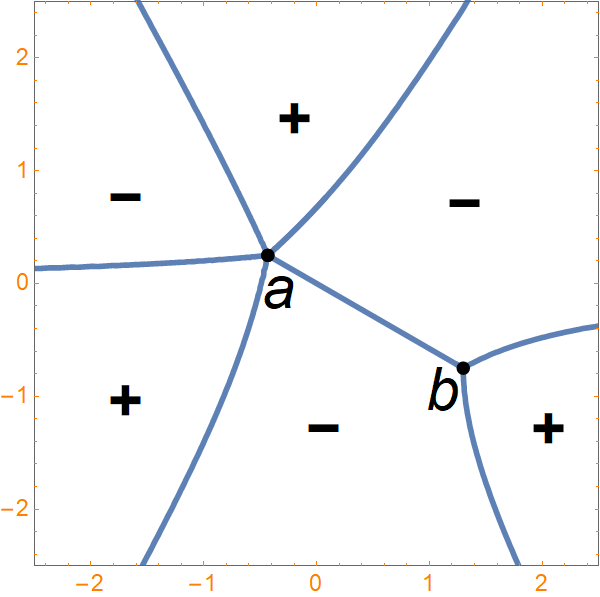}} & \scalebox{.3}{\includegraphics{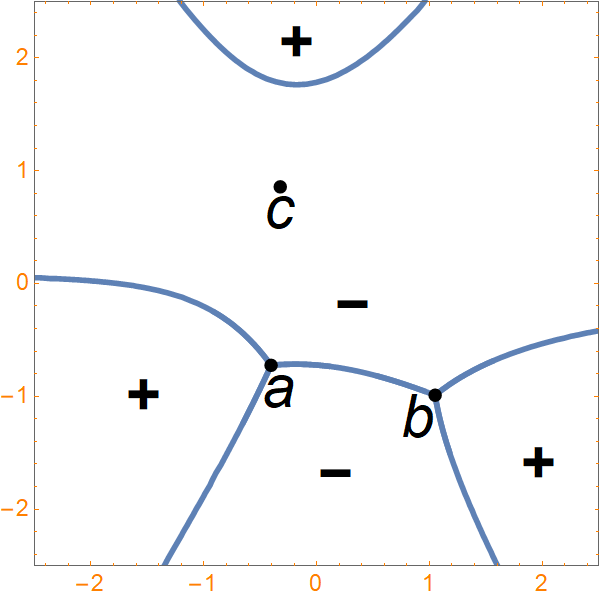}} \\
        \scalebox{.3}{\includegraphics{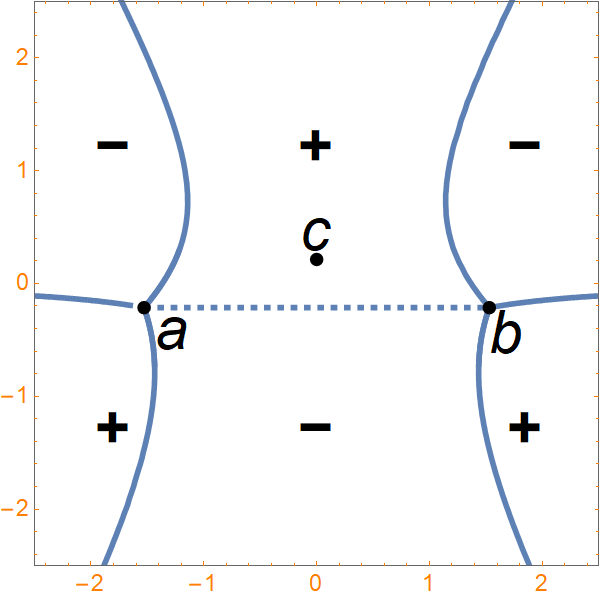}} & \scalebox{.3}{\includegraphics{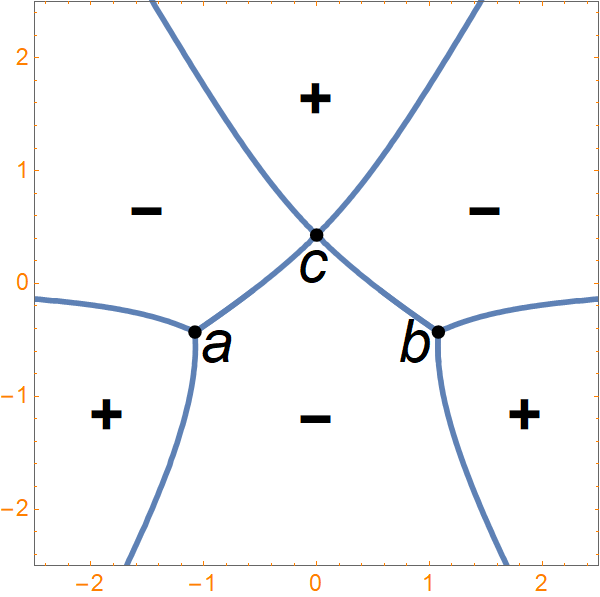}} & \scalebox{.3}{\includegraphics{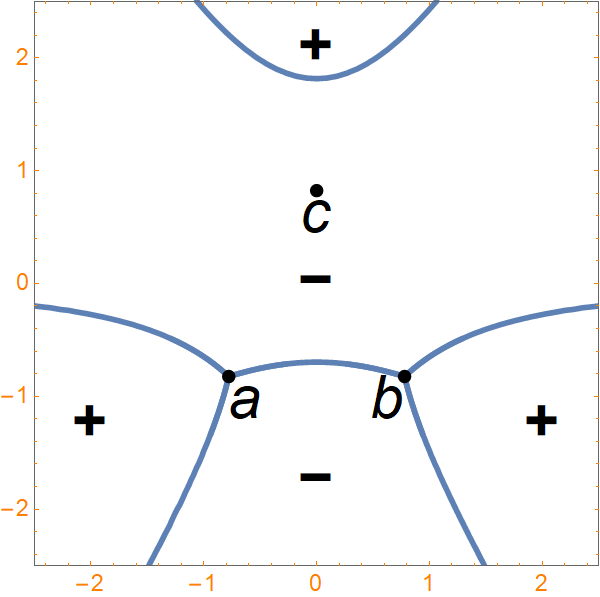}} \\
        \scalebox{.3}{\includegraphics{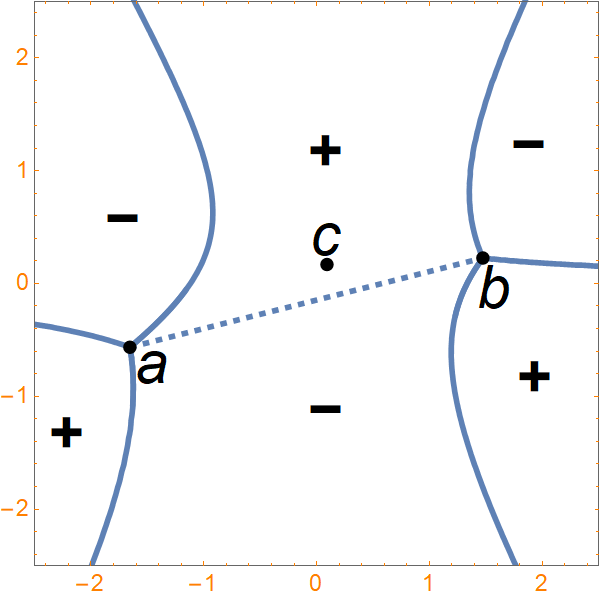}} & \scalebox{.3}{\includegraphics{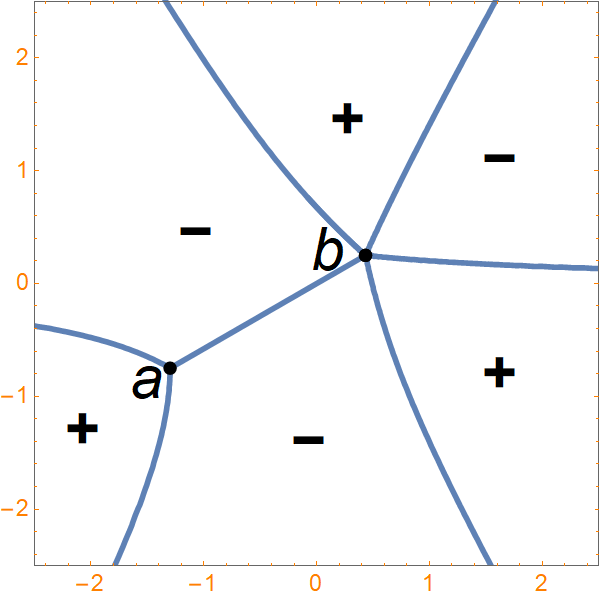}} & \scalebox{.3}{\includegraphics{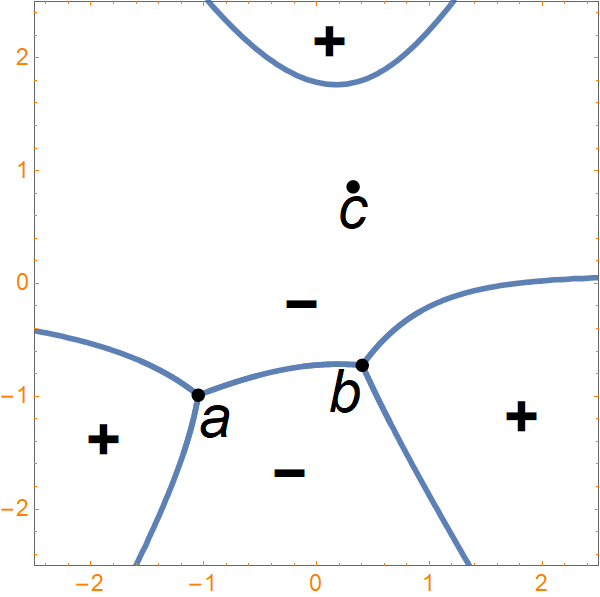}} \\
        \scalebox{.3}{\includegraphics{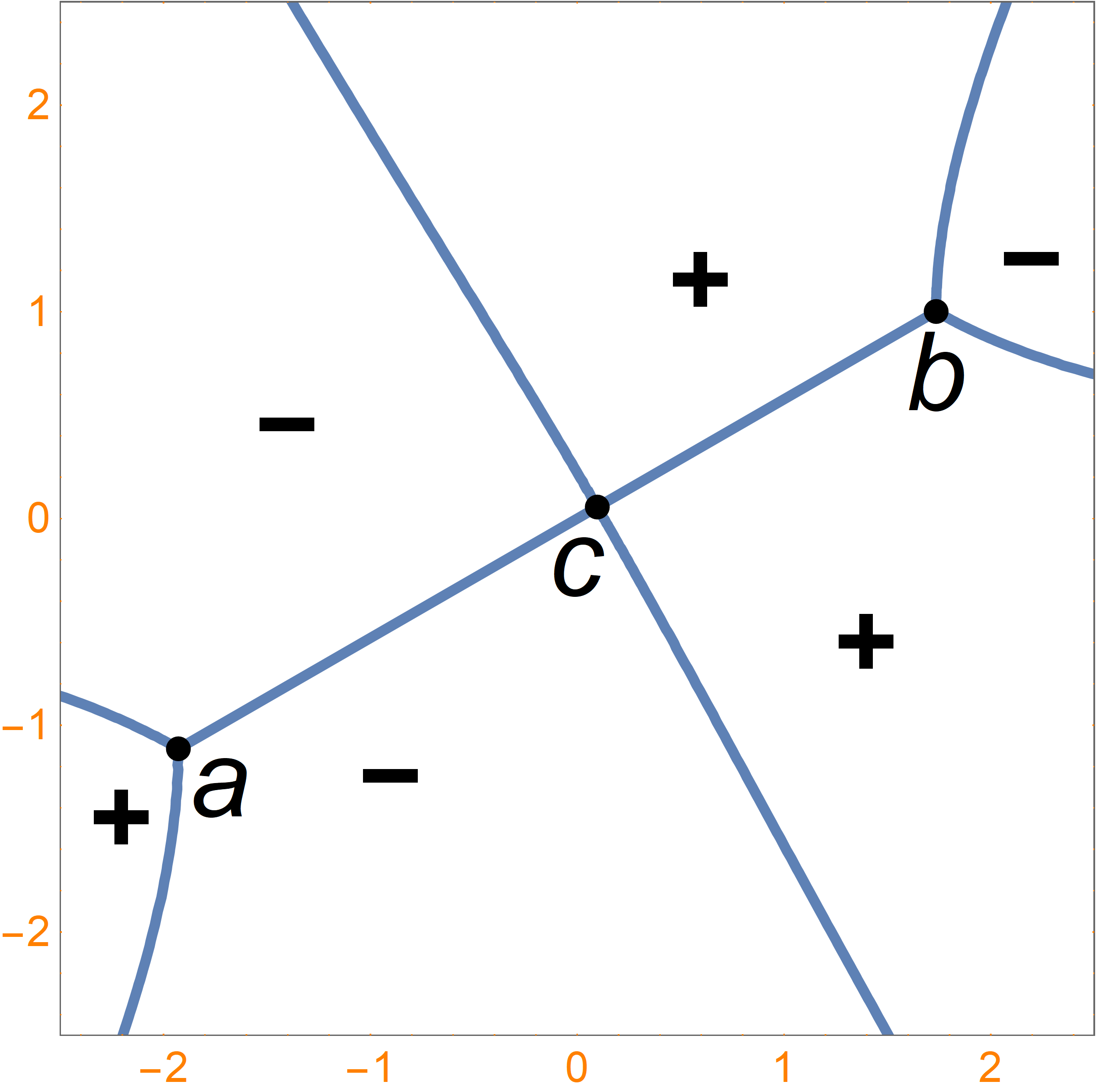}} && \scalebox{.3}{\includegraphics{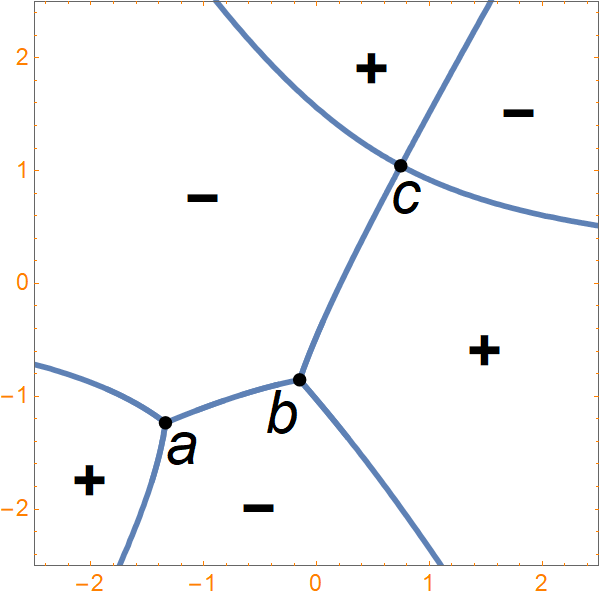}}
    \end{tabular}
    \caption{The zero-level arcs of $\Real(2h + \lambda)$ in the $z$-plane for various  $x$-values as shown in Table \ref{TableOfxValues}. Here, the dotted lines refers to branch cuts that do not coincide with the zero-level set.}
    \label{zeroLevelLinesGenusZeroAll}
\end{figure} 

Notice the breaking mechanism differs between the left and right side of the $x$-plane. On one hand, when $x$ leaves the pole-free region to the left of the vertical line $\Real(x)=-1.5$, two distinct zero-level arcs collide with each other at $z=c$. In this case, $\Real(c)$ will be between $\Real(a)$ and $\Real(b)$. The resulting configuration has a zero-level arc that separates $a$ and $b$. Thus, such a configuration violates $(\textbf{a5})$. In Figure \ref{zeroLevelLinesGenusZeroAll} this corresponds to when $x=9e^{2\pi i /3}$ or $x = 9e^{-2\pi /3}$. Notice that we have an exact formula for when this phenomenon occurs, $x = \rho e^{\pm 2\pi i /3}$ with $\rho >3$. On the other hand, when $x$ leaves the pole-free region to the right of $\Real(x)=-1.5$, two distinct zero-level arcs collide with each other at $z=c$, but $\Real(c)$ will not be between $\Real(a)$ and $\Real(b)$. The resulting configuration violates $(\textbf{a2})$ or $(\textbf{a4})$ depending on if $x$ is in the upper or lower half of the plane, respectively. The right boundary curves are defined implicitly by those $x$ satisfying $\Real\left( 2h(c(x);x)+\lambda(x) \right)=0$.

There are two special $x$-values, $x=3e^{\pm 2\pi i /3}$, when an endpoint collides with $z=c$. These are the apexes of the boundary. When $x= 3 e^{2\pi i /3}$, then $a=c$ or, in other words, $3a+b=0$. In fact, $2h(a)+\lambda$ has a zero of order $5/2$. Hence, \textbf{(a1)} is not satisfied. Similarly, when $x=3e^{-2\pi i /3}$, \textbf{(a3)} is not satisfied. Finally, we can see that a topological change is not a sufficient condition to be on the boundary. Indeed, at $x_0$  we see that the zero-level lines undergo a sudden topological change but of all the axioms are still satisfied. Notice, this section of the zero-level set of $\mathfrak{c}(x)$ is curved and is not contained in the vertical line $\Real(x)=-3/2$.

\subsection{Preliminary Deformations in the Riemann-Hilbert Analysis} 

For the remainder of Section \ref{GenusZeroAnalysis} we fix $x$ in the pole-free region as we perform the Deift-Zhou nonlinear steepest-descent method. The general line of attack is to apply a sequence of deformations to $\boldsymbol{M}(z)$ to obtain a corresponding Riemann-Hilbert problem that can be approximated by a solvable Riemann-Hilbert problem. Our first step is to collapse the corresponding jump contours of $\boldsymbol{M}(z)$ in an open bounded set. In order to specify the exact placement of the deformed jump contours, we need to construct two conformal maps $\tau_a$ and $\tau_b$. 

\subsubsection[]{Conformal Map Constructions}\label{conformalMapSection}
We start with the simpler case and construct $\tau_b$. Recall, $\D_b$ is a small disk around $z=b$. Notice, one can write the jump condition \eqref{jumpsofhg0} as
\begin{equation}\label{2h+lambda sqrt jump}
    2h_{+}(z) + \lambda = -(2h_{-}(z) +\lambda), \hspace{.33 cm} z \in \Sigma.
\end{equation}
So, \eqref{nearEndpointsG0} and \eqref{2h+lambda sqrt jump} imply that in $\D_b$, the map $z \mapsto (2h(z)+\lambda)^2$ is analytic and nonvanishing, except for a third-order zero at $z=b$. Also, since $\Sigma$ was placed on the zero-level set of $\Real(2h+\lambda)$, we have that for each $z \in \Sigma \cap \D_b$, $(2h_{\pm}(z)+\lambda)^2 <0$. With these considerations in mind, we see that there exists a conformal map $\tau_b$ such that
\begin{itemize}
    \item For each $z \in \D_b$, $\tau_b(z)^3 = (2h(z)+\lambda)^2$,
    \item $\tau_b$ maps $\D_b$ to a small neighborhood of the origin,
    \item $\tau_b$ maps $\D_b \cap \Sigma$ to the negative real line.
\end{itemize}

We need to construct a similar conformal mapping of $\D_a$. However, in this case we need to deal with the logarithmic cut $L$. Notice, one can write \eqref{logJump} in terms of $h$ as
\begin{equation}\label{helpfulLogJump}
    2h_{+}(z)-2\pi i = 2h_{-}(z)+2\pi i, \hspace{.33 cm} z \in L.
\end{equation}
Consider the map 
\begin{equation}\label{aux}
    z \mapsto \begin{cases}
        \left( 2h(z)+\lambda - 2\pi i \right)^{2} & \text{if }z \in \D_a^{\uparrow}, \\
        \left( 2h(z)+\lambda + 2\pi i \right)^{2} & \text{if } z \in \D_a^{\downarrow}.
    \end{cases}
\end{equation}
Notice, \eqref{2h+lambda sqrt jump} and \eqref{helpfulLogJump} imply that both the upper and lower functions of the above map have the same continuous extension on $(L \cup \Sigma) \cap \D_a$. Therefore, the map \eqref{aux} is analytic in $\D_a$. Moreover, \eqref{nearEndpointsG0} shows that this map is nonvanishing in $\D_a$ except for a third-order zero at $z=a$. Thus, there exists a conformal map $\tau_a$ that maps $\D_a$ to a neighborhood of the origin, that satisfies
\begin{equation}
    (\tau_a(z))^3 = \begin{cases}
        \left( 2h(z)+\lambda - 2\pi i \right)^{2} & \text{if }z \in \D_a^{\uparrow}, \\
        \left( 2h(z)+\lambda + 2\pi i \right)^{2} & \text{if } z \in \D_a^{\downarrow},
    \end{cases}
\end{equation}
and that maps $\Sigma \cap \D_a$ to the negative real numbers.

\subsubsection{The Collapsing Deformation}
We deform jump contours of Riemann-Hilbert problems by the standard sectionally analytic substitution method (for more details see Section 4.3 in \cite{movingContours}). Our first deformation is to collapse the four jump rays of $\boldsymbol{M}(z)$ onto the line segment with endpoints $z=a$ and $z=b$. The resulting jump contour will be comprised of four unbounded rays and one line segment from $z=a$ to $z=b$. Two of the unbounded jump rays emanate from $z=a$ 
and approach $\infty$ at angles $\pm 2\pi  /3$, while the other  two rays emanate  from $z=b$ and approach $\infty$ at angles $\pm \pi  /3$. The jump along the deformed rays will be the same jump along the parent ray, while the new jump along the line segment is simply the product 
\begin{equation}
    \begin{bmatrix}
        1 & -e^{-ik\theta} \\
        0 & 1
    \end{bmatrix}
    \begin{bmatrix}
        1 & 0 \\
        e^{ik\theta} & 1
    \end{bmatrix}
    =
    \begin{bmatrix}
        0 & -e^{-ik\theta} \\
        e^{ik\theta} & 1
    \end{bmatrix}.
\end{equation}
Since the deformation occurred on a bounded set, the new resulting Riemann-Hilbert problem has the same normalization condition as Riemann-Hilbert \ref{RHPMScaled}. In fact, the new Riemann-Hilbert problem is:

\begin{RHP}\label{RHPN}
    For each $k \in \N$, find $\boldsymbol{N}(z;x,k)\equiv\boldsymbol{N}(z)$ so that 
    \begin{enumerate} 
        \item \textbf{Analyticity}: $\boldsymbol{N}(z)$ is analytic in $z$ except along the jump contours shown in Figure \ref{jumpsOfNGenusZero}.
    
        \item \textbf{Jump Condition:} $\boldsymbol{N}(z)$ can be continuously extended to the boundary and the boundary values taken by $\boldsymbol{N}(z)$ are related by the jump condition $\boldsymbol{N}_+(z) = \boldsymbol{N}_-(z)\boldsymbol{V}^{(\boldsymbol{N})}(z)$ as shown in Figure \ref{jumpsOfNGenusZero}.
        \begin{figure}[h]
            \centering
        \scalebox{.85}{
            \begin{tikzpicture}[]
            
                \draw[thick] (2.25,0) -- (30:  3.75cm);
                \draw[thick] (30: 3.75cm) -- (30: 7cm);
                \draw[very thick,->] (30: 3.9cm) -- (30: 3.92);
                \node(jumpq1) at (42: 4.7cm) {$\boldsymbol{L}_k(i\theta)$};

                \draw[thick] (-2.25,0) -- (150: 3.75cm);
                \draw[thick] (150: 3.75cm) -- (150: 7cm);
                \draw[very thick,->] (150: 3.9cm) -- (150: 3.92cm);
                \node(jumpq2) at (138: 4.7cm) {$\boldsymbol{L}_k^{-1}(i\theta)$};
    
                \draw[thick] (-2.25,0) -- (210: 3.75cm);
                \draw[thick] (210: 3.75cm)--(210: 7cm);
                \draw[very thick, ->] (210: 3.9cm) -- (210: 3.92cm);
                \node(jumpq3) at (-138: 4.7cm) {$\boldsymbol{U}_{k}(-i\theta)$};
                
                \draw[thick] (2.25,0) -- (330: 3.75cm);
                \draw[thick] (330: 3.75cm) -- (330: 7cm);
                \draw[very thick, ->] (330: 3.9cm) -- (330: 3.92cm);
                \node(jumpq4) at (-43: 4.7cm) {$\boldsymbol{U}_{k}^{-1}(-i\theta)$};
    
                \draw[thick] (-2.25,0) -- (2.25,0);
                \draw[very thick, ->] (0,0) -- (.01,0);
                \node(jumpatBand) at (0,1) {$\begin{bmatrix}
                    0 & -e^{-ik\theta} \\
                    e^{ik\theta} & 1
                \end{bmatrix}$};
    
                 \draw[fill=black] (-2.25,0) circle (.075cm);
                 \node(aLable) at (-2.5,0) {{\Large $a$}};
                 \draw[fill=black] (2.25,0) circle (.075cm);
                 \node(bLable) at (2.5,0) {{\Large$b$}};    
    
            \end{tikzpicture}
        }
        \caption{The jumps of $\boldsymbol{N}(z)$.}
        \label{jumpsOfNGenusZero}
        \end{figure}
    
        \item \textbf{Normalization:} As $z \to \infty$, the matrix $\boldsymbol{N}(z)z^{-k\sigma_3} = \I + \Oh\left(1/z\right)$.
    
    \end{enumerate}
\end{RHP}

Although there is some degree of freedom where we place the deformed jump rays, we require the following criteria:
\begin{itemize}
    \item In $\Ci \setminus (\D_a \cup \D_b)$,  the four unbounded jump contours avoid the zero-level set of $\Real(2h+\lambda)$. Such a deformation is guaranteed to exist by Axioms (\textbf{a2}) and (\textbf{a4}),
    \item In $\D_a$ the jump contour that has an upper-triangular jump is contained in $\tau_a^{-1}\left( \left( 0, \infty\right) \right)$,
    \item In $\D_a$ the jump contour that has a lower-triangular jump is contained in $\tau_a^{-1}\left(\rho e^{-2i/3}\right)$, $\rho >0$,
    \item In $\D_b$ the jump contour that has an upper-triangular jump is contained $\tau_b^{-1}\left( \left( 0, \infty\right) \right)$, 
    \item In $\D_b$ the jump contour that has a lower-triangular jump is contained in $\tau_b^{-1}\left(\rho e^{2i/3}  \right)$,  $\rho >0$.
\end{itemize}

\subsection{Opening of the Lens} The jump on the  line segment with endpoints  $z=a$ and $z=b$ has the following factorization:
\begin{equation} \label{lenFactorizationab}
    \begin{bmatrix}
        0 & -e^{-ik\theta} \\
        e^{ik\theta} & 1
    \end{bmatrix} = 
    \begin{bmatrix}
        1 & 0 \\
        -e^{ik\theta} & 1
    \end{bmatrix}
    \begin{bmatrix}
        0 & -e^{-ik\theta} \\
        e^{ik\theta} & 0
    \end{bmatrix}.
\end{equation} 
As such, we use sectionally analytic substitutions to open a lens. The upper boundary of the lens, i.e.\ the jump contour with an off-diagonal jump, is placed on $\Sigma$. For the lower boundary of the lens, near the endpoints $z=a$ and $z=b$ we have the following requirements:
\begin{itemize}
    \item In $\D_a$, the lower boundary of the lens, i.e.\ the jump contour with a lower-triangular jump, is contained in $\tau_a^{-1}\left( \rho e^{2\pi i /3} \right)$ for $\rho>0$,
    \item In $\D_b$, the lower boundary of the lens is contained in $\tau_b^{-1}\left( \rho e^{-2\pi i /3} \right)$ for $\rho>0$.
\end{itemize}
Outside of $\D_a \cup \D_b$ we only require the lower  boundary of the lens does not intersect the zero-level set of $\Real\left( 2h+\lambda \right)$. The existence of such a lens is guaranteed by Axiom \textbf{(a5)}. The opening lens deformation results in another matrix-valued function $\boldsymbol{O}(z)$. Since this deformation is on a bounded set, $\boldsymbol{O}(z)$ and $\boldsymbol{N}(z)$ satisfy the same normalization condition. Thus, $\boldsymbol{O}(z)$ solves the following Riemann-Hilbert problem:

\begin{RHP}
    For each $k \in \N$, find $\boldsymbol{O}(z;x,k)\equiv \boldsymbol{O}(z)$ so that 
    \begin{enumerate}
        \item \textbf{Analyticity}:
            $\boldsymbol{O}(z)$ is analytic in $z$ except along the jump contours shown in Figure  \ref{jumpsOfOGenusZero}.
        \item \textbf{Jump Condition:}
            $\boldsymbol{O}(z)$ can be continuously extended to the boundary and the boundary values taken by $\boldsymbol{O}(z)$ are related by the jump condition $\boldsymbol{O}_+(z)=\boldsymbol{O}_-(z)\boldsymbol{V}^{(\boldsymbol{O})}(z)$ as shown in Figure \ref{jumpsOfOGenusZero}.
            
            \begin{figure}[h]
                \centering
                \scalebox{1}{
                    \begin{tikzpicture}
            
                        \draw[thick] (2.25,0) -- (30:  3.75cm);
                        \draw[thick] (30: 3.75cm) -- (30: 7cm);
                        \draw[very thick,->] (30: 3.9cm) -- (30: 3.92);
                        \node(jumpq1) at (42: 4.7cm) {$\boldsymbol{L}_k(i\theta)$};

                        \draw[thick] (-2.25,0) -- (150: 3.75cm);
                        \draw[thick] (150: 3.75cm) -- (150: 7cm);
                        \draw[very thick,->] (150: 3.9cm) -- (150: 3.92cm);
                        \node(jumpq2) at (138: 4.7cm) {$\boldsymbol{L}_k^{-1}(i\theta)$};
            
                        \draw[thick] (-2.25,0) -- (210: 3.75cm);
                        \draw[thick] (210: 3.75cm)--(210: 7cm);
                        \draw[very thick, ->] (210: 3.9cm) -- (210: 3.92cm);
                        \node(jumpq3) at (-138: 4.7cm) {$\boldsymbol{U}_{k}(-i\theta)$};
                        
                        \draw[thick] (2.25,0) -- (330: 3.75cm);
                        \draw[thick] (330: 3.75cm) -- (330: 7cm);
                        \draw[very thick, ->] (330: 3.9cm) -- (330: 3.92cm);
                        \node(jumpq4) at (-43: 4.7cm) {$\boldsymbol{U}_{k}^{-1}(-i\theta)$};
            
                        \draw[thick] (-2.25, 0).. controls (-1.55,1.25) and (1.55,1.25) .. (2.25,0);
                        \draw[very thick, ->] (0,.95) -- (.001,.95);            
                        \node(offDia) at (90: 1.75cm) {$\begin{bmatrix}
                            0 & -e^{-ik\theta} \\
                            e^{ik\theta} & 0
                        \end{bmatrix}$};
            
                        \draw[thick] (-2.25,0).. controls (-.1,-.4) and (.1,-.4) .. (2.25,0);
                        \draw[very thick,->] (0,-.3) -- (.001, -.3);
                        \node(lowerLenJump) at (-90: .7cm) {$\boldsymbol{L}_{k}^{-1}(i\theta)$};
            
                         \draw[fill=black] (-2.25,0) circle (.075cm);
                         \node(aLable) at (-2.5,0) {$a$};
                         \draw[fill=black] (2.25,0) circle (.075cm);
                         \node(bLable) at (2.5,0) {$b$};    
            
                    \end{tikzpicture}
            }
            \caption{The jumps of $\boldsymbol{O}(z)$.}
            \label{jumpsOfOGenusZero}
            \end{figure}
        \item \textbf{Normalization:} As $z \to \infty$, the matrix $\boldsymbol{O}(z)z^{-k\sigma_3}= \I + \Oh\left(1/z\right)$.    
    \end{enumerate}
\end{RHP}

\subsection{Applying the g-function}

We now apply the $g$-function defined in Section \ref{the g-function section} (see \eqref{gofzFormulag0}) to the function $\boldsymbol{O}(z)$. We consider
\begin{equation}\label{defOfPkgenusZero}
    \boldsymbol{P}(z;x,k) \equiv \boldsymbol{P}(z)   :=
        e^{-\frac{k\lambda}{2}\sigma_3} 
        \boldsymbol{O}(z)
        e^{-k g(z)\sigma_3}
        e^{\frac{k \lambda}{2}\sigma_3},
\end{equation}
where $\lambda$ is defined in \eqref{defOfLambda}. Introducing the $g$-function simplifies both the jump and normalization conditions of the corresponding Riemann-Hilbert problem. Since $e^{g(z)}$ is analytic everywhere except on $\Sigma$, the deformation $\boldsymbol{O}(z) \to \boldsymbol{P}(z)$ does not introduce new jump contours or move any. It does, however, simplify the jump matrices. A formula for jump matrices of $\boldsymbol{P}(z)$, given the jump matrices of $\boldsymbol{O}(z)$, is
    \begin{equation}\label{jumpRecipeForPG0}
        \boldsymbol{V}^{(\boldsymbol{O})} =
        \begin{bmatrix}
            v_{11}^{(\boldsymbol{O})} & v_{12}^{(\boldsymbol{O})} \\
            v_{21}^{(\boldsymbol{O})} & v_{22}^{(\boldsymbol{O})}
        \end{bmatrix} 
        \Longrightarrow \boldsymbol{V}^{(\boldsymbol{P})} =
        \begin{bmatrix}
            v_{11}^{(\boldsymbol{O})} e^{-k(g_{+}-g_{-})} & 
            v_{12}^{(\boldsymbol{O})} e^{k(g_{+}+g_{-} -\lambda)} \\
            v_{21}^{(\boldsymbol{O})}e^{-k(g_{+}+g_{-}- \lambda)} & 
            v_{22}^{(\boldsymbol{O})} e^{k(g_{+}-g_{-})}
        \end{bmatrix}.
    \end{equation}
On one hand, $\boldsymbol{V}^{(\boldsymbol{O})}$ is highly oscillatory along $\Sigma$. On the other hand, due to the jump of $g$ on $\Sigma$ (see \eqref{jumpsofhg0}) we see that $\boldsymbol{V}^{(\boldsymbol{P})}$ is a constant off-diagonal matrix. For all of the other jump matrices, their corresponding jump contour is placed so that $\boldsymbol{V}^{(\boldsymbol{P})}$ decays to the identity matrix as $k \to \infty$. Indeed, if  $\boldsymbol{V}^{(\boldsymbol{P})}$ is an upper-triangular matrix, then the exponent of its $12$-entry is $-(2h+\lambda)$ and its jump contour is completely contained in a region of the $z$-plane where $\Real(2h+\lambda)>0$. Likewise, if  $\boldsymbol{V}^{(\boldsymbol{P})}$ is a lower-triangular matrix, then the exponent of its $21$-entry is $2h+\lambda$ and  its jump contour is completely contained in a region of the $z$-plane where $\Real(2h+\lambda)<0$ (see Figure \eqref{zeroLevelLinesGenusZeroAll} and \eqref{genusZeroJumpsP}). Finally, the normalization \eqref{gfunNormg0} of $g(z)$ simplifies the large-$z$ behavior of $\boldsymbol{P}(z)$. Therefore, $\boldsymbol{P}(z)$ is the solution to the following Riemann-Hilbert problem:
\begin{RHP}\label{RHPPGenus0}
    For each positive integer $k$, find $\boldsymbol{P}\left( z;x,k \right) \equiv \boldsymbol{P}(z)$ so that
    \begin{enumerate}
        \item \textbf{Analyticity}: $\boldsymbol{P}(z)$ is analytic in $z$ off the jumps contours shown in Figure \ref{genusZeroJumpsP}.
        \item \textbf{Jump Condition:} $\boldsymbol{P}(z)$ can be continuously extended to the boundary and the boundary values taken by $\boldsymbol{P}(z)$ are related by the jump condition $\boldsymbol{P}_+(z)=\boldsymbol{P}_-(z)\boldsymbol{V}^{(\boldsymbol{P})}(z)$, where $\boldsymbol{V}^{(\boldsymbol{P})}(z)$ is as shown in Figure \ref{genusZeroJumpsP}.
        \begin{figure}[h]
            \centering
            \scalebox{.875}{
                \begin{tikzpicture}[]
        
                    \draw[thick] (2.25,0) -- (30:  3.75cm);
                    \draw[thick] (30: 3.75cm) -- (30: 7cm);
                    \draw[very thick,->] (30: 3.9cm) -- (30: 3.92);
                    \node(jumpq1) at (42: 4.7cm) {$\boldsymbol{L}_k(2h+\lambda)$};

                    \draw[thick] (-2.25,0) -- (150: 3.75cm);
                    \draw[thick] (150: 3.75cm) -- (150: 7cm);
                    \draw[very thick,->] (150: 3.9cm) -- (150: 3.92cm);
                    \node(jumpq2) at (138: 4.7cm) {$\boldsymbol{L}_k^{-1}(2h+\lambda)$};
        
                    \draw[thick] (-2.25,0) -- (210: 3.75cm);
                    \draw[thick] (210: 3.75cm)--(210: 7cm);
                    \draw[very thick, ->] (210: 3.9cm) -- (210: 3.92cm);
                    \node(jumpq3) at (-138: 4.7cm) {$\boldsymbol{U}_{k}(-(2h+\lambda))$};
                    
                    \draw[thick] (2.25,0) -- (330: 3.75cm);
                    \draw[thick] (330: 3.75cm) -- (330: 7cm);
                    \draw[very thick, ->] (330: 3.9cm) -- (330: 3.92cm);
                    \node(jumpq4) at (-43: 4.7cm) {$\boldsymbol{U}_{k}^{-1}(-(2h+\lambda))$};
        
                    \draw[thick] (-2.25, 0).. controls (-1.55,1.25) and (1.55,1.25) .. (2.25,0);
                    \draw[very thick, ->] (0,.95) -- (.001,.95);            
                    \node(offDia) at (90: 1.75cm) {$\begin{bmatrix}
                        0 & -1 \\
                        1 & 0
                    \end{bmatrix}$};
        
                    \draw[thick] (-2.25,0).. controls (-.1,-.4) and (.1,-.4) .. (2.25,0);
                    \draw[very thick,->] (0,-.3) -- (.001, -.3);
                    \node(lowerLenJump) at (-90: .7cm) {$\boldsymbol{L}_{k}^{-1}(2h+\lambda)$};
        
                     \draw[fill=black] (-2.25,0) circle (.075cm);
                     \node(aLable) at (-2.5,0) {$a$};
                     \draw[fill=black] (2.25,0) circle (.075cm);
                     \node(bLable) at (2.5,0) {$b$};    
        
                \end{tikzpicture} }
            \caption{Jumps for $\boldsymbol{P}(z)$.}
            \label{genusZeroJumpsP}
        \end{figure}
        \item \textbf{Normalization:} As $z \to \infty$, the matrix  $\boldsymbol{P}(z) = \I + \Oh\left(1/z\right)$.
    \end{enumerate}
\end{RHP}
\subsection{The Outer Parametrix}

The phase function was constructed so that if $k$ is large, all of the the jump matrices (except for the one along $\Sigma$) of $\boldsymbol{P}(z)$ are close to the identity matrix. Ignoring these jumps we are led to the following Riemann-Hilbert problem:

\begin{RHP}\label{genusZeroModelRHP}
    Determine the matrix-valued function $\dotbold{P}^\out(z;x)\equiv \dotbold{P}^{\out}(z)$ with the following properties:
    \begin{enumerate}
        \item\textbf{Analyticity:} $\dotbold{P}^\out(z)$ is analytic in $z$ except on $\Sigma$. At the endpoints $a$ and $b$, negative one-fourth power singularities are allowed.
        \item\textbf{Jump Condition:} $\dotbold{P}^{\out}(z)$ can be continuously extended to the interior of $\Sigma$ and the boundary values taken by $\dotbold{P}^\out(z)$ are related by the jump condition 
        \begin{equation}\label{OdotJump}
            \dotbold{P}^\out_+(z) = \dotbold{P}^\out_-(z) \begin{bmatrix}
                    0 & -1 \\
                    1 & 0 
            \end{bmatrix}.
        \end{equation}
        \item \textbf{Normalization:} As $z \to \infty$, the matrix $\dotbold{P}^\out(z) = \I + \Oh\left(1/z\right)$.
    \end{enumerate}.    
\end{RHP}
It is well known that the unique solution to this Riemann-Hilbert problem is
\begin{equation}\label{outerParametrixSolution}
    \dotbold{P}^\out(z) = 
    \begin{bmatrix}
        \frac{1}{2}\Big(\beta(z) + \frac{1}{\beta(z)}\Big) & \frac{i}{2}\Big(\beta(z) - \frac{1}{\beta(z)}\Big) \\
        - \frac{i}{2}\Big(\beta(z) - \frac{1}{\beta(z)}\Big) & \frac{1}{2} \Big(\beta(z) + \frac{1}{\beta(z)}\Big)
    \end{bmatrix},
\end{equation}
where $\beta(z;x) \equiv \beta(z)$ denotes the function analytic for $z \in \Ci \setminus \Sigma$ that satisfies the conditions
\begin{equation}
    \beta(z)^4 = \left(\frac{z-b}{z-a}\right) \hspace{.5 cm} \text{and} \hspace{.5 cm} \beta(z) = 1 + \Oh(1/z) \text{ as } z \to \infty. 
\end{equation}
See, for example, Section 3.3 of \cite{Buckingham2014LargedegreeAO}.

\subsection{The Inner Parametrices}
The deformation $\boldsymbol{P}(z) \to \dotbold{P}^{\out}(z)$ was motivated by the fact that the jump contours we ignored decayed to the identity as $k \to \infty$. In light of the theory of small-norm Riemann-Hilbert problems, we need this decay to be uniform (with respect to $z$). However, near $z=a$ and $z=b$ the convergence is not uniform. We also see that $\dotbold{P}^{\out}(z)$ has quarter-root singularities at $a$ and $b$. Therefore, $\dotbold{P}^{\out}(z)$ is not a good approximation of $\boldsymbol{P}(z)$ near $z=a$ and $z=b$, and so we need to approximate Riemann-Hilbert Problem \ref{RHPPGenus0} in a different manner. In particular, in $\D_a$ and $\D_b$ we construct  Airy parametrices that satisfy the same jump conditions that $\boldsymbol{P}(z)$ satisfies. In Section \ref{innerParaSection} we give a systematic way to construct Airy parametrices and provide full details for one example.
\subsubsection{Local Parametrices at the Endpoints $z=a$ and $z=b$.}

Set
\begin{equation}
    \boldsymbol{H}^{(a)}(z) := \dotbold{P}^{\out}(z) e^{-k(h(z)+\lambda/2)\sigma_3}e^{\frac{k}{2}\tau_{a}^{3/2}\sigma_3}\boldsymbol{V}^{-1}\tau_{a}^{-  \sigma_3 /4}
\end{equation}
and
\begin{equation}
    \dotbold{P}^{(a)}:= \boldsymbol{H}^{(a)}(z)k^{-\sigma_3/ 6} \boldsymbol{A}\left( k^{2/3} \tau_a(z) \right) e^{-\frac{k}{2}\tau_{a}^{3/2}\sigma_3} e^{k(h(z)+\lambda/3)\sigma_3},
\end{equation}
where $\boldsymbol{A}(\cdot)$ is the well-known solution to Riemann-Hilbert Problem \ref{AiryRHP} in Appendix \ref{AiryAppendix} and $\boldsymbol{V}$ is defined in \eqref{VMatDef}.
In $\D_b$ set
\begin{equation}
    \boldsymbol{H}^{(b)}(z) := \dotbold{P}^{\out}(z) e^{-k(h(z)+\lambda/2)\sigma_3} e^{\frac{i\pi}{2}\sigma_3} e^{\frac{k}{2}\tau_{b}(z)^{3/2}\sigma_3}\boldsymbol{V}^{-1} \tau_{b}(z)^{-\sigma_3/4}
\end{equation}
and
\begin{equation}
    \dotbold{P}^{(b)}(z) := \boldsymbol{H}^{(b)}(z)k^{-\sigma_{3}/6}\boldsymbol{A}\left( k^{2/3} \tau_{b}(z)\right) e^{-\frac{k}{2} \tau_{b}(z)^{3/2}\sigma_3} e^{-\frac{i\pi}{2}\sigma_3} e^{k(h(z)+\lambda/2)\sigma_3}.
\end{equation}
For $p \in \left\{ a,b\right\}$, for each $z \in \partial \D_p$ as $k \to \infty$ the mismatch jumps satisfy
\begin{equation}\label{poleFreeMisMatch}
   \dotbold{P}^{(p)}(z) \dotbold{P}^{\out}(z)^{-1} = \I +\begin{bmatrix}
        \Oh\left( k^{-2} \right) & \Oh\left( k^{-1} \right) \\
        \Oh\left( k^{-1} \right) & \Oh\left( k^{-2} \right)
    \end{bmatrix}.
\end{equation}

\subsection{Error Analysis}

In this section we will quantify the error introduced by ignoring the jumps of $\boldsymbol{P}(z)$ that decayed to the identity as $k \to \infty$ by analyzing a small-norm Riemann-Hilbert problem. First, we will compose the global parametrix out of the local and outer parametrices. Then, we will set up the Riemann-Hilbert problem that characterizes the error between the global parametrix and $\boldsymbol{P}(z)$, the actual solution. Finally, using the theory of small-norm Riemann-Hilbert problems, we will show that our approximation of the right-hand side of \eqref{goodRiddance} is of order $1/k$.
\subsubsection{The Global Parametrix}

Now that we have constructed a parametrix in all parts of the $z$-plane, we are ready to define the global parametrix. Naturally, we set 
\begin{equation}\label{genusOneGlobalParametrix}
    \dotbold{P}(z;x,k) \equiv \dotbold{P}(z) = 
    \begin{cases}
        \dotbold{P}^{(a)}(z) & z \in \D_a,\\
        \dotbold{P}^{(b)}(z) & z \in \D_b,\\
        \dotbold{P}^\out(z)  & z \in \Ci \setminus \left(  \Sigma \cup \overline{\D_a \cup \D_b}    \right).
    \end{cases}
\end{equation}
Set the (unknown) error matrix to be the ratio between $\boldsymbol{P}(z)$ and $\dotbold{P}(z)$. That is,
\begin{equation}
    \boldsymbol{E}(z;x,k) \equiv \boldsymbol{E}(z) := \boldsymbol{P}(z) \dotbold{P}^{-1}(z).
\end{equation}
Notice, when $z$ is not on the jump contours of $\boldsymbol{P}(z)$ or $\dotbold{P}(z)$ then $\boldsymbol{E}(z)$ is a product of analytic functions, and hence is analytic. Thus, $\boldsymbol{E}(z)$ will have a jump discontinuity if either $z$ is on a jump contour of $\boldsymbol{P}(z)$ that is not $\Sigma$ or $z$ is on the boundary of $\D_a \cup \D_b$; see Figure \ref{JumpsofE}. Also, since both $\boldsymbol{P}(z)$ and $\dotbold{P}(z)$ converge to the identity as $z \to \infty$, $\boldsymbol{E}(z)$ converges to the identity as $z \to \infty$. Thus, we find that $\boldsymbol{E}(z)$ is the unique solution to the following Riemann-Hilbert problem:

\begin{RHP}\label{lazyGuy}
    For each $k \in \N$, find $\boldsymbol{E}(z)$ so that 
    \begin{enumerate}
        \item \textbf{Analyticity}:
            $\boldsymbol{E}(z)$ is analytic in $z$ except along the jump contours shown in Figure \ref{JumpsofE}.
        \item \textbf{Jump Condition:}
            $\boldsymbol{E}(z)$ can be continuously extended to the boundary and the boundary values taken by $\boldsymbol{E}(z)$ are related by the jump condition $\boldsymbol{E}_+(z)=\boldsymbol{E}_-(z)\boldsymbol{V}^{(\boldsymbol{E})}(z)$ as shown in Figure \ref{JumpsofE}.
            \begin{figure}[h]
                \centering
                \begin{tikzpicture}[]
                    
                    \draw[thick] (2.71971,.8828) -- (30:  3.75cm);
                    \draw[thick] (30: 3.75cm) -- (30: 7cm);
                    \draw[very thick,->] (30: 3.9cm) -- (30: 3.92);
                    \node(jumpq1) at (42: 4.7cm) {$\dotbold{P}\boldsymbol{L}_k(2h+\lambda)\dotbold{P}^{-1}$};

                    \draw[thick] (-2.71971,.8828)-- (150: 3.75cm);
                    \draw[thick] (150: 3.75cm) -- (150: 7cm);
                    \draw[very thick,->] (150: 3.9cm) -- (150: 3.92cm);
                    \node(jumpq2) at (136: 4.7cm) {$\dotbold{P}\boldsymbol{L}_k^{-1}(2h + \lambda)\dotbold{P}^{-1}$};
            
                    \draw[thick] (-2.71971,-.8828) -- (210: 3.75cm);
                    \draw[thick] (210: 3.75cm)--(210: 7cm);
                    \draw[very thick, ->] (210: 3.9cm) -- (210: 3.92cm);
                    \node(jumpq3) at (-135: 4.7cm) {$\dotbold{P}\boldsymbol{U}_{k}(-(2h+\lambda))\dotbold{P}^{-1}$};
                    
                    \draw[thick] (2.71971,-.8828)-- (330: 3.75cm);
                    \draw[thick] (330: 3.75cm) -- (330: 7cm);
                    \draw[very thick, ->] (330: 3.9cm) -- (330: 3.92cm);
                    \node(jumpq4) at (-47: 4.7cm) {$\dotbold{P}\boldsymbol{U}_{k}^{-1}(-(2h+\lambda))\dotbold{P}^{-1}$};

                     \node(Da) at (-2.25,0) {$\mathbb{D}_a$};
                     \draw[thick] (-2.25, 0) circle (1);
                     \node(DaJump) at (-1,.8) {$\boldsymbol{V}_{a}^{(\boldsymbol{E})}$};
                     \draw[very thick,->] (2.25,1) -- (2.25001,1); 
                     
                     \node(Db) at (2.25,0) {$\mathbb{D}_b$};
                     \draw[thick] (2.25,0) circle (1cm);
                     \node(DaJump) at (1.1,.75) {$\boldsymbol{V}_{b}^{(\boldsymbol{E})}$};
                     \draw[very thick,->] (-2.25,1) -- (-2.2499,1);

                     \draw[thick] (1.28*-.9941,1.28*-.1091).. controls (-.05,3.5*-.1091) and (.05,3.5*-.1091)..(-1.28*-.9941,1.28*-.1091);
                     \draw[very thick,->] (0,-.315) -- (.001, -.315);
                     \node(lowerLenJump) at (-91: 1.1cm) {$\dotbold{P}\boldsymbol{L}_{k}^{-1}(2h+\lambda)\dotbold{P}^{-1}$};
                \end{tikzpicture}
                \caption{The jumps of $\boldsymbol{E}(z)$.}
                \label{JumpsofE}
            \end{figure}
        \item \textbf{Normalization:} As $z \to \infty$, the matrix $\boldsymbol{E}(z)= \I + \Oh\left(1/z\right)$.    
    \end{enumerate}
\end{RHP}

Notice, not only does $\boldsymbol{E}(z)$ normalize to the identity matrix, but all of its jumps are close to the identity for large $k$. Indeed, due to the sign of $\Real(2h+\lambda)$, all of the triangular matrices in Figure \ref{JumpsofE} are exponentially close to the identity (and clearly conjugating by the bounded matrix $\dotbold{P}$ does not cause growth). Further, the mismatch jumps on the boundaries of $\D_a$ and $\D_b$ have been shown to go like $\I + \Oh\left( 1/k \right)$ as $k \to \infty$; see \eqref{poleFreeMisMatch}. The upshot is that, according to the theory of small-norm Riemann-Hilbert problems, the following proposition is true. 

\begin{proposition}
    Given that $x$ is in the pole-free region of the $x$-plane, as $k \to \infty$
    \begin{equation}\label{Error Result}
        \boldsymbol{E}(z;x,k) = \I + \Oh(1/k).
    \end{equation}
\end{proposition}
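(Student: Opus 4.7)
The plan is to treat $\boldsymbol{E}(z)$ as the unknown in a small-norm Riemann-Hilbert problem. The constructions of Sections \ref{GenusZeroAnalysis} through Section 3.7 were engineered precisely so that the jump matrix $\boldsymbol{V}^{(\boldsymbol{E})}(z)$ is uniformly close to the identity on each component of its contour $\Sigma_E$ in Figure \ref{JumpsofE}. Once I establish $\|\boldsymbol{V}^{(\boldsymbol{E})}-\I\|_{L^\infty(\Sigma_E)\cap L^2(\Sigma_E)}=\Oh(1/k)$, the standard machinery immediately yields $\boldsymbol{E}(z)=\I+\Oh(1/k)$.

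First I would estimate the triangular jumps on the four unbounded rays and on the lower lens boundary. Each such jump has the form $\dotbold{P}\boldsymbol{L}_k^{\pm 1}(2h+\lambda)\dotbold{P}^{-1}$ or $\dotbold{P}\boldsymbol{U}_k^{\pm 1}(-(2h+\lambda))\dotbold{P}^{-1}$. By the placement dictated in Sections 3.4 and 3.5 (guaranteed by axioms \textbf{(a2)}, \textbf{(a4)}, \textbf{(a5)} of the pole-free region), these contours sit strictly on the favorable side of the zero-level set of $\Real(2h+\lambda)$, so there exists $\delta>0$ such that $|\Real(2h+\lambda)|\geq\delta$ uniformly on the portion of the lens outside $\mathbb{D}_a\cup\mathbb{D}_b$, and $|\Real(2h+\lambda)|$ grows cubically at infinity along the unbounded rays thanks to the dominant term $i\theta(z)/2$ in $h(z)$. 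Consequently the off-identity entries of $\boldsymbol{V}^{(\boldsymbol{E})}-\I$ decay like $e^{-k\delta}$ on the lens and like $e^{-ck|z|^3}$ on the rays, uniformly in $z$. Since $\dotbold{P}=\dotbold{P}^{\out}$ on these pieces, and $\dotbold{P}^{\out}$ together with its inverse is bounded away from the disks $\mathbb{D}_a\cup\mathbb{D}_b$ (where the quarter-root singularities live), the conjugation by $\dotbold{P}$ preserves this decay. On the circles $\partial\mathbb{D}_a$ and $\partial\mathbb{D}_b$, the mismatch estimate \eqref{poleFreeMisMatch}, which comes from the large-argument asymptotics of the Airy parametrix, gives $\boldsymbol{V}_a^{(\boldsymbol{E})}(z)-\I$ and $\boldsymbol{V}_b^{(\boldsymbol{E})}(z)-\I$ equal to $\Oh(1/k)$ uniformly in $z$. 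Combining these contributions, the cubic decay on the rays makes every $L^p$ norm of $\boldsymbol{V}^{(\boldsymbol{E})}-\I$ convergent, and the dominant contribution comes from the circles, yielding $\|\boldsymbol{V}^{(\boldsymbol{E})}-\I\|_{L^\infty\cap L^2\cap L^1(\Sigma_E)}=\Oh(1/k)$.

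Next I would invoke the standard small-norm theory (see, for instance, Section 7 of \cite{Deift1993ASD}). Recasting the Riemann-Hilbert problem for $\boldsymbol{E}$ as the singular integral equation $(\mathrm{I}-\mathcal{C}_{-}^{\boldsymbol{V}^{(\boldsymbol{E})}-\I})\boldsymbol{\mu}=\I$ in $L^2(\Sigma_E)$, where $\mathcal{C}_{-}$ denotes the minus-boundary Cauchy projector along $\Sigma_E$, the boundedness of $\mathcal{C}_{-}$ on $L^2$ together with the $L^\infty$ estimate implies the perturbation operator has operator norm $\Oh(1/k)$. Hence its Neumann series converges for $k$ sufficiently large and produces $\boldsymbol{\mu}$ with $\|\boldsymbol{\mu}-\I\|_{L^2(\Sigma_E)}=\Oh(1/k)$. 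The Plemelj representation
\begin{equation}
    \boldsymbol{E}(z)=\I+\frac{1}{2\pi i}\int_{\Sigma_E}\frac{\boldsymbol{\mu}(w)\bigl(\boldsymbol{V}^{(\boldsymbol{E})}(w)-\I\bigr)}{w-z}\,\dd w
\end{equation}
followed by a Cauchy-Schwarz estimate on the product of $\boldsymbol{\mu}$ and $\boldsymbol{V}^{(\boldsymbol{E})}-\I$ then delivers $\boldsymbol{E}(z)-\I=\Oh(1/k)$ uniformly for $z$ in compact subsets of $\Ci\setminus\Sigma_E$, which is exactly \eqref{Error Result}.

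The main obstacle is not in the final small-norm step, which is entirely routine, but rather in ensuring all the preceding estimates are uniform in $z$. In particular, one must verify that the distance from the deformed contours to the zero-level set of $\Real(2h+\lambda)$, the modulus of $\dotbold{P}^{\out}$ on the lens and rays, and the implicit constants governing the Airy matching on $\partial\mathbb{D}_a$ and $\partial\mathbb{D}_b$ can all be controlled simultaneously. These uniformities follow from the continuous dependence of $a(x),b(x),\lambda(x)$, the conformal maps $\tau_a,\tau_b$ of Section \ref{conformalMapSection}, and the zero-level topology on $x$ throughout the pole-free region, as set up in axioms \textbf{(a1)}--\textbf{(a5)}. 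Once assembled, the estimate \eqref{Error Result} follows.
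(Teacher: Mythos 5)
Your proposal is correct and follows the same route the paper takes: the paper's own ``proof'' consists of noting that the triangular jumps decay exponentially by the sign of $\Real(2h+\lambda)$, that the disk mismatches are $\I+\Oh(1/k)$ by the Airy matching \eqref{poleFreeMisMatch}, and then invoking ``the theory of small-norm Riemann-Hilbert problems'' without further elaboration. Your write-up simply fills in the standard details of that theory (the $L^\infty\cap L^2\cap L^1$ bound on $\boldsymbol{V}^{(\boldsymbol{E})}-\I$, the Cauchy-operator/Neumann-series argument, and the Plemelj representation), so there is no substantive difference in approach.
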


\subsubsection{Tracking the Error Term in the Extraction Formula }
We recall the transformations and deformations we employed to arrive at $\dotbold{P}$.
\begin{equation}
    \begin{tikzcd}[ampersand replacement=\&]
        \boldsymbol{M}(z) \arrow[r] \& \boldsymbol{N}(z) \arrow[l] \arrow[r]\& \boldsymbol{O}(z) \arrow[l] \arrow[r] \& \boldsymbol{P}(z) \arrow[l] \arrow[r] \& \dotbold{P}(z)
        \end{tikzcd}
\end{equation}
We want to approximate $\boldsymbol{M}(z)$ in terms of the explicit function $\dotbold{P}(z)$. Towards this goal we start off by observing that the transformations $\boldsymbol{M}(z) \to \boldsymbol{N}(z) \to \boldsymbol{O}(z)$ all occur in a bounded subset of the $z$-plane. Hence, for large enough $z$, we have $\boldsymbol{M}(z) = \boldsymbol{N}(z) = \boldsymbol{O}(z)$. Thus, inverting  \eqref{defOfPkgenusZero}, we find for large $z$,
\begin{equation}
    \boldsymbol{M}(z) = e^{\frac{k\lambda}{2}\sigma_3}\boldsymbol{P}(z)e^{kg(z)\sigma_3} e^{-\frac{k\lambda}{2}\sigma_3} = e^{\frac{k\lambda}{2}\sigma_3}\boldsymbol{E}(z) \dotbold{P}(z)e^{kg(z)\sigma_3} e^{-\frac{k\lambda}{2}\sigma_3} .
\end{equation}
Now, write the large-$z$ expansions for $\boldsymbol{E}(z)$ and $\dotbold{P}(z)$ as
\begin{equation}\label{jiveSamba}
    \boldsymbol{E}(z)= \I + \frac{\boldsymbol{E}_{-1}}{z} + \frac{\boldsymbol{E}_{-2}}{z^2} + \Oh\left(\frac{1}{z^3}\right) \hspace{.25cm} \text{and} \hspace{.25cm} \dotbold{P}(z) = \I + \frac{\dotbold{P}_{-1}}{z} + \frac{\dotbold{P}_{-2}}{z^2} + \Oh\left(\frac{1}{z^3}\right).
\end{equation}
Also, using \eqref{gfunNormg0} we find  there exists some $ g_{-1}(x) \equiv g_{-1} \in \Ci$ such that as, $z \to \infty$,
\begin{equation}
    g(z) = \log(z) + \frac{g_{-1}}{z} + \Oh\left( 1/z^2 \right).
\end{equation} 
This implies as, $z \to \infty$,
\begin{equation}\label{gfunLargezExpan}
    e^{kg(z)\sigma_3} z^{-k\sigma_3} = \I + \frac{kg_{-1}\sigma_3}{z} +\begin{bmatrix}
        \Oh\left( 1/z^2 \right) & \Oh\left( 1/z^3 \right) \\
        \Oh\left( 1/z^3 \right) & \Oh\left( 1/z^2 \right)
    \end{bmatrix}.
\end{equation}
In view of \eqref{goodRiddance}, we extract $[\boldsymbol{M}_{-1}]_{22}$, $[\boldsymbol{M}_{-1}]_{12}$, and $[\boldsymbol{M}_{-2}]_{12}$ and find
\begin{align}
    \begin{split}
        [\boldsymbol{M}_{-1}]_{22} &= [\boldsymbol{E}_{-1}]_{22} + [\dotbold{P}_{-1}]_{22} - kg_{-1}, \\
        [\boldsymbol{M}_{-1}]_{12} &= \left( [\boldsymbol{E}_{-1}]_{12} + [\dotbold{P}_{-1}]_{12} \right) e^{k\lambda}, \\
        [\boldsymbol{M}_{-2}]_{12} &= e^{k\lambda} \Big([\boldsymbol{E}_{-1}]_{11}[\dotbold{P}_{-1}]_{12} + [\boldsymbol{E}_{-1}]_{12}[\dotbold{P}_{-1}]_{22} - \\
        & \hspace{2.25 cm} \left( [\boldsymbol{E}_{-1}]_{12} +[\dotbold{P}_{-1}]_{12}\right)kg_{-1} + [\boldsymbol{E}_{-2}]_{12} + [\dotbold{P}_{-2}]_{22}\Big).
    \end{split}
\end{align}
Therefore,
\begin{multline}\label{pk asymptotic Formula in g0}
    p_k\left( \left(\frac{k}{2}\right)^{2/3} x\right) = 2i \left( \frac{k}{2} \right)^{1/3} \Bigg( [\boldsymbol{E}_{-1}]_{22} + [\dotbold{P}_{-1}]_{22} - \\ 
    \frac{[\boldsymbol{E}_{-1}]_{11}[\dotbold{P}_{-1}]_{12} + [\boldsymbol{E}_{-1}]_{12}[\dotbold{P}_{-1}]_{22}+[\boldsymbol{E}_{-2}]_{12} + [\dotbold{P}_{-2}]_{12}}{[\boldsymbol{E}_{-1}]_{12} + [\dotbold{P}_{-1}]_{12}} \Bigg).
\end{multline}
A direct calculation finds that
\begin{equation}\label{P22Genus0}
    [\dotbold{P}(z)]_{22} = \frac{1}{2} \left(\beta(z)+ \frac{1}{\beta(z)}\right) = 1 + \Oh\left(z^{-2}\right)
\end{equation}
and
\begin{equation}\label{P12Genus0}
    [\dotbold{P}(z)]_{12} = \frac{i}{2} \left(\beta(z)-\frac{1}{\beta(z)}\right) = \frac{i(a-b)}{4 z} + \frac{i(a^2-b^2)}{8 z^2} + \Oh\left(z^{-3}\right).
\end{equation}
We are now ready to prove Theorem \ref{genus0 theorem}.

\begin{proof}[Proof of Theorem \ref{genus0 theorem}]
    Using \eqref{Error Result}, we can write \eqref{pk asymptotic Formula in g0} as
    \begin{equation}
        p_k\left( \left(\frac{k}{2}\right)^{2/3} x\right) = 2i \left( \frac{k}{2} \right)^{1/3} \left( [\dotbold{P}_{-1}(x)]_{22} - \frac{[\dotbold{P}_{-2}(x)]_{12}+\Oh\left( 1/k \right)}{[\dotbold{P}_{-1}(x)]_{12}+ \Oh\left( 1/k \right) } \right).
    \end{equation}
    Further, looking at \eqref{P12Genus0} we have that
    \begin{equation}\label{fractionalFormulapk}
        [\dotbold{P}_{-1}(x)]_{12} = \frac{i(a(x)-b(x))}{4} = - \frac{i\Delta(x)}{4}.
    \end{equation}
    Notice that $\Delta(x)= \sqrt{-4i/S(x)}$ never vanishes in the pole-free region of the $x$-plane because $S(x)$ is analytic (and hence pole free) in the pole-free region. Therefore, the quotient in \eqref{fractionalFormulapk} never blows up and we can write
    \begin{equation}\label{noidea1}
        p_k\left( \left(\frac{k}{2}\right)^{2/3} x\right) = 2i \left( \frac{k}{2} \right)^{1/3} \left( [\dotbold{P}_{-1}(x)]_{22} - \frac{[\dotbold{P}_{-2}(x)]_{12}}{[\dotbold{P}_{-1}(x)]_{12}} + \Oh(1/k) \right). 
    \end{equation}
    Now, using \eqref{P22Genus0} and \eqref{P12Genus0} to express the right-hand side of \eqref{noidea1} in terms of $S$ we find
    \begin{equation}
        p_k\left( \left(\frac{k}{2}\right)^{2/3} x\right) = -2i \left( \frac{k}{2} \right)^{1/3} \left( \frac{S(x)}{2} + \Oh(1/k) \right).
    \end{equation}
    In view of \eqref{AnnoyingScalingFormula} and \eqref{Scalings} we obtain,
    \begin{equation}
        -(2k)^{-1/3}u^{(\alpha)}_{\HM}\left( - \frac{k^{2/3}}{2^{1/3}}x\right)= -i \frac{S(x)}{2} + \Oh\left( 1/k \right),
    \end{equation}
    our desired result.
\end{proof}

\section{Analysis in the Pole Region} \label{GenusOneAnalysis}
We now begin our analysis in the pole region.
\subsection[]{The G-function} \label{The G-function pole section}
    Motivated by the bifurcation of the zero-level set  of $\Real(2h+\lambda)$ as $x$ leaves the pole-free region (see Figure \ref{zeroLevelLinesGenusZeroAll}), we aim to deform Riemann-Hilbert Problem \ref{RHPMScaled} so that its jump matrices decay to the identity as $k \to \infty$ along all of its jump contours except for two bands and one gap (instead of one band as in the pole-free case). As we will have two bands, we will have four endpoints: $A(x)\equiv A$, $B(x)\equiv B$, $C(x) \equiv C$, and $D(x) \equiv D$. The normalization of the soon-to-be-defined $G$-function will be dictated by the endpoints. In order to enforce our desired normalization we require
\begin{align}
    \begin{split}\label{g1MomentConditions}
    \s_1 &:= A+B+C+D = 0, \\
    \s_2 &:= AB+AC+AD+BC+BD+CD =x/2, \\
    \s_3 &:= ABC+ACD+BCD =-i. 
    \end{split}
\end{align}
These conditions give us six real equations to determine the endpoints. The remaining two real equations will be derived from what are commonly known as the Boutroux conditions. To state these conditions we need to introduce the function $R(z;x)$. Temporarily set $\Sigma_1$ as the line segment with endpoints $A$ and $B$, $\Gamma$ as the line segment with endpoints $B$ and $C$, and $\Sigma_2$ as the line segment with endpoints $C$ and $D$. Now, let $R(z;x) \equiv R(z)$ denote the function analytic for $z \in \Ci \setminus (\Sigma_1 \cup \Sigma_2)$ that satisfies the conditions
\begin{equation}\label{defOfRGenusOne}
    R(z)^{2} = (z-A)(z-B)(z-C)(z-D) \hspace{.33cm} \text{and} \hspace{.33cm} R(z) = z^2 + \Oh(z) \text{ as } z \to \infty.
\end{equation}
Also, the moment conditions \eqref{g1MomentConditions} imply $R(z)$ has the large-$z$ expansion 
\begin{equation}
    R(z) = z^2 + \frac{x}{4} + \frac{i}{2z} + \Oh\left( 1/z^2\right).
\end{equation}
The Boutroux conditions are
\begin{equation} \label{Boutroux}
    \text{Im}\left(\int\limits_{\Sigma_1}R_{+}(w) \dd w\right) = 0 \hspace{.5cm} \text{and} \hspace{.5cm} \Imag\left(\int\limits_{\Gamma}R(w) \dd w\right) = 0.
\end{equation}

\subsubsection{Definition and Properties of the G-function}
As in the pole-free case, we will define the $G$-function as an antiderivative. Let
\begin{equation}\label{G1Gprimeintegral}
    G'(z;x) \equiv G'(z) = \frac{R(z)}{2\pi i} \int\limits_{\Sigma_1\cup \Sigma_2}  \frac{i \theta'(w)}{R_+(w)(w-z)}\dd w,
\end{equation}
where the path of integration on $\Sigma_1 \cup \Sigma_2$ goes from $A$ to $B$ and then from $C$ to $D$. The moment conditions \eqref{g1MomentConditions}  guarantee that $G'(z) = 1/z + \Oh\left( 1/z^2 \right)$ as $z \to \infty$. Thus, $G'(z)- 1/(z-A)$ is integrable at $z=\infty$. Integrating the latter expression introduces a logarithmic cut. Let $L$ denote an oriented, unbounded arc from  $z=\infty$ to $z=A$. We further require that $L$ only intersects $\Sigma_1$ at $z=A$, avoids $\Gamma \cup \Sigma_2$, and coincides with the negative real numbers for large enough $z$. We now define $G(z)$ as an integral where the path of integration can be any path that does not cross $\Sigma_1 \cup \Gamma \cup \Sigma_2$. Set
\begin{equation}\label{defofG pole}
    G(z;x) \equiv G(z) := \log(z-A) + \int_{\infty}^{z} G'(w) - \frac{1}{w - A} \dd w.
\end{equation}
For notational convenience, we write
\begin{equation}\label{defofH pole}
    H(z;x) \equiv H(z) := \frac{i\theta(z)}{2}- G(z).
\end{equation}

Analogously to the pole-free case, the zero-level lines of $\Real\left( 2H(z)+\Lambda \right)$ are independent of the placement of the curves $\Sigma_1$, $\Gamma$, and $\Sigma_2$ given the three curves are finite and simple, and $\Sigma_1$ connects $A$ to $B$, $\Gamma$ connects $B$ to $C$, and $\Sigma_2$ connects $C$ to $D$. It is convenient to make the following selections:
\begin{itemize}
    \item Let $\Sigma_1$ coincide with the zero-level arc of $\Real\left( 2H+\Lambda \right)$ that connects $A$ to $B$.
    \item Let $\Gamma$ coincide with the zero-level arc of $\Real\left( 2H+\Lambda \right)$ that connects $B$ to $C$.
    \item Let $\Sigma_2$ coincide with the zero-level arc of $\Real\left( 2H+\Lambda \right)$ that connects $C$ to $D$.
\end{itemize} 

In the same vein as in the pole-free case, the leading error in our approximation will propagate from the endpoints. As such, we will need to do local analysis at each endpoint. For each $p \in \{A,B,C,D\}$, let $\D_p$ denote a sufficiently small disk around $z=p$.
For the points $A,B,C$ we will need to further specify certain regions of $\D_p$. In particular, we make the following partitions (also see Figure \ref{AlltheCircles}):
\begin{itemize}
    \item For $\D_{A}$, notice $L \cup \Sigma_1$ divides $\D_A$ into two open components. Let $\D^{\uparrow}_{A}$ denote the open component that is on the left side of $L \cup \Sigma_1$, and set $\D_A^{\downarrow}:= \D_A \setminus \overline{\D^{\uparrow}_{A}}$.
    \item For $\D_{B}$, notice $\Sigma_1 \cup \Gamma$ divides $\D_B$ into two open components (recall that $\Sigma_1$ and $\widetilde{\Sigma}_1$ are the same contours with opposite orientations). Let $\D^{\uparrow}_{B}$ denote the open component that is on the left side of $\Sigma_1 \cup \Gamma$, and set $\D^{\downarrow}_{B}:=\D_{B} \setminus \overline{\D^{\uparrow}_{B}}$.
    \item For $\D_{C}$, notice $\Gamma \cup \Sigma_2$ divides $\D_C$ into two components. Let $\D_{C}^{\uparrow}$ denote the open component that is on the left side of $\Gamma \cup \Sigma_2$, and set $\D_{C}^{\downarrow} :=\D_C \setminus \overline{\D^{\uparrow}_{C}}$.    
    
\end{itemize}

We now list the basic properties of $G$ and $H$ that we will use throughout the Riemann-Hilbert analysis.
\begin{proposition}\label{G and H fun prop}
    For each $x$ in the pole region of the $x$-plane, given that the moment conditions \eqref{g1MomentConditions} and the Boutroux conditions \eqref{Boutroux} are satisfied, guaranteeing that $G(z)$ and $H(z)$ are well-defined by \eqref{defofG pole} and \eqref{defofH pole} respectively, then the following statements hold:
    \begin{itemize}
        \item There exists a complex number $\Lambda(x) \equiv \Lambda$ such that 
        \begin{equation} \label{sigma1Jump}
        H_+(z) + H_{-}(z) = i\theta(z) - G_{+}(z) - G_{-}(z) = - \Lambda \hspace{.5 cm} \textit{for each } z \in \Sigma_1.    
        \end{equation}
        \item There exists a real number $\omega(x) \equiv \omega \in \R$ such that 
        \begin{equation}\label{gammaJump}
            H_{+} - H_{-} = - (G_{+}-G_{-}) = -i\omega \hspace{.5 cm} \textit{for each } z \in \Gamma.  
        \end{equation}
        \item  There exists a real number $\Omega(x) \equiv \Omega \in \R$ such that 
        \begin{equation}\label{sigma2Jump}
            H_+(z) + H_{-}(z) = i\theta(z) - G_{+}(z) - G_{-}(z) = - \Lambda -i \Omega \hspace{.5 cm} \textit{for each } z \in \Sigma_2.    
        \end{equation}
        \item On the logarithmic cut $L$,
        \begin{equation}
            G_{+}(z) - G_{-}(z) = -2\pi i.
        \end{equation}
        \item  As $z \to \infty$,
        \begin{equation}
            G(z) = \log(z) + \Oh(1/z).
        \end{equation}
        \item At each endpoint, under the right perturbation, $2H(z)+i\Lambda$ locally looks like a $3/2$-root function. More precisely, if we consider the functions
        \begin{equation}
            J_{A}(z):=   \begin{cases}
                2H(z)+\Lambda - 2\pi i, & z \in \D_A^{\uparrow}, \\
                2H(z)+\Lambda + 2\pi i, & z \in \D_A^{\downarrow},
            \end{cases}
        \end{equation}
        \begin{equation}
            J_{B}(z) :=  \begin{cases}
                2H(z)+\Lambda + i\omega, & z \in \D^{\uparrow}_{B}, \\
                2H(z)+\Lambda-i\omega, & z \in \D^{\downarrow}_{B},
               \end{cases}
        \end{equation}
        \begin{equation}
            J_{C}(z) := \begin{cases}
                2H(z) + \Lambda  +i\omega + i\Omega, & z \in \D^{\uparrow}_C,\\
                2H(z) + \Lambda  -i\omega + i\Omega & z \in \D^{\downarrow}_C,
            \end{cases}
        \end{equation}
        and
        \begin{equation}
            J_{D}(z):= 2H(z) + \Lambda + i\Omega,
        \end{equation}
        then for each endpoint $p \in \left\{ A,B,C,D \right\}$ there exists $\kappa_{p} \in \Ci$ such that
        \begin{equation}\label{2H near endpoint}
           J_{p}(z) = \kappa_{p}(z-p)^{3/2} + \Oh\left( (z-p)^{5/2}\right) \hspace{.33 in} \text{for } z \in \D_{p}.
        \end{equation}
    \end{itemize} 
\end{proposition}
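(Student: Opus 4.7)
My plan is to extract the jump structure of $G$ from that of its derivative $G'$, using Plemelj--Sokhotski applied to the Cauchy-type representation \eqref{G1Gprimeintegral}, and then to read off the jumps of $H = i\theta/2 - G$ by subtraction. The large-$z$ and local (endpoint) behaviors will come from the asymptotic and local expansions of $G'$. I factor $G'(z) = R(z)F(z)$ with $F(z) := \frac{1}{2\pi i}\int_{\Sigma_1\cup\Sigma_2}\frac{i\theta'(w)/R_+(w)}{w-z}\,dw$. On either band, the Plemelj relation $F_+ - F_- = i\theta'/R_+$ combined with $R_+ = -R_-$ immediately yields $G'_+ + G'_- = i\theta'$. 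Integrating this identity along $\Sigma_1$ produces $G_+(z) + G_-(z) = i\theta(z) + c_1$ for a constant $c_1$, and setting $\Lambda := -c_1$ gives \eqref{sigma1Jump}. The same integration on $\Sigma_2$ produces a (possibly different) constant $c_2$; writing $c_2 =: -\Lambda - i\Omega$ supplies \eqref{sigma2Jump}, with the reality of $\Omega$ deferred.

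Across $\Gamma$ the derivative $G'$ extends analytically (neither $R$ nor the Cauchy integral $F$ jumps there), so $G_+(z)-G_-(z)$ is constant on $\Gamma$ and equals a period integral $\oint_\gamma G'(w)\,dw$ around a loop $\gamma$ enclosing $\Sigma_1$. Converting this period, via $R_+ = -R_-$ on $\Sigma_1$, into a double contour integral against the Cauchy kernel and then deforming the outer contour to infinity, I collect the residues at $w=z$ and $w=\infty$ and reduce the period to an explicit combination of $2\pi$-terms plus a multiple of $\int_{\Sigma_1}R_+(w)\,dw$, which is manifestly imaginary by the first Boutroux condition in \eqref{Boutroux}. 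This gives $G_+ - G_- = i\omega$ with $\omega \in \R$ and hence \eqref{gammaJump}. A parallel computation on the $\Gamma$-cycle of $G'$, together with the second Boutroux condition, establishes $\Omega \in \R$. The logarithmic jump on $L$ is immediate from the $\log(z-A)$ term in \eqref{defofG pole} and the analyticity of the remaining integrand off $\Sigma_1\cup\Gamma\cup\Sigma_2\cup L$, and $G(z) = \log(z) + O(1/z)$ follows by antidifferentiating the expansion $G'(z) = 1/z + O(1/z^2)$ forced by the moment conditions \eqref{g1MomentConditions}.

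For the local $3/2$-root behavior \eqref{2H near endpoint} I would again use $G'(z) = R(z)F(z)$: near an endpoint $p$ we have $R(z) = (z-p)^{1/2}\tilde R(z)$ with $\tilde R$ analytic and nonvanishing at $p$, while $F$ is continuous at $p$ because the density $i\theta'/R_+$ has only the integrable singularity $(w-p)^{-1/2}$. Expanding $F$ in a local Taylor series about $p$ and antidifferentiating $(z-p)^{1/2}$ times an analytic function produces a leading $(z-p)^{3/2}$ contribution to $G(z)$ (plus a $\log(z-p)$ piece when $p=A$), and hence to $2H(z)$. The piecewise constants $\pm 2\pi i$, $\pm i\omega$, $\pm i\Omega$ in the definitions of $J_A,J_B,J_C,J_D$ are designed precisely to absorb the multivaluedness of $2H(z)+\Lambda$ on the sectors of $\D_p$ induced by $L$, $\Gamma$, and $\Sigma_2$ crossing those disks; once this is done, $J_p$ is single-valued in $\D_p$ and reduces to $\kappa_p(z-p)^{3/2} + O((z-p)^{5/2})$, with $\kappa_p \ne 0$ generically (because $\tilde R(p)\ne 0$ and $F(p)\ne 0$).

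I expect the main obstacle to be the period computation behind \eqref{gammaJump} and the reality of $\Omega$, since this is precisely where the Boutroux conditions \eqref{Boutroux} are used essentially: orientations of $\Sigma_1$, $\Sigma_2$, $\Gamma$, and of the enclosing cycles must be tracked carefully, the double contour integral must be correctly reduced to a single-period integral, and residues at infinity must be collected with the correct signs. Once the period computation is in place, the other items are routine Plemelj--Sokhotski manipulations, explicit differentiation of $\log(z-A)$, and local asymptotics of a Cauchy integral.
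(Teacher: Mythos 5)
The paper offers no proof of this proposition, stating only that ``These statements are standard and can be proved by direct computation,'' so there is no authorial argument to compare against; your Plemelj/period-integral reconstruction is the standard one and is sound in outline. Two small cleanups are worth making. First, near $z=A$ there is no genuine $\log(z-p)$ contribution to $G$: since $G'(z)=R(z)F(z)$ with $R\sim(z-A)^{1/2}$ and the Cauchy transform $F\sim\text{const}\cdot(z-A)^{-1/2}$, $G'$ is \emph{bounded} at $A$, so $\int_\infty^z\bigl(G'-\tfrac{1}{w-A}\bigr)dw$ produces a $-\log(z-A)$ that exactly cancels the explicit $\log(z-A)$ in \eqref{defofG pole}; what survives near $A$ is just a $\pm2\pi i$ discontinuity across $L$, which the piecewise constants in $J_A$ absorb, as you correctly conclude in the end. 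Second, in the period computation for $\omega$, what the first Boutroux condition gives is that $\int_{\Sigma_1}R_+(w)\,dw$ is \emph{real}; the period itself is $-2i$ times $\oint_\gamma R$ (plus the $2\pi i$ contribution from the crossing of $L$, which you do flag), and hence purely imaginary — the ``manifestly imaginary'' clause should attach to that product rather than to $\int_{\Sigma_1}R_+$. The analogous calculation for $\Omega$ uses $\int_\Gamma R$ and the second Boutroux condition, as you say. One computational shortcut you might adopt: the moment conditions force $H'(z)=2iR(z)$ identically (both sides share the same sign-change across $\Sigma_1\cup\Sigma_2$, have at worst square-root behavior at the endpoints, and their difference is $O(1/z^2)$, so the Liouville argument kills it), which turns every period and endpoint calculation into an explicit integral of $R$ and avoids the double-contour manipulation entirely.
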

These statements are standard and can be proved by direct computation.
\subsection{Preliminary Deformations in the Riemann-Hilbert Analysis}
As in the pole-free analysis, we deform Riemann-Hilbert Problem \ref{RHPMScaled} in preparation  for introducing the $G$-function. These deformations are motivated by the sign chart of $\Real(2H+\Lambda)$; see Figure \ref{GenusOneZeroLevelLines}. To specify the exact placement of the deformed jump contours, we will construct conformal maps at each endpoint. Then, we will collapse contours and open lenses.

\subsubsection{Conformal Map Constructions} Using the same argument, as in Section \ref{conformalMapSection} and Proposition \ref{G and H fun prop}, we are able to construct conformal maps $\tau_A$, $\tau_B$, $\tau_C$, and $\tau_D$ according to the table below.

\begin{table}[h]
    \centering
    \begin{tabular}{|c|c|c|}
        \hline
        $p$ & Conformal Map $\tau_p: \D_p \to \Ci$ &  $\tau_p^{-1}((-\infty,0]) \cap \D_p$ \\
        \hline
        $z=A$ & $
        \tau_A(z)^3 = \begin{cases}
            \left( 2H(z)+\Lambda - 2\pi i \right)^{2}, & z \in \D_A^{\uparrow}, \\
            \left( 2H(z)+\Lambda + 2\pi i \right)^{2}, & z \in \D_A^{\downarrow},
        \end{cases}
        $& $\Sigma_1 \cap \D_A$ \\
        \hline
        $z=B$ & $ \tau_{B}(z)^3 = \begin{cases}
         (2H(z)+\Lambda + i\omega)^2, & z \in \D^{\uparrow}_{B}, \\
        (2H(z)+\Lambda-i\omega)^2, & z \in \D^{\downarrow}_{B},
        \end{cases}$
        & $\Sigma_1 \cap \D_B$ \\
        \hline
        $z=C$ & $
        \tau_C(z)^3 = \begin{cases}
            (2H(z) + \Lambda  +i\omega + i\Omega)^2, & z \in \D^{\uparrow}_C,\\
            (2H(z) + \Lambda  -i\omega + i\Omega)^2 & z \in \D^{\downarrow}_C,
        \end{cases}
        $ 
        & $\Gamma \cap \D_C$ \\
        \hline
        $z=D$ & $\tau_D(z)^3= (2H(z)+\Lambda+i\Omega)^2$ & $\Sigma_2 \cap \D_D$ \\
        \hline
    \end{tabular}
    \caption{Conformal maps used in the pole region.}
\end{table}

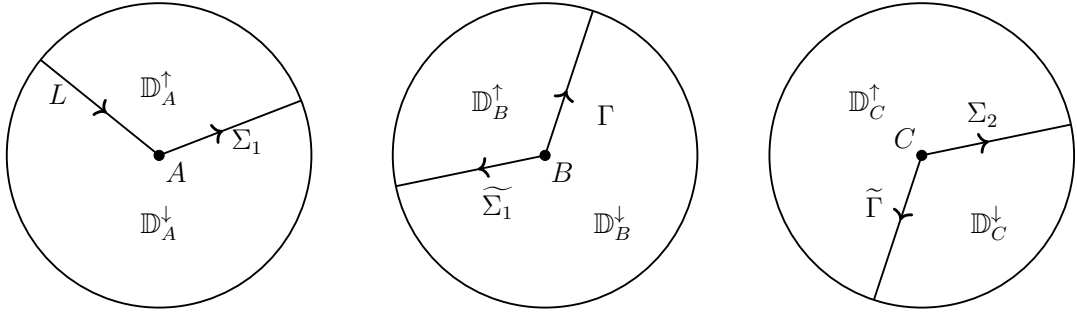
\begin{figure}[ht]
    \centering
    \begin{tabular}{ccccc}
        \scalebox{.9}{\begin{tikzpicture}
            \draw[thick] (0,0) circle (2.25cm);
            \draw[rotate=21.1139,thick] (0,0) -- (0:2.25cm);
            \draw[rotate=21.1139, very thick,->] (0:1cm) -- (0:1.01cm);
            \node at (1.3,.2) {$\Sigma_1$};
            \draw[rotate=21.1139,thick] (0,0) -- (120:2.25cm);
            \draw[rotate=21.1139, very thick,->] (120:1.01cm) -- (120:1.cm);
            \node at (-1.5,.9) {$L$};
            \draw[fill=black] (0,0) circle (.075cm);
            \node at (.25,-.25) {$A$};
            \node at (0,-1.) {$\D_{A}^{\downarrow}$};
            \node at (0,1) {$\D_{A}^{\uparrow}$};
        \end{tikzpicture}}
        & \textcolor{white}{fl}&
       \scalebox{.9}{ \begin{tikzpicture}
        \draw[thick] (0,0) circle (2.25cm);
        \draw[rotate=71.6867,thick] (0,0) -- (0:2.25cm);
        \draw[rotate=71.6867, very thick,->] (0:1cm) -- (0:1.01cm);
        \node at (-.7,-.7) {$\widetilde{\Sigma_1}$};
        \draw[rotate=71.6867,thick] (0,0) -- (120:2.25cm);
        \draw[rotate=71.6867, very thick,->] (120:1cm) -- (120:1.01cm);
        \node at (.9,.6) {$\Gamma$};
        \draw[fill=black] (0,0) circle (.075cm);
        \node at (.25,-.25) {$B$};
        \node at (1.,-1.) {$\D_{B}^{\downarrow}$};
        \node at (-.8,.8) {$\D_{B}^{\uparrow}$};
    \end{tikzpicture}}
        & \textcolor{white}{fl}&
        \scalebox{.9}{\begin{tikzpicture}
            \draw[thick] (0,0) circle (2.25cm);
            \draw[rotate=-108.313,thick] (0,0) -- (0:2.25cm);
            \draw[ rotate=-108.313, very thick,->] (0:1cm) -- (0:1.01cm);
            \node at (-.7,-.7) {$\widetilde{\Gamma}$};
            \draw[rotate=-108.313,thick] (0,0) -- (120:2.25cm);
            \draw[ rotate=-108.313, very thick,->] (120:1cm) -- (120:1.01cm);
            \node at (.9,.6) {$\Sigma_2$};
            \draw[fill=black] (0,0) circle (.075cm);
            \node at (-.25,.25) {$C$};
            \node at (1.,-1.) {$\D_{C}^{\downarrow}$};
            \node at (-.8,.8) {$\D_{C}^{\uparrow}$};
        \end{tikzpicture}}
    \end{tabular}
    \caption{The partitions of $\D_A$, $\D_B$, and $\D_C$. Here, $\widetilde{\Sigma_1}$ denotes the same contour $\Sigma_1$ but with reverse orientation. $\widetilde{\Gamma}$ is defined analogously.}\label{AlltheCircles}
\end{figure}

\subsubsection{The Collapsing Deformation} We collapse the jump contours of $\boldsymbol{M}(z)$ according to Figure \ref{jumpsOfNGenusOne}. We make the following specifications:
\begin{itemize}
    \item The jump contour that connects $A$ to $B$ coincides with $\Sigma_1$.
    \item In $\D_A$, the jump contour where the jump is upper triangular is contained in $\tau_{A}^{-1}((0,\infty))$.
    \item In $\D_A$, the jump contour where the jump is lower triangular is contained in $\tau_{B}^{-1}(\rho e^{2\pi i /3})$ for $\rho>0$.
    \item In $\D_B$, the jump contour where the jump is upper triangular is contained in $\tau_{B}^{-1}((0,\infty))$ for $\rho>0$.
    \item In $\D_B$, the jump contour that connects $B$ to $C$ is contained in $\tau_{B}^{-1}(\rho e^{3\pi i /2})$ for $\rho>0$.
    \item The jump contour that connects $C$ to $D$ coincides with $\Sigma_2$.
    \item In $\D_C$, the jump contour that connects $B$ to $C$ is contained in $\tau_C^{-1}(\rho e^{2\pi i /3})$ for $\rho>0$.
    \item In $\D_D$, the jump contour that emanates from $D$ and goes to $\infty$ is contained $\tau_D^{-1}((0,\infty))$.
    \item For all other jumps contours, we only require that they avoid the zero-level set of $\Real(2H+\Lambda)$ and do not self-intersect.
\end{itemize}
Let $\boldsymbol{N}(z)$ denote the solution of the  Riemann-Hilbert problem obtained from the collapsing deformation. More precisely:

\begin{RHP}\label{RHPNGenus0}
    For each $k \in \N$, find $\boldsymbol{N}(z;x,k)\equiv\boldsymbol{N}(z)$ so that 
    \begin{enumerate} 
        \item \textbf{Analyticity}: $\boldsymbol{N}(z)$ is analytic in $z$ except along the jump contours shown in Figure \ref{jumpsOfNGenusOne}.
    
        \item \textbf{Jump Condition:} $\boldsymbol{N}(z)$ can be continuously extended to the boundary and the boundary values taken by $\boldsymbol{N}(z)$ are related by the jump condition $\boldsymbol{N}_+(z) = \boldsymbol{N}_-(z)\boldsymbol{V}^{(\boldsymbol{N})}(z)$ as shown in Figure \ref{jumpsOfNGenusOne}.
        \begin{figure}[ht]
            \centering
            \scalebox{1}{
            \begin{tikzpicture} 
                \draw[thick, blue, xshift=-4.1289cm, yshift=-3.07927cm, rotate=21.1139](0,0).. controls (2,.075) and (2.33, .15) ..(4.06248,0);
                \draw[very thick, ->,blue, xshift=-4.1289cm, yshift=-3.07927cm, rotate=21.1139] (2,.1) --(2.001,.1);
                \draw[thick,->,black](-4.1289,-3.07927) .. controls (-5.5,-3.65) and (-5.75,-3.4) .. (-6.7117, -3.875);
                \draw[thick,->](-4.1289,-3.07927) .. controls (-3.5,.5) and (-4.75, 3.075) ..  (150:7.75cm);
                
                \draw[thick,xshift=-0.339153cm, yshift=-1.61587cm, rotate=71.6867](0,0)..controls (1,1.75) and (2.6, 1.75)..(0:3.60408cm);
                \draw[very thick,->,xshift=-0.339153cm, yshift=-1.61587cm, rotate=71.6867] (1.7,1.3)--(1.7001,1.3);
                \draw[thick,->](-0.339153,-1.61587) .. controls (2.25,-.875) and (3,-1.75) ..  (-30:7.75cm);
        
                \draw[thick, xshift=3.67476cm, yshift=2.88945cm, rotate=-159.388,blue](0,0)--(0:3.07853cm);
                \draw[very thick,blue,->, xshift=3.67476cm, yshift=2.88945cm, rotate=-159.388](1.501,0)--(1.5,0);
                \draw[thick,->](3.67476,2.88945) .. controls (4.6,3.375) and (5.5, 3.3) .. (6.7117,3.875);

                \draw[fill=black, xshift=-4.1289cm, yshift=-3.07927cm, rotate=21.1139](0,0) circle (.075cm);
                \node at (-4.2,-3.4) {$A$};
                \draw[fill=black, xshift=-0.339153cm, yshift=-1.61587cm, rotate=71.6867](0,0) circle (.075cm);
                \node at (-.1,-1.8) {$B$};
                \draw[fill=black, xshift=0.793299cm, yshift=1.80568cm, rotate=-108.313](0,0) circle (.075cm);
                \node at (.5,2) {$C$};
                \draw[fill=black, xshift=3.67476cm, yshift=2.88945cm, rotate=-159.388](0,0) circle (.075cm);
                \node at (3.9,3.2) {$D$};
        
                \node at (-5,0) {$\boldsymbol{L}_{k}^{-1}(i\theta)$};
                \node at (-5.5,-4) {$\boldsymbol{U}_{k}(-i\theta)$};
                \node at (-1.85,-3.5) {$\begin{bmatrix}
                     0 & -e^{-ik\theta} \\
                     e^{ik\theta} & 1
                \end{bmatrix}$};
                 \node at (3,-2.5) {$\boldsymbol{U}_{k}^{-1}(-i\theta)$};
                 
                 \node at (3,.25) {$\boldsymbol{L}_{k}(i\theta)$};
                 \draw[thick,dotted,->] (2.4,.25) -- (-.2,.25);
                 \draw[thick,dotted,->] (2.6,.5) -- (2.6, 2);
                 \draw[thick, dotted,->] (2.8, .5) -- (4.5,2.9);
                 \node at (-2.25,-1.9) {$\textcolor{blue}{\Sigma_1}$};
                 \node at (2,2.75) {$\textcolor{blue}{\Sigma_2}$};

        
        \end{tikzpicture}}
    \caption{The jumps of $\boldsymbol{N}(z)$ in the pole case.}
    \label{jumpsOfNGenusOne}
    \end{figure}
    
        \item \textbf{Normalization:} As $z \to \infty$, the matrix $\boldsymbol{N}(z)z^{-k\sigma_3} = \I + \Oh\left(1/z\right)$.
    
    \end{enumerate}
\end{RHP}

\begin{figure}[ht]
    \centering
    \scalebox{.66}{\includegraphics{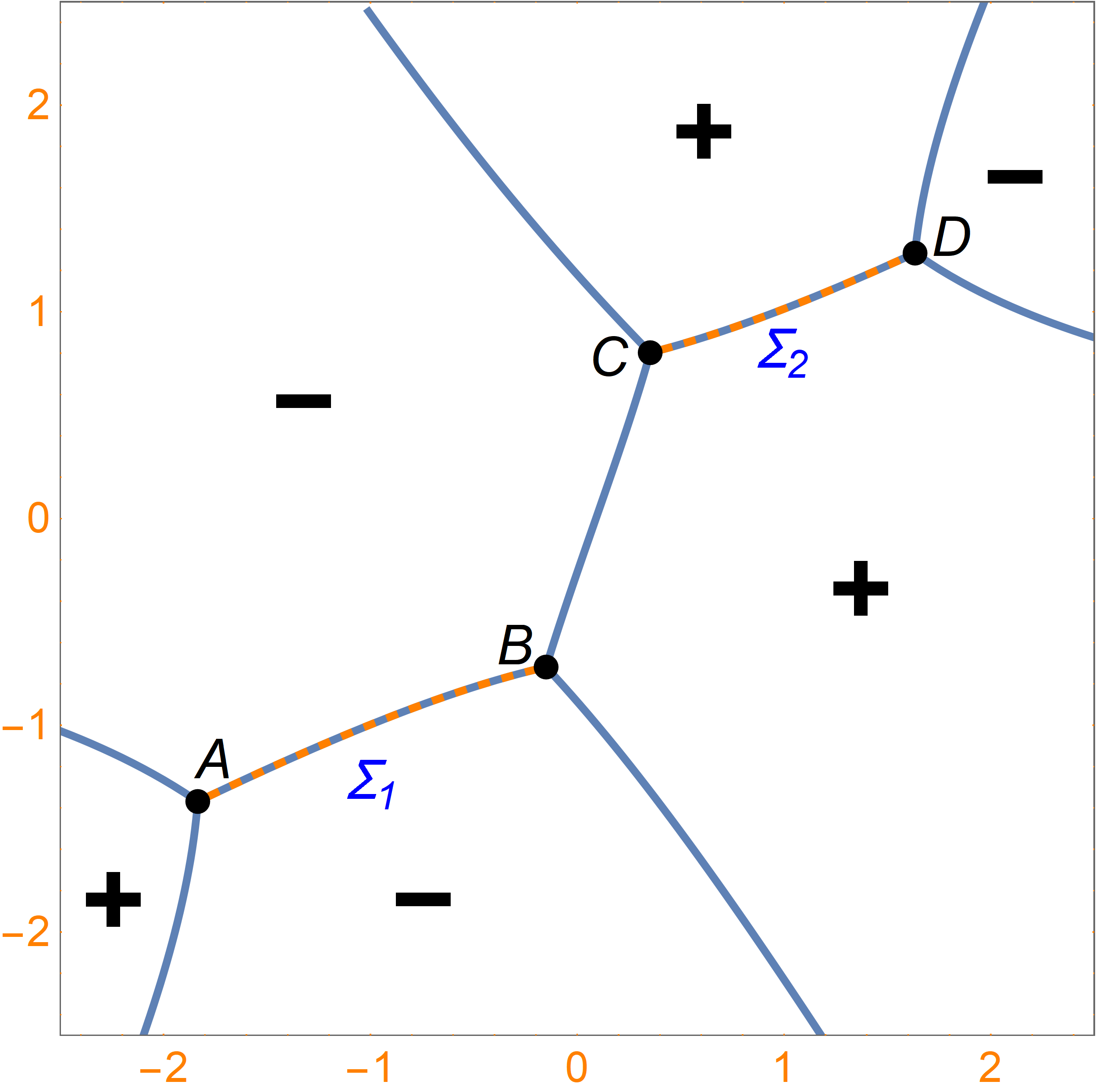}}
    \caption{The sign chart for $\Real(2H+\Lambda)$ when $x=-3/2 - 10i$. The dashed blue-and-orange lines indicate that a zero-level line is also a cut of $\Real(2H+\Lambda)$.}
    \label{GenusOneZeroLevelLines}
\end{figure}

\subsubsection{Opening Lenses} We now open lenses on $\Sigma_1$ and $\Sigma_2$ guided by the sign chart for $\Real\left( 2H+\Lambda \right)$ (see Figure \ref{GenusOneZeroLevelLines}).

On $\Sigma_1$, we use the same factorization we used in the pole-free analysis; see \eqref{lenFactorizationab}. The jump matrix on $\Sigma_2$ has the factorization
\begin{equation}
    \LT{e^{ik\theta}} = \UT{e^{-ik\theta}} 
    \begin{bmatrix}
        0 & -e^{-ik\theta} \\
        e^{ik\theta} & 0
    \end{bmatrix}
    \UT{e^{-ik\theta}}.
\end{equation} 
We open three lenses according to Figure \ref{jumps of O in pole}. We make the following specifications:
\begin{itemize}
    \item In $\D_A$, the boundary of the lens that is on the right side of $\Sigma_1$ is contained in $\tau^{-1}_{A}(\rho e^{2\pi i /3})$ for $\rho>0$.    
    \item In $\D_B$, the boundary of the lens that is on the right side of $\Sigma_1$ is contained in $\tau_{B}^{-1}(\rho e^{-2\pi i /3})$ for $\rho>0$.
    \item In $\D_C$, the boundary of the lens that is on the right side of $\Sigma_2$ is contained in $\tau_C^{-1}(\rho e^{- 2\pi i /3})$ for $\rho>0$.
    \item In $\D_C$, the boundary of the lens that is on the left side of $\Sigma_2$ is contained in $\tau_C^{-1}((0,\infty))$. 
    \item In $\D_D$, the boundary of the lens that is on the right side of $\Sigma_2$ is contained in $\tau_D^{-1}(\rho e^{- 2\pi i /3})$ for $\rho>0$.
    \item In $\D_D$, the  boundary of the lens that is on the left side of $\Sigma_2$ is contained in $\tau_D^{-1}(\rho e^{ 2\pi i /3})$ for $\rho>0$.
\end{itemize}
Let $\boldsymbol{O}(z)$ denote the solution of the Riemann-Hilbert problem obtained from the opening lenses deformation. Specifically:

\begin{RHP}
    Determine the $2 \times 2$ matrix $\boldsymbol{O}(z;x,k)\equiv\boldsymbol{O}(z)$ with the following properties:\begin{enumerate}
        \item\textbf{Analyticity:} $\boldsymbol{O}(z)$ is analytic in $z$ except on the jump contours shown in Figure \ref{jumps of O in pole}.
        \item\textbf{Jump Condition:} The boundary values taken by $\boldsymbol{O}(z)$ are related by the jump condition $\boldsymbol{O}_+(z)=\boldsymbol{O}_-(z)\boldsymbol{V}^{(\boldsymbol{O})}(z)$ as shown in Figure \ref{jumps of O in pole}.
        \begin{figure}[ht]
            \centering
            \scalebox{1}{
                \begin{tikzpicture} 
        \draw[thick, blue, xshift=-4.1289cm, yshift=-3.07927cm, rotate=21.1139](0,0).. controls (2,.075) and (2.33, .15) ..(4.06248,0);
        \draw[very thick, ->,blue, xshift=-4.1289cm, yshift=-3.07927cm, rotate=21.1139] (2,.1) --(2.001,.1);
        \draw[thick,xshift=-4.1289cm, yshift=-3.07927cm, rotate=21.1139](0,0).. controls (1.,-2) and (2.66, -2) ..(4.06248,0);
        \draw[very thick, ->, xshift=-4.1289cm, yshift=-3.07927cm, rotate=21.1139] (2,-1.5) --(2.001,-1.5);
        \draw[thick,->,black](-4.1289,-3.07927) .. controls (-5.5,-3.65) and (-5.75,-3.4) .. (-6.7117, -3.875);
        \draw[thick,->](-4.1289,-3.07927) .. controls (-3.5,.5) and (-4.75, 3.075) ..  (150:7.75cm);
        
        
        \draw[thick, xshift=-0.339153cm, yshift=-1.61587cm, rotate=71.6867](0,0)..controls (1,1.75) and (2.6, 1.75)..(0:3.60408cm);
        \draw[very thick,->,xshift=-0.339153cm, yshift=-1.61587cm, rotate=71.6867] (1.7,1.3)--(1.7001,1.3);
        \draw[thick,->](-0.339153,-1.61587) .. controls (2.25,-.875) and (3,-1.75) ..  (-30:7.75cm);
        
        
        
        
        \draw[thick, blue, xshift=3.67476cm, yshift=2.88945cm, rotate=-159.388](0,0)--(0:3.07853cm);
        \draw[very thick, blue, ->, xshift=3.67476cm, yshift=2.88945cm, rotate=-159.388](1.501,0)--(1.5,0);
        \draw[thick, xshift=3.67476cm, yshift=2.88945cm, rotate=-159.388](0,0).. controls (.75,1.75) and (2.4, 2) ..(0:3.07853cm);
        \draw[very thick, ->, xshift=3.67476cm, yshift=2.88945cm, rotate=-159.388](1.501,1.4)--(1.5,1.4);
        \draw[thick, xshift=3.67476cm, yshift=2.88945cm, rotate=-159.388](0,0).. controls (.75,-1.75) and (2, -1.5) ..(0:3.07853cm);
        \draw[very thick, ->, xshift=3.67476cm, yshift=2.88945cm, rotate=-159.388](1.401,-1.2)--(1.4,-1.2);
        \draw[thick,->](3.67476,2.88945) .. controls (4.6,3.375) and (5.5, 3.3) .. (6.7117,3.875);
        

        \draw[fill=black, xshift=-4.1289cm, yshift=-3.07927cm, rotate=21.1139](0,0) circle (.075cm);
        \node at (-4.2,-3.4) {$A$};
        \draw[fill=black, xshift=-0.339153cm, yshift=-1.61587cm, rotate=71.6867](0,0) circle (.075cm);
        \node at (-.1,-1.8) {$B$};
        \draw[fill=black, xshift=0.793299cm, yshift=1.80568cm, rotate=-108.313](0,0) circle (.075cm);
        \node at (.5,2) {$C$};
        \draw[fill=black, xshift=3.67476cm, yshift=2.88945cm, rotate=-159.388](0,0) circle (.075cm);
        \node at (3.9,3.2) {$D$};
        
        \node at (-5,0) {$\boldsymbol{L}_{k}^{-1}(i\theta)$};
        \node at (-5.5,-4) {$\boldsymbol{U}_{k}(-i\theta)$};
        \node at (-1.5,-4.25) {$\boldsymbol{L}_{k}^{-1}(i\theta)$};
        \node at (-2.25, -1.75) {$\boldsymbol{T}(e^{ik\theta})$};
        \node at (-2.25,-3) {$\textcolor{blue}{\Sigma_1}$};
        \node at (3,-2.5) {$\boldsymbol{U}_{k}^{-1}(-i\theta)$};
        \node at (-1.8,0) {$\boldsymbol{L}_{k}(i\theta)$};
        \node at (5,3.75) {$\boldsymbol{L}_{k}(i\theta)$};
        \node at (2,2.75) {$\textcolor{blue}{\Sigma_2}$};
        \node at (2.05,1.75) {$\boldsymbol{T}(ik\theta)$};
        \node at (2.4,4) {$\boldsymbol{U}_{k}(-i\theta)$};
        \node at (2.75,.5) {$\boldsymbol{U}_{k}(-i\theta)$};

                \end{tikzpicture}
            }
            \caption{The jumps of $\boldsymbol{O}(z)$ in the pole case.}
            \label{jumps of O in pole}
        \end{figure}
        \item \textbf{Normalization:} As $z \to \infty$, the matrix $\boldsymbol{O}(z)z^{-k\sigma_3} = \I + \Oh\left(1/z\right)$.
    \end{enumerate}.    
\end{RHP}

\subsection{Applying the G-function.} Analogously to the pole-free case, we apply the $G$-function defined in Section \ref{The G-function pole section} (see \eqref{defofG pole}) to the function $\boldsymbol{O}(z)$. We consider
\begin{equation} \label{defOfPGenusOne}
    \boldsymbol{P}(z;x,k) \equiv \boldsymbol{P}(z) := e^{-\frac{k\Lambda}{2}\sigma_3} \boldsymbol{O}(z) e^{-kG(z)\sigma_3} e^{\frac{k\Lambda}{2}\sigma_3}.
\end{equation}
Then, $\boldsymbol{P}(z)$ is analytic where both $\boldsymbol{O}(z)$ and $e^{G(z)}$ are (that is, for $z \in \Ci \setminus (\Sigma_1 \cup \Gamma \cup \Sigma_2)$). Hence, we have introduced a jump on $\Gamma$. In particular, $\boldsymbol{V}^{(\boldsymbol{P})}_{\Gamma} = e^{-ki\omega\sigma_3}$, where $\omega$ was introduced in \eqref{gammaJump}. For compensation, both the normalization and jump conditions for $\boldsymbol{P}(z)$ are now simplified. Indeed, for large $z$, $\boldsymbol{P}(z)$ converges to $\I$ instead of $z^{k\sigma_3}$.

The jump recipe \eqref{jumpRecipeForPG0} is valid after replacing the lowercase symbols `$g$' and `$\lambda$' with their corresponding uppercase counterparts. From this, we glean that the nontrivial exponential entry of the triangular jump matrices is transformed by the rule $e^{\pm ik\theta} \Rightarrow e^{\pm k (2H+\Lambda)}$. Further, the sign chart of $\Real(2H+\Lambda)$ implies that all of the triangular jump matrices for $\boldsymbol{P}(z)$ decay to the identity as $k \to \infty$. Moreover, the twist jump matrices on $\Sigma_1$  and $\Sigma_2$  become constant. Using the jump recipe,
\eqref{sigma1Jump}, and \eqref{sigma2Jump}, we find
\begin{equation}
    \boldsymbol{V}^{(\boldsymbol{P})}_{\Sigma_1} = \begin{bmatrix}
        0 & -1 \\
        1 & 0
    \end{bmatrix} \hspace{1.33cm} \text{ and } \hspace{1.33cm} \boldsymbol{V}^{(\boldsymbol{P})}_{\Sigma_2} = \begin{bmatrix}
        0 & -e^{ki\Omega} \\
        e^{-ki\Omega} & 0
    \end{bmatrix}.
\end{equation}
Therefore the Riemann-Hilbert problem $\boldsymbol{P}(z)$ solves is: 

\begin{RHP}\label{G1PRHP}
    Determine the $2 \times 2$ matrix $\boldsymbol{P}(z;x,k)\equiv \boldsymbol{P}(z)$ with the following properties:\begin{enumerate}
        \item\textbf{Analyticity:} $\boldsymbol{P}(z)$ is analytic in $z$ except on the jump contours shown in Figure \ref{jumpsOfPGenusOne}.
        \item\textbf{Jump Condition:} The boundary values taken by $\boldsymbol{P}(z)$ are related by the jump condition $\boldsymbol{P}_+(z)=\boldsymbol{P}_-(z)\boldsymbol{V}^{(\boldsymbol{P})}(z)$ as shown in Figure \ref{jumpsOfPGenusOne}.
        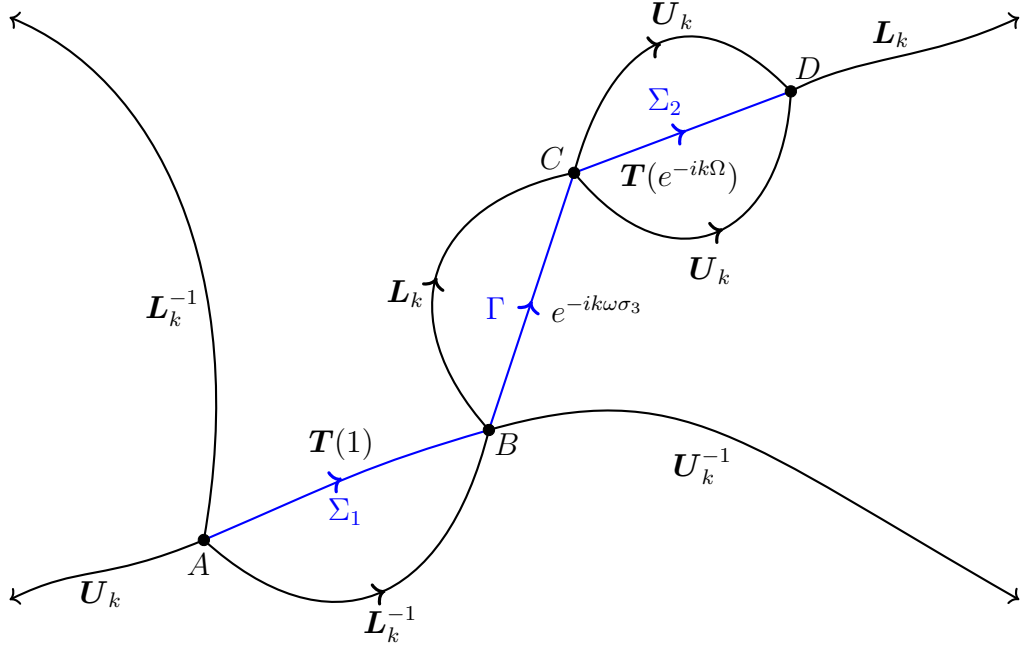
\begin{figure}[h]
            \centering
            \scalebox{1}{
                \begin{tikzpicture} 
        \draw[thick, blue, xshift=-4.1289cm, yshift=-3.07927cm, rotate=21.1139](0,0).. controls (2,.075) and (2.33, .15) ..(4.06248,0);
        \draw[very thick, ->,blue, xshift=-4.1289cm, yshift=-3.07927cm, rotate=21.1139] (2,.1) --(2.001,.1);
        \draw[thick,xshift=-4.1289cm, yshift=-3.07927cm, rotate=21.1139](0,0).. controls (1.,-2) and (2.66, -2) ..(4.06248,0);
        \draw[very thick, ->, xshift=-4.1289cm, yshift=-3.07927cm, rotate=21.1139] (2,-1.5) --(2.001,-1.5);
        \draw[thick,->,black](-4.1289,-3.07927) .. controls (-5.5,-3.65) and (-5.75,-3.4) .. (-6.7117, -3.875);
        \draw[thick,->](-4.1289,-3.07927) .. controls (-3.5,.5) and (-4.75, 3.075) ..  (150:7.75cm);

        \draw[thick, blue , xshift=-0.339153cm, yshift=-1.61587cm, rotate=71.6867](0,0)--(0:3.60408cm);
        \draw[very thick, blue,-> ,xshift=-0.339153cm, yshift=-1.61587cm, rotate=71.6867](1.8,0)--(1.801,0);
        \draw[thick, xshift=-0.339153cm, yshift=-1.61587cm, rotate=71.6867](0,0)..controls (1,1.75) and (2.6, 1.75)..(0:3.60408cm);
        \draw[very thick,->, xshift=-0.339153cm, yshift=-1.61587cm, rotate=71.6867] (1.7,1.3)--(1.7001,1.3);
        \draw[thick,->](-0.339153,-1.61587) .. controls (2.25,-.875) and (3,-1.75) ..  (-30:7.75cm);
        
        
        
        \draw[thick, blue, xshift=3.67476cm, yshift=2.88945cm, rotate=-159.388](0,0)--(0:3.07853cm);
        \draw[very thick, blue, ->, xshift=3.67476cm, yshift=2.88945cm, rotate=-159.388](1.501,0)--(1.5,0);
        \draw[thick, xshift=3.67476cm, yshift=2.88945cm, rotate=-159.388](0,0).. controls (.75,1.75) and (2.4, 2) ..(0:3.07853cm);
        \draw[very thick, ->, xshift=3.67476cm, yshift=2.88945cm, rotate=-159.388](1.501,1.4)--(1.5,1.4);
        \draw[thick, xshift=3.67476cm, yshift=2.88945cm, rotate=-159.388](0,0).. controls (.75,-1.75) and (2, -1.5) ..(0:3.07853cm);
        \draw[very thick, ->, xshift=3.67476cm, yshift=2.88945cm, rotate=-159.388](1.401,-1.2)--(1.4,-1.2);
        \draw[thick,->](3.67476,2.88945) .. controls (4.6,3.375) and (5.5, 3.3) .. (6.7117,3.875);
        

        \draw[fill=black, xshift=-4.1289cm, yshift=-3.07927cm, rotate=21.1139](0,0) circle (.075cm);
        \node at (-4.2,-3.4) {$A$};
        \draw[fill=black, xshift=-0.339153cm, yshift=-1.61587cm, rotate=71.6867](0,0) circle (.075cm);
        \node at (-.1,-1.8) {$B$};
        \draw[fill=black, xshift=0.793299cm, yshift=1.80568cm, rotate=-108.313](0,0) circle (.075cm);
        \node at (.5,2) {$C$};
        \draw[fill=black, xshift=3.67476cm, yshift=2.88945cm, rotate=-159.388](0,0) circle (.075cm);
        \node at (3.9,3.2) {$D$};
        
        \node at (-4.55,0) {$\boldsymbol{L}_{k}^{-1}$};
        \node at (-5.5,-3.8) {$\boldsymbol{U}_{k}$};
        \node at (-1.65,-4.2) {$\boldsymbol{L}^{-1}_{k}$};
        \node at (-2.3, -1.8) {$\boldsymbol{T}(1)$};
        \node at (-2.25,-2.7) {$\textcolor{blue}{\Sigma_1}$};
        \node at (2.5,-2.1) {$\boldsymbol{U}_{k}^{-1}$};
        \node at (-1.45,.2) {$\boldsymbol{L}_{k}$};
        \node at (5,3.65) {$\boldsymbol{L}_{k}$};
        \node at (2,2.75) {$\textcolor{blue}{\Sigma_2}$};
        \node at (2.2,1.75) {$\boldsymbol{T}(e^{-ik\Omega})$};
        \node at (2.1,3.9) {$\boldsymbol{U}_{k}$};
        \node at (2.6,.5) {$\boldsymbol{U}_{k}$};
        \node at (-.25,0) {$\textcolor{blue}{\Gamma}$};
        \node at (1.1,0) {$e^{-ik\omega\sigma_3}$};
                \end{tikzpicture}
            }
            \caption{The jumps of $\boldsymbol{P}(z)$ for $x$ in the pole region. Here, if the jump matrix is upper triangular then its exponent is $-(2H+\Lambda)$, while if the jump matrix is lower triangular then its exponent is $2H+\Lambda$.}
            \label{jumpsOfPGenusOne}
        \end{figure}
        \item \textbf{Normalization:} As $z \to \infty$, the matrix $\boldsymbol{P}(z) = \I + \Oh\left(1/z\right)$.
    \end{enumerate}    
\end{RHP}

\subsection{Constructing the Outer Parametrix} 
In this section we explicitly construct the outer parametrix for the pole region. See \cite{FarFieldAsym} for  a similar construction. We start by introducing the Riemann-Hilbert problem corresponding to the outer parametrix. Next, we deform the Riemann-Hilbert problem so that both of its jumps on $\Sigma_1$ and $\Sigma_2$ are the twist matrix $\boldsymbol{T}(1)$. Then, we recall and implement the necessary machinery (the Abel map, the Riemann  $\Theta$-function, and  holomorphic differentials) in order to construct a $2 \times 2$ matrix-valued function whose jumps on $\Sigma_1 \cup \Sigma_2$ ultimately amounts to switching the columns. Finally, we use algebraic functions to complete our construction of the outer parametrix. 

\subsubsection{The Riemann-Hilbert Problem for the Outer Parametrix} Disregarding the jumps of $\boldsymbol{P}(z)$ that decay to the identity matrix we obtain the following Riemann-Hilbert problem:

\begin{RHP}\label{PdotRHP}
    For each $k \in \N$, find the $2 \times 2$ matrix-valued function $\dotbold{P}^{\out}(z;x,k) \equiv \dotbold{P}^{\out}(z)$ that satisfies the following properties:
    \begin{enumerate}
        \item \textbf{Analyticity}:
            $\dotbold{P}^{\out}(z)$ is analytic in the complex $z$-plane except on $\Sigma_1 \cup \Gamma \cup \Sigma_2$.
        \item \textbf{Jump Condition:}
            The jumps of $\dotbold{P}^{\out}(z)$ are topologically equivalent to those shown in Figure \ref{topologicallyEqu}.
            \begin{figure}[h]
                \centering
            \scalebox{1}
            {
                \begin{tikzpicture}
                    \draw[->, very thick] (-4.5cm,0) -- (-3cm,0) node[above=0cm] {$\begin{bmatrix}
                        0 & -1 \\
                        1 & 0
                    \end{bmatrix}$};
                    \draw[very thick] (-3cm,0) -- (-1.5cm,0) node[below=0cm]{$B$};
                    \draw[->, very thick] (-1.5,0) -- (0,0) node[above=0cm]{$\begin{bmatrix}
                        e^{-ik\omega} & 0 \\
                        0 & e^{i k\omega}
                    \end{bmatrix}$};
                    \draw[very thick] (0, 0) -- (1.5cm,0) node[below=0cm]{$C$};
                    \draw[->, very thick] (1.5cm,0) -- (3cm,0) node[above=0cm]{$\begin{bmatrix}
                        0 & - e^{ik\Omega} \\
                        e^{-ik\Omega} & 0
                    \end{bmatrix}$};
                    \draw[very thick] (3cm,0) -- (4.5cm,0);
                    \draw[fill=black] (-4.5,0) circle (.1cm) node[below=0cm]{$A$};
                    \draw[fill=black] (-1.5,0) circle (.1cm);
                    \draw[fill=black] (1.5,0) circle (.1cm);
                    \draw[fill=black] (4.5,0) circle (.1cm) node[below=0cm]{$D$};
                \end{tikzpicture}
            }
                \caption{A topologically equivalent version of the jump contours for $\dotbold{P}^{\out}(z)$. Here, $\omega,\Omega \in \R$, and $\Sigma_1$ corresponds to the line segment $AB$, $\Gamma$ corresponds to the line segment $BC$, and $\Sigma_2$ corresponds to the line segment $CD$.}
                \label{topologicallyEqu}
            \end{figure}
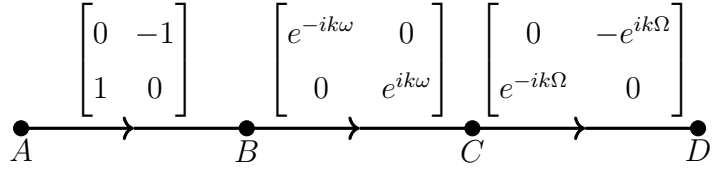
        \item \textbf{Normalization:} As $z \to \infty$, $\dotbold{P}^{\out}(z) = \I + \Oh(1/z)$. 
    \end{enumerate}
\end{RHP} 
Our next step is to move the dependence on $\omega$ and $\Omega$ from the jump matrices to the normalization condition. We consider
    \begin{equation}
        F(z;x) \equiv F(z) := \frac{R(z)}{2\pi i} \left( - \int\limits_{\Gamma}\frac{i\omega}{R(w)(w-z)} \dd w-\int\limits_{\Sigma_2}\frac{i\Omega}{R_{+}(w)(w-z)} \dd w \right).
    \end{equation}
Notice $F$ satisfies the following jumps:
\begin{itemize}
    \item For $z \in \Sigma_1$, $F_+(z) + F_-(z) =0$.
    \item For $z \in \Gamma$, $F_{+}(z) - F_{-}(z)=-i\omega$.  
    \item For $z \in \Sigma_2$, $F_{+}(z)+F_{-}(z)=-i\Omega$.
\end{itemize} 
Also, as $z \to \infty$, $F(z)= F_1z + F_0 + \Oh(1/z)$, where
\begin{equation}\label{F1Def}
    F_1 := \frac{1}{2\pi i} \left( \int\limits_{\Gamma}\frac{i\omega}{R(w)}\dd w +  \int\limits_{\Sigma_2}\frac{i\Omega}{R_{+}(w)} \dd w \right)
\end{equation}
and
\begin{equation}
    F_0 := \frac{1}{2\pi i} \left(\int\limits_{\Gamma}\frac{i\omega w}{R(w)} \dd w +  \int\limits_{\Sigma_2}\frac{i\Omega w}{R_{+}(w)} \dd w \right).
\end{equation}
Set
\begin{equation}\label{defOfQdot}
  \dotbold{Q}(z;x,k) \equiv \dotbold{Q}(z) := e^{kF_0\sigma_3}\dotbold{P}^{\out}(z) e^{-kF(z)\sigma_3}.
\end{equation}
Then $\dotbold{Q}$ solves the following Riemann-Hilbert problem:

\begin{RHP}\label{QdotRHP} For each $k \in \N$, find the $2 \times 2$ matrix-valued function $\dotbold{Q}(z;x,k) \equiv \dotbold{Q}(z)$ that satisfies the following properties:
        \begin{enumerate}
            \item \textbf{Analyticity}: 
                $\dotbold{Q}(z)$ is analytic in the complex $z$-plane except on $\Sigma_1 \cup \Sigma_2$.    
            \item \textbf{Jump Condition:} 
                The boundary values taken by $\dotbold{Q}$ on $\Sigma_1 \cup \Sigma_2$ are related by the jump condition $\dotbold{Q}_{+}= \dotbold{Q}_{-} \begin{bmatrix}
                    0 & -1 \\
                    1 & 0
                \end{bmatrix}$.
            \item \textbf{Normalization:} 
                As $z \to \infty$,\begin{equation}\label{QdotNormalization}
                    \dotbold{Q}(z)e^{kF_1z\sigma_3} = \I + \frac{\dotbold{Q}_{-1}}{z} + \frac{\dotbold{Q}_{-2}}{z^2} + \Oh(1/z^3).
                \end{equation}
        \end{enumerate}
\end{RHP}
Notice we eliminated the jump on $\Gamma$.
\subsubsection{The Baker-Akhiezer Map}
Recall the scalar function $R(z)$ is cut on $\Sigma_1 \cup \Sigma_2$. We consider $\T$, the genus-one Riemann surface defined by $R(z)$. For our purposes it is convenient to view $\T$ as two copies of the complex $z$-plane cut on $\Sigma_1 \cup \Sigma_2$. 
Given a complex point $z$, we relate the two copies or sheets by the identity $R\left(z^{+}\right) = -R\left(z^{-}\right)$. We say $z^+$ lives on the plus sheet of $\T$, while $z^{-}$ lives on the minus sheet of $\T$. In our discussion unless otherwise specified we assume the inputs of $\A(\cdot)$ are on the plus sheet. We now introduce a basis of homology cycles $\{\aloop, \bloop\}$ on $\T$ as shown in Figure \ref{homologyBasis}. Here integration on the second sheet is accomplished by replacing $R(z)$ by $-R(z)$.

\begin{figure}[h]
    \centering
\scalebox{1}
{
    \begin{tikzpicture}[]
        \draw[] (-4.5cm,0) -- (-3cm,0);
        \draw[] (-3cm,0) -- (-1.5cm,0) node[below=0cm]{$B$};
        
        \draw[] (1.5cm,0) -- (3cm,0);
        \draw[] (3cm,0) -- (4.5cm,0);
        \draw[fill=black] (-4.5,0) circle (.05cm) node[below=0cm]{$A$};
        \draw[fill=black] (-1.5,0) circle (.05cm);
        \draw[fill=black] (1.5,0) circle (.05cm) node[below=0cm]{$C$};
        \draw[fill=black] (4.5,0) circle (.05cm) node[below=0cm]{$D$};
        \draw[thick, dotted, red] (2.25cm,0) arc (0:180:2.25cm and 1.75cm);
        \draw[red, thick] (2.25cm,0) arc (0:-88:2.25cm and 1.75cm);
        \draw[red, thick] (-2.25cm,0) arc (180:269.5:2.25cm and 1.75cm);
        \draw[->, very thick, red] (.01cm, -1.75cm) -- (-.01cm,-1.75cm);
        \draw[->, very thick, red] (-.01cm, 1.75cm) -- (.01cm,1.75cm);

        \draw[thick, blue] (0,0) arc (0:88: 3cm and 1.75cm);
        \draw[thick, blue] (-6cm,0) arc (180:91: 3cm and 1.75cm);
        \draw[thick, blue] (0,0) arc (0: -89: 3cm and 1.75cm);
        \draw[thick, blue] (-6cm,0) arc (180: 268: 3cm and 1.75cm);
        \draw[<-,very thick, blue] (-3.05cm, 1.75cm) -- (-2.99cm, 1.75cm);
        \draw[->, very thick, blue] (-3.01cm,-1.75) -- (-2.95cm, -1.75cm);

        \node() at (-6.25cm, 0cm) {\Large$\aloop$};
        \node() at (2.4cm, -.35cm) {\Large $\bloop$};
        
    \end{tikzpicture}
}
    \caption{ The homology basis $\left\{ \aloop, \bloop \right\}$ of $\T$. Thick solid curves lie on the plus sheet while the dotted curve lies on the minus sheet.}
    \label{homologyBasis}
\end{figure}

As $\{\aloop, \bloop\}$ is a homology basis, we see that any map defined via path integration on this genus-one surface will be well-defined modulo the periods defined by integrating along $\aloop$ and $\bloop$. In particular, in order for the Abel map,
\begin{equation}\label{AbelMapDef}
   \A(z;x) \equiv \A(z) := \frac{2\pi i}{\ointctrclockwise_\aloop \frac{\dd w}{R(w)}} \int_A^z \frac{\dd w}{R(w)},
\end{equation} to be well defined we mod out the value obtained from integrating along $\aloop$,
\begin{equation}
    \frac{2\pi i}{\ointctrclockwise_\aloop \frac{\dd w}{R(w)}} \ointctrclockwise_\aloop \frac{\dd w}{R(w)} = 2\pi i,
\end{equation}
and  the value obtained from integrating along $\bloop$,
\begin{equation}\label{defOfScriptedB}
    \B(x) \equiv \B:= \frac{2\pi i}{\ointctrclockwise_\aloop \frac{\dd w}{R(w)}} \ointclockwise_\bloop \frac{\dd w}{R(w)}.
\end{equation}
We now list the basic properties of $\A(z)$.
\begin{proposition}\label{jumps of Abel map prop}
    Given the lattice $\mathcal{L}:=\left\{j_{1} 2\pi i + j_{2} \B : j_{1},j_{2} \in \Z  \right\}$, the boundary values taken by the Abel map, $\A(z)$, on the cuts on the plus sheet are related by the jump conditions
    \begin{equation}
        \A_{+}\left( z^{+} \right) - \A_{-}\left( z^{+} \right)= 0 \mod \mathcal{L} \hspace{1 cm} \text{for } z^{+} \in \Sigma_1,
    \end{equation}
    \begin{equation}
        \A_{+}\left( z^{+} \right) - \A_{-}\left( z^{+} \right) = -2\pi i  \mod \mathcal{L} \hspace{1 cm} \text{for } z^{+} \in \Gamma,
    \end{equation}
    and
    \begin{equation}
        \A_{+}\left( z^{+} \right) - \A_{-}\left( z^{+} \right) =  - \B \mod \mathcal{L} \hspace{1 cm} \text{for } z^{+} \in \Sigma_2.
    \end{equation}
    Also, as $z \to \infty$ on the plus sheet,
    \begin{equation}\label{LargeZAbelMap}
        \A(z) = \A(\infty) + \frac{\A_{-1}}{z} + \Oh\left( 1/z^2 \right), \hspace{1cm} \text{where } \A_{-1}(x) \equiv \A_{-1} := - \frac{2\pi i}{\oint_{\aloop}\frac{\dd w}{R(w)}}.
    \end{equation}

\end{proposition}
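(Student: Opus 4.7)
The plan is to derive each jump by the standard Riemann-surface contour-deformation argument, and to obtain the asymptotic expansion by direct integration of the large-$w$ expansion of $1/R(w)$.

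For the three jump conditions, fix $z^{+}$ on one of the cuts $\Sigma_{1}$, $\Gamma$, or $\Sigma_{2}$, and let $\pi_{+}$ and $\pi_{-}$ be paths on the plus sheet from $A$ to $z^{+}$ that avoid $\Sigma_{1}\cup\Sigma_{2}\cup L$ except at the final endpoint, with $\pi_{+}$ approaching $z^{+}$ from the left of the cut and $\pi_{-}$ from the right. Then
\begin{equation*}
    \A_{+}(z^{+})-\A_{-}(z^{+}) = \frac{2\pi i}{\ointctrclockwise_{\aloop}\frac{\dd w}{R(w)}}\oint_{\gamma}\frac{\dd w}{R(w)},
\end{equation*}
where $\gamma=\pi_{+}-\pi_{-}$ is a closed loop in the Riemann surface $\T$. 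Since $H_{1}(\T,\Z)=\Z\aloop\oplus\Z\bloop$ and the only singularities of $\dd w/R(w)$ on $\T$ are integrable at the branch points, the value of $\oint_{\gamma}\dd w/R(w)$ depends only on the homology class $[\gamma]\in H_{1}(\T,\Z)$ and is an integer linear combination of the two basic periods.

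First I would treat $z^{+}\in\Sigma_{1}$: the loop $\gamma$ deforms in $\T\setminus\{A,B,C,D\}$ to encircle all of $\Sigma_{1}$ once, giving $[\gamma]=\pm\aloop$, so the jump is $\pm 2\pi i\in\mathcal{L}$. Next, for $z^{+}\in\Sigma_{2}$ a similar deformation produces $[\gamma]=\pm\bloop$, and the jump becomes $\pm\B\in\mathcal{L}$ by the definition \eqref{defOfScriptedB}, with the sign fixed by the orientation of $\bloop$ in Figure \ref{homologyBasis}. Finally, for $z^{+}\in\Gamma$ the integrand is analytic across $\Gamma$, so the jump comes entirely from the path-choice convention: in order to avoid the logarithmic cut $L$ consistently, the two paths $\pi_{\pm}$ must pass on opposite sides of $\Sigma_{1}$, so $\gamma$ is homologous to $\pm\aloop$ and the jump is again $\pm 2\pi i\in\mathcal{L}$. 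In every case the jump lies in $\mathcal{L}$, and the particular representatives $0$, $-2\pi i$, and $-\B$ in the statement are the values selected by our orientation conventions.

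For the large-$z$ expansion, from $R(w)=w^{2}+x/4+i/(2w)+\Oh(1/w^{2})$ I get $1/R(w)=1/w^{2}+\Oh(1/w^{4})$, so $\int_{z}^{\infty}\dd w/R(w)=1/z+\Oh(1/z^{3})$ as $z\to\infty$ on the plus sheet. Writing $\int_{A}^{z}=\int_{A}^{\infty}-\int_{z}^{\infty}$ and multiplying by the Abel-map normalization yields
\begin{equation*}
    \A(z)=\A(\infty)-\frac{2\pi i}{\ointctrclockwise_{\aloop}\dd w/R(w)}\cdot\frac{1}{z}+\Oh(1/z^{2}),
\end{equation*}
which identifies $\A(\infty)$ and $\A_{-1}$ as in \eqref{LargeZAbelMap}. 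The main obstacle is the bookkeeping in the three jump cases: one must carefully track orientations on both sheets of $\T$ and the precise placement of the auxiliary cut $L$ to pin down the signs of the homology classes and recover the specific representatives $-2\pi i$ and $-\B$ rather than their negatives. Once $[\gamma]$ is identified in each case, the remaining computation is immediate from the definitions of $\mathcal{L}$ and $\B$.
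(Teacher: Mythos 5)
Your treatment of the jump across $\Gamma$ and your derivation of the large-$z$ expansion are both sound, but the homology-deformation argument breaks down on the bands $\Sigma_1$ and $\Sigma_2$. For $z^+$ an \emph{interior} point of $\Sigma_1$ (or $\Sigma_2$), the two approach paths $\pi_+$ and $\pi_-$ terminate at two \emph{distinct} points of the Riemann surface $\T$: the left and right edges of a branch cut on the plus sheet are glued to edges of the \emph{minus} sheet, not to each other. So $\gamma=\pi_+-\pi_-$ is a closed curve in the plane but is not a $1$-cycle on $\T$, and the statement that $\oint_\gamma \dd w/R$ depends only on $[\gamma]\in H_1(\T,\Z)$ has no content there. (For $z^+\in\Gamma$, which lies off the cuts, both paths do terminate at the same point of $\T$, so the cycle argument is valid with $[\gamma]=\pm\aloop$.) Even setting this aside, the loop you describe encircles only the arc of $\Sigma_1$ from $A$ to $z^+$, not all of $\Sigma_1$, so the claimed deformation to $\aloop$ is not available without an additional step.

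In fact, the difference $\A_+(z^+)-\A_-(z^+)$ is \emph{not} in $\mathcal{L}$ on $\Sigma_1$: since $R_+=-R_-$ on $\Sigma_1$ and the base point $A$ is an endpoint of $\Sigma_1$, one gets directly $\A_+(z^+)=-\A_-(z^+)$, so the difference equals $2\A_+(z^+)$, which varies continuously from $0$ at $A$ to $2\pi i$ at $B$ and is generically not a lattice vector. What is constant is the \emph{sum}: $\A_++\A_-=0$ on $\Sigma_1$, and a similar direct computation using $R_+=-R_-$ on $\Sigma_2$ (with a residual integral over $\Gamma$ producing the $\bloop$-period) gives $\A_++\A_-=-\B\bmod\mathcal{L}$ there. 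This sum-on-bands, difference-on-gap pattern matches Proposition \ref{G and H fun prop} and Proposition \ref{jumps of diff prop}, and it is what is actually used (via evenness of $\Theta$) to verify the jump of $\dotbold{R}(z)$; the minus signs printed in the $\Sigma_1$ and $\Sigma_2$ lines of the proposition appear to be typos for plus signs. The correct proof on $\Sigma_1,\Sigma_2$ is therefore the short local computation from $R_+=-R_-$, not a homology deformation; only the $\Gamma$ case is genuinely a period, as you correctly argued.
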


Next, we use $\A(z)$ to construct a function (the Baker-Akhiezer function), defined on $\T$, that is path independent without modding. To do this, we employ Riemann $\Theta$-functions. Recall, given $M \in \Ci$, with $\Real(M)<0$, then for each $z \in \Ci$,
\begin{equation}\label{defOfRiemannThetaFunction}
    \Theta(z;M) \equiv \Theta(z):= \sum\limits_{k\in \Z} \exp\left(kz+\frac{1}{2}Mk^2 \right).
\end{equation}
It is well known (see \cite{NIST}) that $\Theta(z;M)$ satisfies the following properties:
\begin{itemize}
    \item 
        \textit{Periodicity:} $\Theta(z+2\pi i) = \Theta(z)$.
    \item 
        \textit{Quasiperiodicity:} $\Theta(z+M;M)= e^{-\frac{M}{2}} e^{-z} \Theta(z;M)$. More generally, for $n \in \Z$ $\Theta(z+nM;M)=e^{-Mn^2/2}e^{-nz}\Theta(z;M)$.
    \item 
        \textit{Evenness:} $\Theta(-z)=\Theta(z)$.
    \item
        All of the zeros of $\Theta$ are simple, and  $\Theta(z_0) =0$ if and only if there exist integers $j_1, j_2$  such that  
        \begin{equation}
            z_0= \left( \pi i +M/2 \right) + j_1 2\pi i + j_2 M.
        \end{equation}
\end{itemize}
We see that the $\Theta$-function gives us a handle on our goal of constructing a well-defined function on $\T$ from $\A(z)$. Naturally, we consider $\Theta(z;\B)$. Notice by using the $2\pi i$-periodicity of the $\Theta$-function, $\Theta\left(\A(z)\right)$ is independent of the number of $\aloop$-cycles used in the path that connects $A$ to $z$ (or any cycle in the same homology class as $\aloop$). We now develop the machinery that allows us to use the $\B$-quasiperiodicity of $\Theta$ to construct a function that is also independent of the number of $\bloop$-cycles used in the path that connects $A$ to $z$. We start with the observation that the $\B$-quasiperiodicity of $\Theta$ allows us to construct a function whose $\bloop$-cycle dependence is multiplicative, not additive. Indeed, for complex constants $C_1, C_2 \in \Ci$ and $n \in \Z$, notice
\begin{align}\label{quo}\begin{split}
    \frac{\Theta\left(\A(z)+C_1+C_2+n\B\right)}{\Theta\left(\A(z)+C_1+n\B\right)} &=\frac{\Theta\left(\A(z)+C_1+C_2\right)}{\Theta\left(\A(z)+C_1\right)} e^{-n C_2}.
\end{split}
\end{align}
We see that the above quotient ``counts'' (in the negative direction in the exponent with weight $C_2$) the number of $\bloop$-cycles used to connect $A$ to $z$. So, if we multiply it with a function that ``counts'' in the positive direction, then the product will be independent of the number of $\bloop$-cycles.

We define the differentials
\begin{equation}
    \Phi := \frac{2\pi i}{\ointctrclockwise_{\aloop}\frac{\dd w}{R(w)}} \frac{\dd w}{R(w)}, \hspace{.66 cm} \Upsilon_0 := \frac{w^2}{R(w)} \dd w, \hspace{.66cm} \text{ and } \hspace{.66cm} \Upsilon:= \Upsilon_0 - \left( \frac{1}{2\pi i} \ointctrclockwise_{\aloop} \Upsilon_0 \right)\Phi.
\end{equation}
Notice that $\Upsilon$ vanishes when it is integrated against an $\aloop$-cycle. We set the constant 
\begin{equation}\label{Udef}
    U(x) \equiv U := \ointclockwise_{\bloop} \Upsilon.
\end{equation}
We now list some of the basic properties of the differential $\Upsilon$. 
\begin{proposition}\label{jumps of diff prop}
    The boundary values taken by the map
    \begin{equation} \label{diffMap} 
        z \mapsto \int_{A}^{z} \Upsilon
    \end{equation}
    on $\Sigma_1\cup \Sigma_2$ in the plus-sheet are related by the jump conditions
    \begin{equation}
        \left( \int_{A}^{z^{+}} \Upsilon\right)_{+} + \left( \int_{A}^{z^{+}} \Upsilon \right)_{-} = 0 \mod U\Z \hspace{1cm} \text{for } z^{+} \in \Sigma_1
    \end{equation}
    and
    \begin{equation}
        \left( \int_{A}^{z^{+}} \Upsilon\right)_{+} + \left( \int_{A}^{z^{+}} \Upsilon \right)_{-} = -U \mod U\Z \hspace{1cm} \text{for } z^{+} \in \Sigma_2.
    \end{equation}
    Also, for some constants $\Upsilon_{0}, \Upsilon_{-1} \in \Ci$, as $z \to \infty$
    \begin{equation}\label{LargeZupsilon}
        z - \int_{A}^{z} \Upsilon = \Upsilon_{0} + \frac{\Upsilon_{-1}}{z} + \Oh\left( 1/z^2 \right).
    \end{equation}
\end{proposition}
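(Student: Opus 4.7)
The plan is to treat $\Upsilon$ as a meromorphic differential on the Riemann surface $\T$ and exploit the hyperelliptic involution $\sigma\colon w^+ \leftrightarrow w^-$. Since $R$ flips sign under $\sigma$, the differentials $\Upsilon_0 = w^2\,dw/R(w)$ and $\Phi$ are both anti-invariant, so $\sigma^*\Upsilon = -\Upsilon$. In particular, on $\Sigma_1\cup\Sigma_2$ (where $R_+ = -R_-$), the boundary values satisfy $\Upsilon_+ = -\Upsilon_-$. Moreover, by construction $\ointctrclockwise_\aloop \Upsilon = 0$, while $\ointclockwise_\bloop \Upsilon = U$ by \eqref{Udef}; thus $z\mapsto \int_A^z \Upsilon$ is well-defined modulo $U\Z$.

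For the jump on $\Sigma_1$, parameterize $\Sigma_1$ by $w(s)$ for $s\in[0,1]$ with $w(0)=A$, and set $f_\pm(s) := \bigl(\int_A^{w(s)}\Upsilon\bigr)_\pm$ using short paths approaching $\Sigma_1$ from each side. Then $f_\pm(0)=0$, and differentiating along the cut gives
\begin{equation*}
    (f_+ + f_-)'(s) = \bigl(\Upsilon_+(w(s))+\Upsilon_-(w(s))\bigr)\,w'(s) = 0.
\end{equation*}
Hence $(f_+ + f_-)(s)\equiv 0$ for the canonical path choice, and any other choice of paths differs by a period in $U\Z$, yielding the first claim.

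For the jump on $\Sigma_2$, fix a path from $A$ to $z\in\Sigma_2$ that passes through $C$, and let $I := \int_A^C \Upsilon$ on the plus sheet. Splitting as $\bigl(\int_A^z\Upsilon\bigr)_\pm = I + \bigl(\int_C^z\Upsilon\bigr)_\pm$ and applying the $\Sigma_1$ argument to $\bigl(\int_C^z\Upsilon\bigr)_+ + \bigl(\int_C^z\Upsilon\bigr)_-$ (which vanishes), we obtain $\bigl(\int_A^z\Upsilon\bigr)_+ + \bigl(\int_A^z\Upsilon\bigr)_- \equiv 2I \pmod{U\Z}$. The closed loop formed by the plus-sheet path $A\to C$ followed by the minus-sheet return $C\to A$ is homologous to $\pm\bloop$ modulo $\aloop$-cycles; since $\sigma^*\Upsilon = -\Upsilon$ forces the two halves to contribute equally, $2I = \pm\ointclockwise_\bloop\Upsilon = \pm U \equiv -U\pmod{U\Z}$, proving the second claim.

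Finally, the large-$z$ expansion \eqref{LargeZupsilon} follows from direct Laurent expansion: using the moment conditions \eqref{g1MomentConditions} one gets $R(w)=w^2+x/4+i/(2w)+\Oh(w^{-2})$, hence $\Upsilon_0 = dw + \Oh(w^{-2})\,dw$ and $\Phi = \Oh(w^{-2})\,dw$; integrating and subtracting $z$ yields the stated expansion. The main obstacle is the orientation bookkeeping in the second claim, namely pinning down the sign of the $A\to C\to A$ loop relative to $\bloop$ in Figure \ref{homologyBasis}. Since the statement is only modulo $U\Z$, however, this sign is ultimately irrelevant, and the proposition follows.
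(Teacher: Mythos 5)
The paper gives no proof of this proposition: like Propositions \ref{jumps of Abel map prop} and \ref{G and H fun prop}, it is simply asserted as a standard fact, so there is nothing in the paper to compare against directly. Your argument supplies a genuine proof and is essentially correct. The key ingredients --- anti-invariance $\sigma^*\Upsilon = -\Upsilon$ under the hyperelliptic involution (so that $\Upsilon_{+}=-\Upsilon_{-}$ on the cuts), the vanishing $\aloop$-period built into the definition of $\Upsilon$, the parametrization-and-integration argument on $\Sigma_1$ starting from $f_\pm(0)=0$ at the branch point $A$ (where the singularity of $1/R$ is integrable), the decomposition $\int_A^z = \int_A^C + \int_C^z$ through the branch point $C$ for the $\Sigma_2$ case, and finally the identification $2\int_A^C\Upsilon = \oint_\ell\Upsilon$ for the two-sheeted loop $\ell$ --- all hold up. The intersection count $\ell\cdot\aloop = \pm 1$ forces $\ell\sim\pm\bloop$ modulo $\aloop$, giving $2I=\pm U$. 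The large-$z$ expansion is a routine Laurent computation using the moment conditions, and your bookkeeping there is right.

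The one point worth emphasizing is that your treatment of the sign is only adequate because of how the proposition is phrased. As written, ``$-U \mod U\Z$'' is literally the same coset as ``$0 \mod U\Z$'', so the two jump conditions are indistinguishable and your $\pm U$ trivially suffices. But the paper clearly intends $-U$ as the canonical representative attached to a consistent (non-winding) choice of paths, since that specific value is what combines with the $\B$-quasiperiodicity of $\Theta$ and the corresponding $-\B$ jump of the Abel map to produce the exact jump $\dotbold{R}_+ = \dotbold{R}_-\begin{bmatrix}0&1\\1&0\end{bmatrix}$ downstream. Pinning the sign to $-U$ requires tracking the orientation of $\ell$ against the clockwise convention for $\bloop$ in \eqref{Udef} and Figure \ref{homologyBasis}. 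You flag this yourself, and it is a fair thing to defer, but it is not ``ultimately irrelevant'' in the way you suggest --- it just happens to be hidden by the statement's notation.
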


Let $P$ be a path that connects $A$ to $z$ using exactly $n$ $\bloop$-cycles, for some $n \in \Z$. Notice
\begin{align}\label{diffMagic}
    \begin{split}
        \exp\left( kF_1 \int_{P}\Upsilon \right) &= \exp\left(kF_1 n \ointclockwise_{\bloop}\Upsilon \right) \exp\left( kF_1 \int_{A}^{z}\Upsilon \right) \\
        &= \exp(nkF_1U) \exp\left( kF_1 \int_{A}^{z}\Upsilon \right).
    \end{split}
\end{align}
Looking at \eqref{quo} and \eqref{diffMagic}, we see that if we choose $C_2 =kF_1U$ and if the paths in the Abel map and the differential are consistent, then their product is path independent on $\T$. 

We also need to specify the constant $C_1$. As we are trying to solve a Riemann-Hilbert problem with no poles, we need to control where the denominator of \eqref{quo} vanishes. Thus, we set
\begin{equation}\label{RiemannConstant}
    \K(x) \equiv \K := i\pi + \frac{\B}{2}
\end{equation}
to be the \textit{Riemann constant} and
\begin{equation}
    C_1 := - \A(Q) - \K,
\end{equation} 
where $Q \in \Ci$ is specified in \eqref{zero of fDia}. Now, the Baker-Akhiezer function is given as
\begin{equation} \label{genusOneBakerAkhiezer} 
    q(z;x,k) \equiv q(z) :=
\frac{\Theta\left(\A(z) -\A(Q) - \K - kF_1U
\right)}{\Theta\left(\A(z)-\A(Q) - \K \right)} e^{-kF_1 \int_A^z\Upsilon}.
\end{equation}
Notice $q$ has a simple pole at $z=Q$. 

\subsubsection{The Solution to the Model Riemann-Hilbert Problem}

We start with the observation that by permuting the signs of $C_1$ and $F_1U$ in $q(z)$ we can construct a matrix-valued function that is analytic in $\Ci \setminus (\Sigma_1 \cup \Sigma_2)$ with a simple jump discontinuity. Define
\begin{multline}
    \dotbold{R}(z;x,k) \equiv \dotbold{R}(z) := \\
    \begin{bmatrix}
        \displaystyle \frac{\Theta(\A(z)+\A(Q)+\K-kF_1U)}{\Theta(\A(z)+\A(Q)+\K)}e^{-kF_1\int_{A}^{z}\Upsilon} & \displaystyle \frac{\Theta(\A(z)-\A(Q)-\K+kF_1U)}{\Theta(\A(z)-\A(Q)-\K)}e^{kF_1\int_{A}^{z}\Upsilon} \\        
        \displaystyle \frac{\Theta(\A(z)-\A(Q)-\K-kF_1U)}{\Theta(\A(z)-\A(Q)-\K)}e^{-kF_1\int_{A}^{z}\Upsilon} & \displaystyle \frac{\Theta(\A(z)+\A(Q)+\K+kF_1U)}{\Theta(\A(z)+\A(Q)+\K)}e^{kF_1\int_{A}^{z}\Upsilon}           
        \end{bmatrix}.
\end{multline}
The  jump conditions in Propositions \ref{jumps of Abel map prop} and \ref{jumps of diff prop} imply 
\begin{equation}
    \dotbold{R}_{+}(z) = \dotbold{R}_{-}(z) \begin{bmatrix}
        0 & 1 \\
        1 & 0 
    \end{bmatrix} \hspace{1cm} \text{ for } z \in \Sigma_1 \cup \Sigma_2.
\end{equation}

We will construct $\dotbold{Q}(z)$ out of $\dotbold{R}(z)$. Our first step is to introduce a sign change into the jump of $\dotbold{R}$. In order to do so, we define
\begin{equation}
    \gamma(z) := \left(\frac{(z-A)(z-C)}{(z-B)(z-D)}\right)^{1/4},
\end{equation}
where $\gamma$ is cut on $\Sigma_1 \cup \Sigma_2$ and as $z \to \infty$, $\gamma(z)=1 + \Oh(1/z)$. We also define
\begin{equation}
    f^{\text{D}}(z) := \frac{\gamma(z)+\gamma^{-1}(z)}{2} \hspace{1.75cm} \text{ and } \hspace{1.75 cm} f^{OD}(z):= \frac{\gamma(z)-\gamma^{-1}(z)}{2i}.
\end{equation}
A direct calculation shows that $\gamma_{+}=\gamma_{-}e^{-i\pi/2}$ on $\Sigma_1 \cup \Sigma_2$. Hence,
\begin{equation}
    f^{D}_{+}(z) = f^{OD}_{-}(z) \hspace{1 cm} \text{and} \hspace{1 cm} f^{OD}_{+}(z) = - f^{D}_{-} (z) \hspace{1 cm} \text{ for } z \in \Sigma_1 \cup \Sigma_2.
\end{equation}
Also, a straightforward computation shows that
\begin{equation}\label{zero of fDia}
    Q(x)\equiv Q := \frac{BD-AC}{B+D-A-C}
\end{equation}
is the unique solution to the equation $f^{\text{D}}(z)f^{\text{OD}}(z)=0$. Further, it is a simple zero and either $f^{\text{D}}(Q)=0$ or $f^{\text{OD}}(Q)=0$. It has been shown in \cite{RationalPainleveIII} that while the construction of the parametrix does depend on whether $Q$ is a zero of $f^{\text{D}}$ or $f^{\text{OD}}$, the final answer (i.e.\ the formula we extract from $\dotbold{P}^{\out}(z)$; see \eqref{genusOneResult}) is the same regardless of which function vanishes at $Q$. Numerically, we see that $f^{\text{OD}}(Q)=0$ for various $x$-values in the pole region. Thus, we assume that $Q$ is the unique simple zero of $f^{\text{OD}}(z)$, and we proceed accordingly.

We now modify $\dotbold{R}(z)$ by setting
\begin{equation}
    \dotbold{Q}(z;x,k) \equiv \dotbold{Q}(z) =
    \begin{bmatrix}
        C_{11} & 0 \\
        0 & C_{22}
    \end{bmatrix} 
    \begin{bmatrix}
        f^{\text{D}}(z) [\dotbold{R}(z)]_{11}&
        f^{\text{OD}}(z) [\dotbold{R}(z)]_{12} \\
        -f^{\text{OD}}(z) [\dotbold{R}(z)]_{21}&
        f^{\text{D}}(z) [\dotbold{R}(z)]_{22}
    \end{bmatrix},
\end{equation}
where $C_{11}$ and $C_{22}$ are normalizing constants specified in \eqref{defOfC11} and \eqref{defOfC22}. Now, $\dotbold{Q}(z)$ satisfies the jump condition $\dotbold{Q}_{+}(z) = \dotbold{Q}_{-}(z) \boldsymbol{T}(1)$ for $z \in z \in \Sigma_1 \cup \Sigma_2$. 

It is well known that the Abel map $\A(z)$ with domain $\T$ is injective
(see for example the discussion in Section 4.5.2 in \cite{RationalPainleveII}). Since $\A(Q^{-}) + \A(Q) =0$ (here $Q$ is on the plus sheet and $Q^{-}$ is the equivalent point on the minus sheet), it follows that for each $z$ on the plus sheet, $\A(z)+\A(Q) \ne 0$, otherwise $\A(z)$ would not be injective.
Hence, the diagonal entries of $\dotbold{R}(z)$ are analytic in $\Ci \setminus (\Sigma_1 \cup \Sigma_2)$. Further, $Q$ was selected so that in the off-diagonal entries of $\dotbold{R}(z)$ the simple pole $z=Q$ is canceled by the simple zero of $f^{\text{OD}}(z)$. Therefore, the off-diagonal entries of $\dotbold{Q}(z)$ are analytic in $\Ci \setminus (\Sigma_1 \cup \Sigma_2)$. 

Finally, to deduce the values of $C_{11}, C_{22}$, we notice as  $z \to \infty$, $f^{\text{D}}(z) = 1 + \Oh(1/z)$ and $f^{\text{OD}}(z)= \Oh(1/z)$, thus we find the limits of $[\dotbold{R}]_{11}$ and $[\dotbold{R}]_{22}$ as $z \to \infty$ and take the reciprocal to find
\begin{equation}\label{defOfC11}
    C_{11}(x;k) \equiv C_{11} := \frac{\Theta\left(\A(\infty)+\A(Q)+\K\right)}{\Theta\left(\A(\infty)+\A(Q)+\K-kF_1U\right)} e^{-kF_1\Upsilon_{0}}
\end{equation}
and 
\begin{equation}\label{defOfC22}
    C_{22}(x;k) \equiv C_{22} := \frac{\Theta\left(\A(\infty)+\A(Q)+\K\right)}{\Theta\left(\A(\infty)+\A(Q)+\K+kF_1U\right)} e^{kF_1\Upsilon_{0}}.
\end{equation}
Here, $C_{11}$ and $C_{22}$ are not guaranteed to exist. Indeed, their denominators can vanish. In other words, $C_{11}$ and $C_{22}$ do not exist if and only if
\begin{equation}\label{poleCondition}
   \A(\infty) + \A(Q) + \K= \pm kF_{-1}U \mod \left\{ \K + j_{1}2\pi i + j_{2} \B : j_{1}, j_{2} \in \Z \right\}.
\end{equation} 
When $k$ is large enough, if the latter equality holds for some $x$ in the pole region, then the left-hand side of \eqref{genus0 theorem} has a pole near $- k^{2/3} 2^{1/3} x$.

For each $k \in \N$, let $\mathscr{P}_k$ denote all of the solutions in the $x$-plane of \eqref{poleCondition}. Notice, $\mathscr{P}_k$ is the set of exceptional points for which Riemann-Hilbert Problem \ref{PdotRHP} does not admit a solution. Now, let $\delta>0$ be a fixed small number. For each $k \in \N$ let
\begin{equation}\label{SwissCheese}
    \mathscr{S}_{k} := \left\{ x \text{ in the pole region } : \dist(x,\mathscr{P}_k)> \frac{\delta}{k^{2/3}} \right\}.
\end{equation}
That is, $\mathscr{S}_{k}$ is the result of excising a neighborhood of each exceptional point from the pole region. 

We have completed the construction of $\dotbold{Q}(z)$, and thus $\dotbold{P}^{\out}(z)$ with the help of \eqref{defOfQdot}.

\subsection{Constructing the Inner Parametrices}\label{innerParaSection}
In this section, we construct the inner parametrices around the points $A$, $B$, $C$, and $D$. In particular, we go over in detail the construction of the local parametrix at $z=C$. For the other three we just give the resulting formulae.  Our approach is modeled from the methodology presented in Section 7.6.2 in \cite{RationalPainleveIV}.

\subsubsection{Local Parametrix at $z=C$} 
Our first step is to massage the jumps of $\boldsymbol{P}(z)$ so that they match the jumps of $\boldsymbol{A}\left( k^{2/3} \tau_{C}(z) \right)$, where $\boldsymbol{A}(\cdot)$ is the solution to Riemann-Hilbert Problem \ref{AiryRHP}. We give a systematic approach to do this.

\begin{enumerate}
    \item \textit{Clear the $z$-dependence.} Multiply (on the right) by $e^{-k(H(z)+\Lambda/2)\sigma_3}$ to obtain the ``core'' jumps of $\boldsymbol{P}(z)$.
    \item \textit{Mold the core jumps into the shape of the Airy jump matrices.} In this case, we move (by sectionally analytic substitutions) $\Sigma_2$ onto $\Gamma$. This deformation flips the triangularity of the jump matrix whose jump contour is contained in $\D_{C}^{\downarrow}$.
    \item \textit{Insert $\tau$ dependence.} Multiply by $e^{\frac{k}{2}\tau_{C}^{3/2}\sigma_3}$. Here, $\tau_{C}(z)^{3/2}$ is cut on $\widetilde{\Gamma}$ with $\left( \tau_{C}(z)^{3/2} \right)_{+}= - \left( \tau_{C}(z)^{3/2} \right)_{-}$.
\end{enumerate}
In Figure \ref{trackingAiry}  we track the jumps of $\boldsymbol{P}(z)$ as we apply the above procedure. Following it yields
\begin{equation}\label{AiryKey}
    \boldsymbol{Q}^{(C)}(z;x,k) \equiv \boldsymbol{Q}^{(C)}(z) := 
    \displaystyle\begin{cases}
        \boldsymbol{P}(z) e^{-k(H(z)+\Lambda/2)\sigma_3}  e^{\frac{k}{2} \tau_{C}(z)^{3/2}\sigma_3}, & z\in \D_{C}^{\uparrow}, \\
        \boldsymbol{P}(z) e^{-k(H(z)+\Lambda/2)\sigma_3} \boldsymbol{T}(1) e^{\frac{k}{2} \tau_{C}(z)^{3/2}\sigma_3}, & z\in \D_{C}^{\downarrow}.
    \end{cases}
\end{equation}

\begin{figure}[ht]
    \centering
    \begin{tabular}{ccc}
        \begin{tikzpicture}
            \draw[rotate=-108.313,thick, blue] (0,0) -- (0:2.75cm);
            \draw[rotate=-108.313, very thick,->,blue] (0:2.cm) -- (0:2.01cm);
            \node at (-1.3,-2.2) {$e^{ik\omega\sigma_3}$};
            \draw[rotate=-108.313,thick,blue] (0,0) -- (120:2.75cm);
            \draw[ rotate=-108.313, very thick,->,blue] (120:2.cm) -- (120:2.01cm);
            \node at (1.9,0) {$\boldsymbol{T}(e^{-ik\Omega})$};
        
            \draw[rotate=-108.313,thick] (0,0) -- (60:2.75cm);
            \draw[rotate=-108.313, very thick,->] (60:2.cm) -- (60:2.01cm);
            \node at (.85,-1.95) {$\boldsymbol{U}_k(-\chi)$};
        
            \draw[rotate=-108.313,thick] (0,0) -- (-60:2.75cm);
            \draw[ rotate=-108.313, very thick,->] (-60:2.cm) -- (-60:2.01cm);
            \node at (-2.05,0) {$\boldsymbol{L}_{k}^{-1}(\chi)$};
        
            \draw[rotate=-108.313,thick] (0,0) -- (180:2.75cm);
            \draw[ rotate=-108.313, very thick,->] (180:2.cm) -- (180:2.01cm);
            \node at (1.5,1.9) {$\boldsymbol{U}_{k}(-\chi)$};
        
            \draw[fill=black] (0,0) circle (.075cm);
            \node at (-.25,.25) {$C$};
    
            \path (-2.25,2.25) node [shape=rectangle, draw] {$z$};
        
        \end{tikzpicture} &
        \textcolor{white}{fill me up}&
        \begin{tikzpicture}
            \draw[rotate=-108.313,thick,blue] (0,0) -- (120:2.75cm);
            \draw[ rotate=-108.313, very thick,->,blue] (120:2.cm) -- (120:2.01cm);
            \node at (2.25,.0) {$\boldsymbol{T}(1)$};
        
            \draw[rotate=-108.313,thick] (0,0) -- (60:2.75cm);
            \draw[rotate=-108.313, very thick,->] (60:2.cm) -- (60:2.01cm);
            \node at (1.33,-2) {$\boldsymbol{U}_0$};
        
            \draw[rotate=-108.313,thick] (0,0) -- (-60:2.75cm);
            \draw[ rotate=-108.313, very thick,->] (-60:2.cm) -- (-60:2.01cm);
            \node at (-2.2,-.1) {$\boldsymbol{L}_{0}^{-1}$};
        
            \draw[rotate=-108.313,thick] (0,0) -- (180:2.75cm);
            \draw[ rotate=-108.313, very thick,->] (180:2.cm) -- (180:2.01cm);
            \node at (1.1,2.1) {$\boldsymbol{U}_{0}$};
        
            \draw[fill=black] (0,0) circle (.075cm);
            \node at (-.25,.25) {$C$};
    
            \path (-2.25,2.25) node [shape=rectangle, draw] {$z$};
            \node at (2.25,2.5) {$(1)$};
        
        \end{tikzpicture} \\
        \textcolor{white}{s} & \textcolor{white}{s} & \textcolor{white}{s} \\
        \begin{tikzpicture}
            \draw[rotate=-108.313,thick,blue] (0,0) -- (0:2.75cm);
            \draw[ rotate=-108.313, very thick,->,blue] (0:2.cm) -- (0:2.01cm);
            \node at (-.1,-2.2) {$\boldsymbol{T}(1)$};

            \draw[rotate=-108.313,thick] (0,0) -- (60:2.75cm);
            \draw[rotate=-108.313, very thick,->] (60:2.cm) -- (60:2.01cm);
            \node at (1.2,-2) {$\boldsymbol{L}_0^{-1}$};
        
            \draw[rotate=-108.313,thick] (0,0) -- (-60:2.75cm);
            \draw[ rotate=-108.313, very thick,->] (-60:2.cm) -- (-60:2.01cm);
            \node at (-2.2,-.1) {$\boldsymbol{L}_{0}^{-1}$};
        
            \draw[rotate=-108.313,thick] (0,0) -- (180:2.75cm);
            \draw[ rotate=-108.313, very thick,->] (180:2.cm) -- (180:2.01cm);
            \node at (1.,2.1) {$\boldsymbol{U}_{0}$};
        
            \draw[fill=black] (0,0) circle (.075cm);
            \node at (-.25,.25) {$C$};
    
            \path (-2.25,2.25) node [shape=rectangle, draw] {$z$};
            \node at (2.25,2.5) {$(2)$};
        
        \end{tikzpicture} &
        \textcolor{white}{fill me up}&
        \begin{tikzpicture}
            \draw[thick] (0,0) -- (0:2.75cm);
            \draw[->,very thick] (0:2.cm) -- (0:2.01cm);
            \node at (2,-.33) {$\boldsymbol{U}_{k}(-\tau_C^{3/2})$};
    
            \draw[thick] (0,0) -- (180:2.75cm);
            \draw[->,very thick] (180:2.cm) -- (180:2.01cm);
            \node at (-2.35,-.3) {$\boldsymbol{T}(1)$};
    
            \draw[thick] (0,0) -- (120: 2.75cm);
            \draw[->,very thick] (120:2.cm) -- (120:2.01cm);
            \node at (0, 2) {$\boldsymbol{L}_{k}^{-1}(\tau_{C}^{3/2})$};

            \draw[thick] (0,0) -- (-120: 2.75cm);
            \draw[->,very thick] (-120:2.cm) -- (-120:2.01cm);
            \node at (0, -2) {$\boldsymbol{L}_{k}^{-1}(\tau_{C}^{3/2})$};
    
            \draw[fill=black] (0,0) circle (.075cm);
            \node at (.25,-.25) {$0$}; 
    
            \path (-2.25,2.25) node [shape=rectangle, draw] {$\tau_C$};
            \node at (2.25,2.5) {$(3)$};
        \end{tikzpicture}
    \end{tabular}
    \caption{The top-left panel shows the jumps of $\boldsymbol{P}(z)$ in $\D_C$, where $\chi:=\Real(2H+\Lambda)$. The top-right panel shows the jumps of $\boldsymbol{P}(z)e^{-k(H(z)+\Lambda/2)\sigma_3}$. The bottom-left panel shows the jumps of $\boldsymbol{Q}^{(C)}(z)e^{-\frac{k}{2}\tau_{C}(z)^{3/2}}$. The bottom-right panel shows the jumps of $\boldsymbol{Q}^{(C)}(z)$ in the $\tau_{C}$-plane.}\label{trackingAiry}
\end{figure}
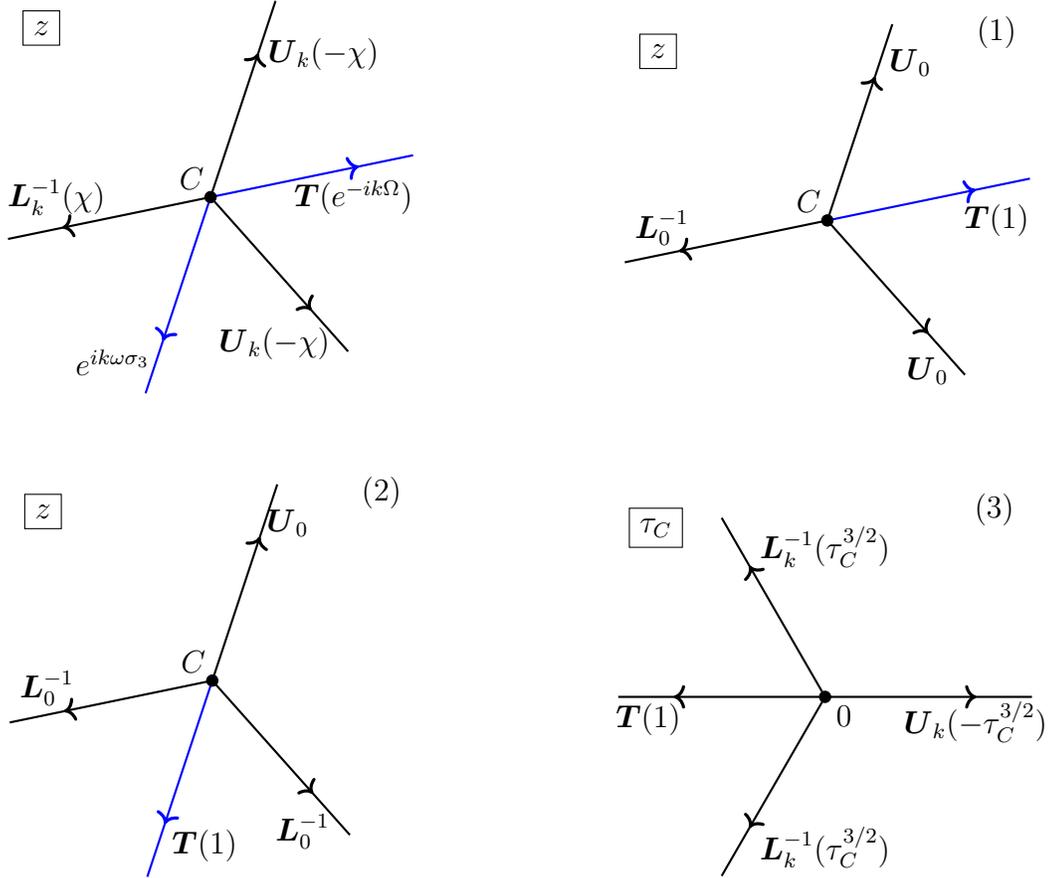
Comparing the jumps of $\boldsymbol{Q}^{(C)}(z)$ to the jumps of $\boldsymbol{A}\left( k^{2/3} \tau_{C}(z)\right)$ (see Appendix \ref{AiryAppendix}) we see that they agree. Notice, if we replace $\boldsymbol{P}(z)$ with $\dotbold{P}^{\out}(z)$ in \eqref{AiryKey}, then the resulting function will only  have a jump of $\boldsymbol{T}(1)$  on $\widetilde{\Gamma}$. Set \begin{equation}\label{AnalyticPrefactorz=C}
    \boldsymbol{H}^{(C)}(z):= \displaystyle\begin{cases}
        \dotbold{P}^{\out}(z) e^{-k(H(z)+\Lambda/2)\sigma_3}  e^{\frac{k}{2} \tau_{C}(z)^{3/2}\sigma_3} \boldsymbol{V}^{-1} \tau_{C}^{-\sigma_{3}/4}, & z\in \D_{C}^{\uparrow}, \\
        \dotbold{P}^{\out}(z) e^{-k(H(z)+\Lambda/2)\sigma_3} \boldsymbol{T}(1) e^{\frac{k}{2} \tau_{C}(z)^{3/2}\sigma_3}\boldsymbol{V}^{-1} \tau_{C}^{-\sigma_{3}/4}, & z\in \D_{C}^{\downarrow},
    \end{cases}
\end{equation}
where $\boldsymbol{V}$ is defined in \eqref{VMatDef}. Seeing as $\tau_C(z)^{-\sigma_3/4}\boldsymbol{V}$ also has a jump of $\boldsymbol{T}(1)$ on $\widetilde{\Gamma}$ (because $\tau_C$ maps $\widetilde{\Gamma}$ to the negative real numbers and the $3/2$-root function is principal), it follows that $\boldsymbol{H}^{(C)}(z)$ is continuous across $\widetilde{\Gamma}$. Further, as $z \to C$, the largest blowup in \eqref{AnalyticPrefactorz=C} is a square-root blowup (coming from the quarter-root blowups of $\dotbold{P}^{\out}(z)$ and $\tau_C(z)^{-\sigma_3/4}$). This implies that $\boldsymbol{H}^{(C)}(z)$ is analytic in $\D_C$ because  $\boldsymbol{H}^{(C)}(z)$ has no square-root jumps in $\D_C$ as it is continuous in $\D_C$.

Now, since $\boldsymbol{A}\left( k^{2/3} \tau_{C}(z) \right)$ and $\boldsymbol{Q}^{(C)}(z)$ satisfy the same jump conditions, and $\boldsymbol{H}^{(C)}$ is analytic in $\D_C$, it follows that $\boldsymbol{H}^{(C)}(z)k^{-\sigma_{3}/6}\boldsymbol{A}\left( k^{2/3} \tau_{C}(z)\right) \boldsymbol{Q}^{(C)}(z)^{-1}$ is analytic in $\D_C$. This in turn implies
\begin{equation}
    \dotbold{P}^{(C)}(z;x,k) \equiv \dotbold{P}^{(C)}(z) := \boldsymbol{H}^{(C)}(z)k^{-\sigma_{3}/6}\boldsymbol{A}\left( k^{2/3} \tau_{C}(z)\right) \boldsymbol{Q}^{(C)}(z)^{-1} \boldsymbol{P}(z)
\end{equation}
satisfies the same jump conditions as $\boldsymbol{P}(z)$. The upshot is that we have an explicit formula for $\dotbold{P}^{(C)}(z)$ because we can solve for $\boldsymbol{Q}^{(C)}(z)^{-1} \boldsymbol{P}(z)$ in \eqref{AiryKey}. Indeed,
\begin{equation}
    \dotbold{P}^{(C)}(z) = \begin{cases}
        \boldsymbol{H}^{(C)}(z)k^{-\sigma_{3}/6}\boldsymbol{A}\left( k^{2/3} \tau_{C}(z)\right) e^{-\frac{k}{2} \tau_{C}(z)^{3/2}\sigma_3} e^{k(H(z)+\Lambda/2)\sigma_3}, & z \in \D_{C}^{\uparrow}, \\
        \boldsymbol{H}^{(C)}(z)k^{-\sigma_{3}/6}\boldsymbol{A}\left( k^{2/3} \tau_{C}(z)\right) e^{-\frac{k}{2} \tau_{C}(z)^{3/2}\sigma_3} \boldsymbol{T}(-1) e^{k(H(z)+\Lambda/2)\sigma_3}, & z \in \D_{C}^{\downarrow}.      
    \end{cases}
\end{equation}
Also, we can express $\dotbold{P}^{\out}(z)^{-1}$ in terms of $\boldsymbol{H}^{(C)}(z)$ to find that 
\begin{equation}
    \dotbold{P}^{(C)}(z) \dotbold{P}^{\out}(z)^{-1}= \boldsymbol{H}^{(C)}(z) k^{-\sigma_3/6} \left( \boldsymbol{A}\left(k^{2/3} \tau_C  \right)\boldsymbol{V}^{-1}\left( k^{2/3} \tau_C(z)\right)^{-\sigma_3/4} \right) k^{\sigma_3/6} \boldsymbol{H}^{(C)}(z)^{-1}.
\end{equation}
Therefore, using the normalization of $\boldsymbol{A}$  (see \eqref{AiryNormalization}), on $\partial \D_C$, as $k \to \infty$,
\begin{equation}
    \dotbold{P}^{(C)}(z) \dotbold{P}^{\out}(z)^{-1} = \I +\begin{bmatrix}
        \Oh\left( k^{-2} \right) & \Oh\left( k^{-1} \right) \\
        \Oh\left( k^{-1} \right) & \Oh\left( k^{-2} \right)
    \end{bmatrix}.
\end{equation}

\subsubsection{Local Parametrices at the Endpoints $z=A$, $z=B$, and $z=D$}

We now construct the other parametrices. In $\D_A$, after clearing the $z$-dependence, the core jumps of $\boldsymbol{P}(z)$ are already in the correct shape. Thus, set
\begin{equation}
    \boldsymbol{H}^{(A)}(z) := \dotbold{P}^{\out}(z) e^{-k(H(z)+\Lambda/2)\sigma_3} e^{\frac{k}{2}\tau_{A}(z)^{3/2}\sigma_3}\boldsymbol{V}^{-1} \tau_{A}(z)^{-\sigma_3/4}
\end{equation}
and
\begin{equation}
    \dotbold{P}^{(A)}(z) := \boldsymbol{H}^{(A)}(z)k^{-\sigma_{3}/6}\boldsymbol{A}\left( k^{2/3} \tau_{A}(z)\right) e^{-\frac{k}{2} \tau_{A}(z)^{3/2}\sigma_3} e^{k(H(z)+\Lambda/2)\sigma_3}.
\end{equation}

In $\D_B$, after clearing the $z$-dependence, the core jumps of $\boldsymbol{P}(z)$ are all off by a sign in the off-diagonal entries. So, set
\begin{equation}
    \boldsymbol{H}^{(B)}(z) := \dotbold{P}^{\out}(z) e^{-k(H(z)+\Lambda/2)\sigma_3} e^{\frac{i\pi}{2}\sigma_3} e^{\frac{k}{2}\tau_{B}(z)^{3/2}\sigma_3}\boldsymbol{V}^{-1} \tau_{B}(z)^{-\sigma_3/4}
\end{equation}
and
\begin{equation}
    \dotbold{P}^{(B)}(z) := \boldsymbol{H}^{(B)}(z)k^{-\sigma_{3}/6}\boldsymbol{A}\left( k^{2/3} \tau_{B}(z)\right) e^{-\frac{k}{2} \tau_{B}(z)^{3/2}\sigma_3} e^{-\frac{i\pi}{2}\sigma_3} e^{k(H(z)+\Lambda/2)\sigma_3}.
\end{equation}

In $\D_D$, after clearing the $z$-dependence, the core triangular jumps of $\boldsymbol{P}(z)$ are all the wrong triangularity. Hence, set
\begin{equation}
    \boldsymbol{H}^{(D)}(z) := \dotbold{P}^{\out}(z) e^{-k(H(z)+\Lambda/2)\sigma_3}\boldsymbol{T}(1) e^{\frac{i\pi}{2}\sigma_3} e^{\frac{k}{2}\tau_{D}(z)^{3/2}\sigma_3}\boldsymbol{V}^{-1} \tau_{D}(z)^{-\sigma_3/4}
\end{equation}
and
\begin{equation}
    \dotbold{P}^{(D)}(z) := \boldsymbol{H}^{(D)}(z)k^{-\sigma_{3}/6}\boldsymbol{A}\left( k^{2/3} \tau_{D}(z)\right) e^{-\frac{k}{2} \tau_{D}(z)^{3/2}\sigma_3} e^{-\frac{i\pi}{2}\sigma_3} \boldsymbol{T}(-1) e^{k(H(z)+\Lambda/2)\sigma_3}.
\end{equation}
Further, for each endpoint the local parametrix matches well with the outer parametrix. That is, for $p \in \left\{ A,B,C,D\right\}$, for each $z \in \partial \D_p$, as $k \to \infty$, the mismatch jumps satisfy
\begin{equation}
   \dotbold{P}^{(p)}(z) \dotbold{P}^{\out}(z)^{-1} = \I +\begin{bmatrix}
        \Oh\left( k^{-2} \right) & \Oh\left( k^{-1} \right) \\
        \Oh\left( k^{-1} \right) & \Oh\left( k^{-2} \right)
    \end{bmatrix}.
\end{equation}

\subsection{Error Analysis}
In this section we quantify the error introduced by omitting the triangular jumps that decay to the identity.

\subsubsection{The Global Parametrix} We are now ready to define the global parametrix. In $\Ci \setminus (\Sigma_1 \cup \Gamma \cup \Sigma_2)$ set
\begin{equation}
   \dotbold{P}(z;x,k) \equiv \dotbold{P}(z) = \begin{cases}
        \dotbold{P}^{(A)}(z), & z \in \D_{A}, \\
        \dotbold{P}^{(B)}(z), & z \in \D_{B}, \\
        \dotbold{P}^{(C)}(z), & z \in \D_{C}, \\
        \dotbold{P}^{(D)}(z), & z \in \D_{D}, \\
        \dotbold{P}^{\out}(z),& \text{otherwise}.
    \end{cases}
\end{equation}
Notice, for $z$ large enough, $\dotbold{P}(z) = \dotbold{P}^{\out}(z)$ and we can use the normalization condition of $\dotbold{P}^{\out}(z)$ to write
\begin{equation}\label{LargezExpanofPDotg1}
    \dotbold{P}(z) = \I + \frac{\dotbold{P}_{-1}}{z} + \frac{\dotbold{P}_{-2}}{z^2} + \Oh(1/z^3), \hspace{1cm} z \to \infty.
\end{equation}
We introduce the error as
\begin{equation}
    \boldsymbol{E}(z;x,k) \equiv \boldsymbol{E}(z) := \boldsymbol{P}(z) \dotbold{P}^{-1}(z).
\end{equation}
As in the pole-free case, when $z$ is not on a jump contour of $\boldsymbol{P}(z)$ or $\dotbold{P}(z)$ then $\boldsymbol{E}(z)$ is the product of analytic functions and thus is analytic. Further, for each $p \in \left\{ A, B ,C ,D \right\}$, $\boldsymbol{P}(z)$ and $\dotbold{P}(z)$ have exactly the same jump conditions inside $\D_p$, and thus $\boldsymbol{E}(z)$ is analytic in $\D_p$. Similarly, $\boldsymbol{P}(z)$ and $\dotbold{P}(z)$ satisfying the same jump conditions on $\Sigma_1 \cup \Gamma \cup \Sigma_2$ implies that $\boldsymbol{E}(z)$ is analytic on $\Sigma_1 \cup \Gamma \cup \Sigma_2$. Therefore, the only jumps of $\boldsymbol{E}(z)$ are the boundary of each disk and the jump contours of $\boldsymbol{P}(z)$ that are bounded away from the endpoints $A$, $B$, $C$, and $D$.

\begin{RHP}
    For each $k \in \N$, find $\boldsymbol{E}(z)$ so that 
    \begin{enumerate}
        \item \textbf{Analyticity}:
            $\boldsymbol{E}(z)$ is analytic in $z$ except along the jump contours described above.
        \item \textbf{Jump Condition:}
            $\boldsymbol{E}(z)$ can be continuously extended to the boundary and the boundary values taken by $\boldsymbol{E}(z)$ are related by the jump condition $\boldsymbol{E}_+(z)=\boldsymbol{E}_-(z)\boldsymbol{V}^{\boldsymbol{(E)}}(z)$, where
            \begin{equation}
                \boldsymbol{V}^{\boldsymbol{(E)}}(z) =
                \begin{cases}
                    \dotbold{P}^{(p)}(z) \dotbold{P}^{\out}(z)^{-1}, & z \in \partial \D_{p} \text{ for } p \in \{A,B,C,D\}, \\
                    \dotbold{P}^{\out}(z) \boldsymbol{V}^{(\boldsymbol{P})}(z) \dotbold{P}^{\out}(z)^{-1}, & \text{otherwise}.
                \end{cases}
            \end{equation}    
        \item \textbf{Normalization:} As $z \to \infty$, the matrix $\boldsymbol{E}(z)= \I + \Oh\left(1/z\right)$.    
    \end{enumerate}
\end{RHP}
Just like in the pole-free case, since $\boldsymbol{E}(z)$ normalizes to the identity and all of its jump matrices are close to the identity matrix we have the following proposition:
\begin{proposition}\label{error in pole}
    Let $x$ be in the pole region of the $x$-plane. For large $k \in \N$, if $x \in \mathscr{S}_k$ then 
    \begin{equation}\label{Error Result genus1}
        \boldsymbol{E}(z) = \I + \Oh(1/k).
    \end{equation}
\end{proposition}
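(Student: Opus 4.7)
The plan is to invoke the standard small-norm theory of Riemann-Hilbert problems, exactly as in the pole-free case. The content reduces to verifying that $\boldsymbol{V}^{(\boldsymbol{E})}(z)-\I$ is of order $1/k$ in $L^2 \cap L^\infty$ on the full contour of $\boldsymbol{E}(z)$, uniformly for $x \in \mathscr{S}_k$. Given this, the Beals-Coifman machinery produces the resolvent of $\I - \mathcal{C}_{\boldsymbol{V}^{(\boldsymbol{E})}-\I}$, and reconstructing $\boldsymbol{E}(z)$ via the Plemelj formula followed by a routine Cauchy estimate yields \eqref{Error Result genus1}.

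First I would split $\boldsymbol{V}^{(\boldsymbol{E})}-\I$ into two types of contour. On each boundary $\partial \D_p$, $p \in \{A,B,C,D\}$, the inner parametrix was constructed precisely so that the mismatch $\dotbold{P}^{(p)}(z)\dotbold{P}^{\out}(z)^{-1}$ has the entrywise shape $\I + \begin{bmatrix} \Oh(k^{-2}) & \Oh(k^{-1}) \\ \Oh(k^{-1}) & \Oh(k^{-2}) \end{bmatrix}$ inherited from the asymptotic expansion of the Airy solution $\boldsymbol{A}(k^{2/3}\tau_p(z))$. On every other portion of the contour, $\boldsymbol{V}^{(\boldsymbol{E})}$ is a conjugate by $\dotbold{P}^{\out}$ of a triangular jump of $\boldsymbol{P}(z)$ whose one nontrivial entry is $e^{\pm k(2H+\Lambda)}$. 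Each such contour was placed where $\Real(2H+\Lambda)$ carries the favorable sign, so on the compact portions the exponent is bounded above by $-c<0$, while on the rays out to $\infty$ the cubic growth of $\theta(z)$ forces super-exponential decay; these parts of $\boldsymbol{V}^{(\boldsymbol{E})}-\I$ are $\Oh(e^{-ck})$.

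The principal obstacle is controlling the conjugation by $\dotbold{P}^{\out}$ as $x$ approaches the exceptional set $\mathscr{P}_k$, since $\dotbold{P}^{\out}(z)$ depends on $x$ through the $\Theta$-quotient prefactors $C_{11}, C_{22}$ in \eqref{defOfC11}--\eqref{defOfC22} whose denominators $\Theta(\A(\infty)+\A(Q)+\K \mp kF_1U)$ vanish precisely on $\mathscr{P}_k$. Because the zeros of $\Theta$ are simple and the argument moves linearly in $k$ as $x$ varies, each denominator behaves like $\Oh(k\,(x-x_0))$ near $x_0 \in \mathscr{P}_k$, so the constraint $\dist(x,\mathscr{P}_k) > \delta/k^{2/3}$ yields a polynomial-in-$k$ a priori bound on $\|\dotbold{P}^{\out}\|_{L^\infty}$ and $\|(\dotbold{P}^{\out})^{-1}\|_{L^\infty}$ along the contour. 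The exponential smallness on the rays and lens boundaries absorbs this polynomial growth trivially. On the boundary circles one must instead combine the column-wise structure of the normalizing constants $C_{11}$ and $C_{22}$ with the specific off-diagonal-versus-diagonal entry pattern of the Airy mismatch to verify that the conjugate remains $\I + \Oh(1/k)$; the exponent $2/3$ in the definition of $\mathscr{S}_k$ is calibrated so that this balancing argument closes. With uniform smallness in hand, $\|\mathcal{C}_{\boldsymbol{V}^{(\boldsymbol{E})}-\I}\|_{L^2} = \Oh(1/k)$ for $k$ sufficiently large and $x \in \mathscr{S}_k$, and the standard small-norm theorem delivers \eqref{Error Result genus1}.
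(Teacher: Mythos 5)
Your approach is the same as the paper's (which gives only a one-sentence appeal to the pole-free analogy): split $\boldsymbol{V}^{(\boldsymbol{E})}-\I$ into the $\Oh(1/k)$ Airy mismatch on the disk boundaries and the exponentially small jumps elsewhere, and invoke the standard small-norm theorem. You correctly single out the issue that is genuinely new in the pole region — namely, that $C_{11}$, $C_{22}$, and hence $\dotbold{P}^{\out}$ and its inverse, degenerate as $x\to\mathscr{P}_k$, so uniformity over $\mathscr{S}_k$ is in question — something the paper simply does not discuss.

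Your treatment of that issue, however, has a gap at exactly the step that carries the weight. The local estimate ``the denominator behaves like $\Oh(k(x-x_0))$'' only governs an $\Oh(1/k)$-neighborhood of $x_0$, since $\Theta$ is a bounded function whose zero lattice has $\Oh(1)$ spacing; once $|x-x_0|\gtrsim 1/k$ the $\Theta$-argument, moving (as you say) at rate $\Oh(k)$, has left the fundamental domain attached to $x_0$ and the linearization gives no information. The constraint $\dist(x,\mathscr{P}_k)>\delta/k^{2/3}$ places $x$ well outside that $\Oh(1/k)$-window, so combining it with your local estimate produces no bound — and a ``polynomial-in-$k$'' lower bound for $|\Theta|$ cannot be right in any case, because $\Theta$ is bounded. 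There is a deeper tension you do not address: if the argument genuinely moves at rate $\Oh(k)$ in $x$, then $\mathscr{P}_k$ is $\Oh(1/k)$-dense, and a $\delta/k^{2/3}$-excision would leave $\mathscr{S}_k$ empty for large $k$, so the relationship between the rate of motion and the exponent $2/3$ must be pinned down before any uniform estimate can be attempted. Finally, the ``balancing argument'' on the disk boundaries — which you single out as the mechanism that makes the mismatch stay at $\I+\Oh(1/k)$ — is named but never carried out, and it is precisely the place where a proof of this proposition would have to do its real work. As written the central estimate is missing, though the paper itself leaves the same gap.
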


\subsubsection{Extraction Formula} The formula \eqref{pk asymptotic Formula in g0} is still valid in the pole region given that $x \in \mathscr{S}_k$. Moreover, Proposition \ref{error in pole} implies \eqref{pk asymptotic Formula in g0} can be written as
\begin{equation}\label{somethingSilly}
    p_k(s) = 2i \left( \frac{k}{2} \right)^{1/3} \left(  \frac{-[\dotbold{P}_{-2}]_{12}+[\dotbold{P}_{-1}]_{22}[\dotbold{P}_{-1}]_{12}+ \Oh(1/k)}{ [\dotbold{P}_{-1}]_{12} + \Oh(1/k)}  \right).
\end{equation}
A direct computation shows that
\begin{equation}\label{PdotToQdot}
    \frac{-[\dotbold{P}_{-2}]_{12}+[\dotbold{P}_{-1}]_{22}[\dotbold{P}_{-1}]_{12}+ \Oh(1/k)}{ [\dotbold{P}_{-1}]_{12} + \Oh(1/k)} =   \frac{-[\dotbold{Q}_{-2}]_{12}+[\dotbold{Q}_{-1}]_{22}[\dotbold{Q}_{-1}]_{12}+ \Oh(1/k)}{ [\dotbold{Q}_{-1}]_{12} + \Oh(1/k)},
\end{equation}
where the quantities in the right-hand side are defined in \eqref{QdotNormalization}.
We now recover the large-$z$ expansion coefficients that are present in the right-hand side of \eqref{PdotToQdot}.
First, using \eqref{LargeZAbelMap} and the fact $\Theta(z)$ is entire we have for any constant $v$, as $z \to \infty$,
\begin{equation}\label{dumbtemp4}
    \Theta(\A(z)+v) = \Theta(\A(\infty)+v) + \Theta'(\A(\infty)+v) \frac{\A_{-1}}{z} + \Oh\left(1/z^2\right).
\end{equation}
Recall
\begin{equation}\label{dumbtemp3}
    [\dotbold{Q}]_{22} e^{-kF_1z} = C_{22} \frac{\Theta(\A(z)-C_{1}+kF_1U)}{\Theta(\A(z)-C_{1})} e^{kF_1 \int_A^z \Upsilon - z} f^{\text{D}}(z),
\end{equation}
where $C_1 = - \A(Q)-\K$. Using \eqref{defOfC22}, \eqref{dumbtemp4}, and \eqref{LargeZupsilon} we can the above equation as
\begin{align}
    \begin{split}
        [\dotbold{Q}]_{22} e^{-kF_1z} &= \left(\frac{\Theta(\A(\infty)-C_{1})}{\Theta(\A(\infty)-C_{1}+kF_1U)} \frac{\Theta(\A(z)-C_{1}+kF_1U)}{\Theta(\A(z)-C_{1})}\right) \left(e^{kF_1 \left(\int_A^z\Upsilon - z+ \Upsilon_{0}\right)}\right) f^{\text{D}}(z) \\[5pt]
        &= \left(1+\frac{\LD_{22} \A_{-1}}{z} + \Oh(1/z^2)\right) \left(1 - \frac{k F_{1}\Upsilon_{-1}}{z} + \Oh(1/z^2)\right) (1+ \Oh(1/z^2)) \\[5pt]
        &= 1 + \frac{\LD_{22} \A_{-1}- kF_{1}\Upsilon_{-1}}{z} +  \Oh(1/z^2),
    \end{split}
\end{align}
where $\LD_{22}(x;k) \equiv \LD_{22}$ is defined in \eqref{L22}. Therefore, we find that 
\begin{equation}\label{dotQ 22-entry -1}
    [\dotbold{Q}_{-1}]_{22} = \A_{-1}\LD_{22} - kF_1\Upsilon_{-1}.
\end{equation}
Next, we find the quotient $[\dotbold{Q}_{-2}]_{12}/[\dotbold{Q}_{-1}]_{12}$. We write
\begin{align}
    \begin{split}
        [\dotbold{Q}]_{12} e^{-kF_1z} &=  C_{11} \frac{\Theta(\A(z)+C_{1}+kF_1U)}{\Theta(\A(z)+C_{1})} f^{\text{OD}}(z)  e^{kF_1 \left(\int_A^z\Upsilon - z\right)}\\[5pt]
        &=  \mu e^{-kF_1\Upsilon_{0}}  \left(1+\frac{\LD_{12}\A_{-1}-kF_1\Upsilon_{-1}}{z}+\Oh(1/z^2)\right)\\ 
        &\hspace{.33cm} \cdot \left(\frac{-i(B+D-A-C)}{4z} - \frac{i(B^2+D^2-A^2-C^2)}{8z^2} + \Oh(1/z^3)\right),
    \end{split}
\end{align}
where
\begin{equation}
    \mu(x;k) \equiv \mu := C_{11} \frac{\Theta(\A(\infty)+C_{1}+kF_1U)}{\Theta(\A(\infty)+C_{1})}
\end{equation}
and $\LD_{12}(x;k) \equiv \LD_{12}$ is defined in \eqref{L12}. Thus, we have that
\begin{equation}\label{dotQ 12-entry -1}
    [\dotbold{Q}_{-1}]_{12} =  \mu e^{-kF_1\Upsilon_{0}} \left( -i\frac{B+D-A-C}{4}  \right)
\end{equation}
and
\begin{align} \label{dotQ 12-entry -2}
    \begin{split}
        &[\dotbold{Q}_{-2}]_{12} =\\
        & \mu e^{-kF_1\Upsilon_{0}} \left(-\frac{i(B+D-A-C)(\LD_{12}\A_{-1}-kF_1\Upsilon_{-1})}{4}- \frac{i(B^2+D^2-A^2-C^2)}{8} \right).
    \end{split}
\end{align}
Therefore,
\begin{equation}\label{chunk2}
    \frac{[\dotbold{Q}_{-2}]_{12}}{[\dotbold{Q}_{-1}]_{12}} = \A_{-1}\LD_{12}-kF_1\Upsilon_{-1} + \frac{B^2+D^2-A^2-C^2}{2(B+D-A-C)}.
\end{equation}
We are now ready to prove Theorem \ref{genus1 theorem}.
\begin{proof}[Proof of Theorem \ref{genus1 theorem}]

    Notice \eqref{somethingSilly} and \eqref{PdotToQdot} give
    \begin{equation}
        p_k(s) = 2i \left( \frac{k}{2} \right)^{1/3} \left(  \frac{-[\dotbold{Q}_{-2}]_{12}+[\dotbold{Q}_{-1}]_{22}[\dotbold{Q}_{-1}]_{12}+ \Oh(1/k)}{ [\dotbold{Q}_{-1}]_{12} + \Oh(1/k)} \right).
    \end{equation}
    Writing $p_k$ in terms of $u^{(k+1/2)}_{\HM}$ using \eqref{AnnoyingScalingFormula} and \eqref{Scalings} results in
    \begin{equation}
        -(2k)^{-1/3}u^{(k+1/2)}_{\HM}\left( - \frac{k^{2/3}}{2^{1/3}}x\right) = i\left(  \frac{-[\dotbold{Q}_{-2}]_{12}+[\dotbold{Q}_{-1}]_{22}[\dotbold{Q}_{-1}]_{12}+ \Oh(1/k)}{ [\dotbold{Q}_{-1}]_{12} + \Oh(1/k)} \right).
    \end{equation}
    This is the first statement of the theorem. Now, assume that $[\dotbold{Q}_{-1}]_{12} \ne 0$. In this case, we can geometrically expand the denominator of the above equation to find
    \begin{equation}
        -(2k)^{-1/3}u^{(k+1/2)}_{\HM}\left( - \frac{k^{2/3}}{2^{1/3}}x\right) = -i\left( [\dotbold{Q}_{-1}]_{22}-\frac{[\dotbold{Q}_{-2}]_{12}}{[\dotbold{Q}_{-1}]_{12}} + \Oh\left( 1/k \right)\right).
    \end{equation}
    Finally, \eqref{dotQ 22-entry -1} and \eqref{chunk2} give our desired result.
\end{proof}

\section{Numerical Approximation}\label{Numerics}

We use numerical methods to obtain approximations of the exact generalized Hastings-McLeod functions. However, since $u_{\HM}^{(\alpha)}$ has poles in the pole region of the plane, we will employ two distinct solvers: a spectral  collocation boundary condition solver to handle the smooth, pole-free region, and a pole-vaulting method for the pole region developed by Fornberg and Weideman \cite{Fornberg2011ANM}. This numerical scheme has been used to survey the pole fields of various Painlev\'e transcendents; see \cite{fasondini2018computational,fornberg2014computational,reeger2014painleve}. In this section we will give a brief description of each method.

\subsection{Spectral Collocation Method}

For $N \in \N$, we consider the interpolation nodes 
\begin{equation}
    t_k = \cos\left(\frac{(k-1)\pi}{N-1}\right),\hspace{1cm} k=1, \dots, N,
\end{equation}
which are  Chebyshev points of the second kind, i.e.\ the extreme points on $[-1,1]$ of $T_{N-1}(t)$, the Chebyshev polynomial of degree $N-1$. For a smooth function $f$ on $[-1,1]$ we set the interpolant $P_{N-1}(t)$ to be a linear combination of Lagrangian interpolating polynomials. So,
\begin{equation}\label{interpolantCol}
    f(t) \approx P_{N-1}(t) = \sum\limits_{j=1}^N \varphi_j(t) f_j,
\end{equation}
where  
\begin{equation}
    f_j:=f(t_j) \hspace{.66 cm} \text{and} \hspace{.66 cm} \varphi_j (t) :=\prod\limits_{k=1, k \ne j}^N \frac{(t-t_k)}{(t_j-t_k)}.
\end{equation}
In fact, since $t_2, \dots, t_{N-1}$ are the critical points of $T_{N-1}(t)$, 
\begin{equation}
    \varphi_j(t) = \frac{(-1)^j}{c_j} \frac{1-t^2}{(N-1)^2} \frac{T_{N-1}'(t)}{t-t_j},
\end{equation} where $c_1=c_N=2$ and $c_2,\dots,c_{N-1}=1$.
Notice, on the set of interpolation nodes  $\{t_1, \dots, t_N\}$, $\varphi_j(t_k)$ acts as a Kronecker delta function of $j$ and $k$. That is,
\begin{equation}
    \varphi_{j}\left(t_k\right) = \delta_{jk} = \begin{cases}
        0 & \text{if } j \ne k, \\
        1, & \text{otherwise}.
    \end{cases}
\end{equation}
We now define a collocation derivative operator, $\boldsymbol{D}$, on the interpolant $P_{N-1}(t)$ by 
\begin{equation}
    \boldsymbol{D}p_{N-1}(t) := \left[P'_{N-1} \left(t_1\right), \dots, P'_{N-1}\left(t_N\right)\right]^T.
\end{equation}
It was shown in \cite[p. 69]{diffMat} that
\begin{equation}
    [\boldsymbol{D}]_{k,j} = \begin{cases}\displaystyle
        \frac{c_k(-1)^{j+k}}{c_j (t_k-t_j)}, &  j \ne k, \\[10pt]
        \displaystyle \frac{2(N-1)^2+1}{6}, & j=k=1, \\[10pt]
        \displaystyle -\frac{2(N-1)^2+1}{6}, & j=k=N, \\
        \displaystyle \frac{-t_k}{2(1-t_k^2)}, & \text{otherwise.}
    \end{cases}.
\end{equation}
Furthermore, in our context, taking $k$ derivatives of $P_{N-1}(t)$ and evaluating at the nodes $\{t_1, \dots, t_N \}$ can be accomplished by the operator $\boldsymbol{D}^{(k)}$ where
\begin{equation}\label{discreteDerivative}
    \boldsymbol{D}^{(k)} := \left(\boldsymbol{D}\right)^{k}.
\end{equation}

\subsubsection{Collocation Method for Generalized Hastings-McLeod} 
We now turn our attention to applying this method to the generalized Hastings-McLeod function for fixed $\alpha > -1/2$. First, we fix complex numbers $y_1$ and $y_2$ in the $y$-plane with $\Imag\left(y_1\right) = \Imag\left(y_2\right)$, $\Real\left(y_1\right) \ll 0 \ll  \Real\left(y_2\right)$, so that the asymptotics \eqref{BonCon} are good approximations for both $u(y_1)$ and $u(y_2)$. We also require that, after the appropriate scaling, the horizontal line segment connecting $x_1$ to $x_2$ (which we will denote as $[x_1,x_2]$) is contained in the pole-free region of the $x$-plane. Next, we transform  the ODE \eqref{PII} from $[y_1,y_2]$, to $[-1,1]$. In particular, we use the linear transformation
\begin{equation}
    f(t) := \frac{y_1+y_2 +\left(y_2-y_1\right)t}{2}.
\end{equation}
Then, if $u$ satisfies \eqref{PII}, then  $v:[-1,1] \to \Ci$ defined by $v(t):= u(f(t))$ satisfies
\begin{equation}\label{transformedODE}
    \frac{\dd^2v}{\dd t^2}(t) = \frac{\left(y_2-y_1\right)^2}{4} \left(2v(t)^3 + f(t)v(t)-\alpha\right).
\end{equation}
We interpolate $v(t)$ as in \eqref{interpolantCol}. That is, for $N$ unknown values $v_1, \dots, v_N$ we set
\begin{equation}\label{PNinterpolant}
     P_{N-1}(t):= \sum\limits_{j=1}^{N}\varphi_j(t)v_j.
\end{equation}
Our aim is for $v_1,\dots,v_N$ to approximate the values $v(t_1), \dots, v(t_N)$ respectively. Then, $v(t)\approx P_{N-1}(t)$. Now, we apply the discrete derivative operator \eqref{discreteDerivative} on the left-hand side of \eqref{PNinterpolant} and plug in $P_{N-1}$ for $v$ in the right-hand side of \eqref{transformedODE}. As a result, we obtain an $N \times N$ system of equations. We can enforce the boundary conditions of the generalized Hastings-McLeod solutions \eqref{BonCon} by setting
\begin{equation}
    v_N := f\left(\sqrt{-a/2}\right),
\end{equation}
since $x_1$ was chosen so that \eqref{BonCon} holds. Hence, $ f\left(\sqrt{-a/2}\right) \approx f\left(u^{(\alpha)}_{\HM}\left(x_1\right)\right) \approx v_N$. Similarly, we set $v_1 := f\left(\alpha/2\right)$. Finally, we use a Newton solver on the remaining $N-2$ equations and variables.

\subsection{Pad\'e Approximant Method}

The collocation method is a boundary condition solver and hence assumes the function is smooth between the two boundary conditions. However, this is not always the case for the generalized Hastings-McLeod functions. Indeed, we know they blow up in the pole region of the $x$-plane.  Therefore, we use a Pad\'e approximant method due to Fornberg and Weideman, which is an initial-value problem solver, to handle this case. We now give a brief overview of Pad\'e approximants and apply this method to the generalized Hastings-McLeod functions.

\subsubsection{Pad\'e Expansion} Let $U$ be an open set in $\Ci$. Say $f$ is a meromorphic function in $U$ and, for some fixed $x_0 \in U$, $f$ is analytic at $x_0$. As such, $f$ has the following Taylor expansion about $x_0$:
\begin{equation}\label{taylorSeriesNum}
    f(x_0+h) = \sum\limits_{k=0}^nc_k h^k + \Oh\left(h^{n+1}\right).
\end{equation}
However, this approximation fails in a neighborhood of a pole. To obtain a more robust approximation that allows for some poles, we convert the Taylor expansion into a Pad\'e expansion. That is, instead of approximating $f$ as a polynomial, we approximate $f$ as a nontrivial rational function. So, we need to  find complex coefficients $a_0, \dots, a_{\nu_1}, b_1, \dots b_{\nu_2}$ such that
\begin{equation}\label{padeExpand}
    f(x_0+h)= \frac{a_0+a_1h+\cdots+a_{\nu_1}h^{\nu_1}}{1+b_{1}h+\cdots+b_{\nu_2}h^{\nu_2}} + \Oh\left(h^{n+1}\right),
\end{equation}
where $\nu_1+\nu_2=n$. In our problem we take $n$ to be even and set $\nu=\nu_1=\nu_2:=n/2$. The neighborhood in which the Pad\'e approximation \eqref{padeExpand} is decent is larger than the corresponding neighborhood of the Taylor approximation \eqref{taylorSeriesNum} because if $x_{*}$ is a pole, then the denominator of  \eqref{padeExpand} can vanish near $x_*$.  As the denominator has $\nu_2$ zeros, we see that the closure of the domain of \eqref{padeExpand} can contain $\nu_2$ poles before we lose control of the error term. 
\subsubsection{Pad\'e Approximant Method for Generalized Hastings-McLeod.} Fix $\alpha > -1/2$ and let $u(y)$ be the generalized Hastings-McLeod solution to \eqref{PII}. In order to obtain an approximation of $u$ in the form of \eqref{padeExpand} we will need a pair of initial values $u(y_0)$ and $u'(y_0)$. First, we use the forward Euler numerical scheme to obtain a partial Taylor series approximation of $u$ about $y_0$ for some even integer $n$ (in our problem we set $n=24$). That is, we set $c_0=u(y_0)$ and $c_1= u'\left(y_0\right)$ and write
\begin{equation}\label{numStepOne}
    u(y_0+h) = \sum\limits_{k=0}^{n} c_k h^k + \Oh\left(h^{n+1}\right)
\end{equation}
for $n-1$ unknown complex values $c_2, \dots, c_n$. To solve for  $c_2, \dots, c_n$  we plug  \eqref{numStepOne} into the ODE \eqref{PII} and find that
\begin{equation*}
\sum\limits_{k=2}^n k(k-1)c_{k}h^{k-2} + \Oh\left(h^{n-1}\right) = 2 \left(\sum_{k=0}^n c_k h^k\right)^3 + (y_0 +h)\left(\sum\limits_{k=0}^{n}c_k h^k\right) - \alpha + \Oh\left(h^{n-1}\right).
\end{equation*}
So, we can compare the coefficients  of $1, h, \dots, h^{n-2}$ and obtain an $(n-1) \times (n-1)$ system of equations (that can be solved recursively) for the variables $c_2,\dots, c_{n}$.
To convert \eqref{numStepOne} into a Pad\'e approximant, with $n+1$ unknown complex values $a_0,\dots, a_{\nu}, b_1, \dots b_{\nu}$ we write
\begin{equation}\label{padeTaylor}
    \frac{a_0+\cdots+ a_{\nu}h^{\nu}}{1+ b_1 h+ \cdots + b_{\nu} h^{\nu}} = \sum\limits_{k=0}^n c_kh^k + \Oh\left(h^{n+1}\right).
\end{equation}
Now, by multiplying both sides of \eqref{padeTaylor} by $1+b_1h+\cdots+b_{\nu}h^\nu$, we can compare the coefficients of the $n+1$ monomials $1,h,\dots, h^{n}$ and obtain an $(n+1) \times (n+1)$ system of equations that once again can be solved recursively. Hence, we find that
\begin{equation}\label{numStepTwo}
    u(y_0+h) = \frac{a_0+\cdots+a_{\nu}h^\nu}{1+ b_1h+\cdots + b_{\nu}h^{\nu}} + \Oh\left(h^{n+1}\right)
\end{equation}
in a neighborhood of $y_0$ for which $\nu$ poles have been excised.

Notice, in order to obtain an approximation resilient to $\nu$ poles, we just needed the initial values $u(y_0)$ and $u'(y_0)$. Further, \eqref{numStepTwo} can generate decent approximations of $u$ (and hence $u'$) at points that are not $y_0$. Thus, by choosing a point $y_1$ that is close enough to $y_0$ such that \eqref{numStepTwo} is still a good approximation of $u(y_1)$, we are able to extend the region of the $y$-plane where we can approximate $u$. As such, given $U$, a ``pole filled'' region of the complex $y$-plane and $y_0 \in U$ such that the values $u(y_0)$ and $u'(y_0)$ are known, we create a coarse grid of node points filling out $U$, with the aim of finding a Pad\'e approximant near each node point. Fornberg and Weideman first implemented such an algorithm in \cite{Fornberg2011ANM}. 

Their idea is to start at $y_0$ and to generate integration paths that branch out so that each node point is near a path. Further, to preserve accuracy by minimizing numerical cancellations, we want to avoid integrating near a pole. Therefore, we want to choose paths along which the solution remains small in magnitude. This is accomplished by running the following routine.
\begin{enumerate}
    \item 
        Uniformly sample a target node from the aforementioned grid of node points.
    \item 
        Find the nearest point with a recorded Pad\'e approximation. (Since we start with initial values we know such a point will exist.)
    \item 
        Take a step of length $h$ (in our case $h=1/2$) towards the target. We consider five candidate directions: one aiming directly at the target and the remaining ones veering off by $22.5^\circ$ and $45^\circ$ to either side. Evaluate the known Pad\'e approximant at each of the five candidate points, and pick the direction whose candidate had the smallest norm evaluation.
    \item 
        Apply the forward Euler scheme and Pad\'e approximation method with initial data $u$ and $u'$ corresponding to the chosen candidate. 
    \item    
        Add the candidate point to the list of points with Pad\'e approximants, record the Pad\'e coefficients, and remove all node points within distance $1/2$ of the Pad\'e approximant.
    \item 
        Repeat steps $(3)$ through $(5)$ until we are within distance $h$ away from our target.
\end{enumerate}
Finally, we repeat the above routine until the target points are empty. Now given $y \in U$, we approximate $u(y)$ by finding $w$, the nearest point to $y$ with a Pad\'e approximant, and evaluating this Pad\'e approximant at $w-y$. See Figure \ref{integrationPath} for an example when $\alpha=3/2$. Here, $y_0 = -\alpha^{2/3}\cdot 2^{-1/3} \cdot 3 \cdot e^{-\frac{2\pi}{3}i}$, and the values $u\left( y_0 \right)$ and $u'\left( y_0 \right)$ were found using the collocation method.

\begin{figure}[h]
    \centering
   \scalebox{.65}{\includegraphics{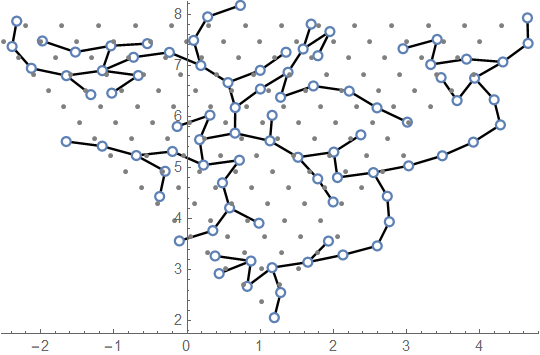}}
   \caption{Illustration of integration paths in the complex $y$-plane with $\alpha=3/2$. The open circles are the points with Pad\'e approximants about them. Notice, each gray dot (a node point) is within distance $h=1/2$ of an open circle.}\label{integrationPath}
\end{figure}

\appendix

\section{The Airy Parametrix}\label{AiryAppendix}
Here we give the Riemann-Hilbert problem corresponding to the standard Airy parametrix $\boldsymbol{A}(\zeta)$.
\begin{RHP} \label{AiryRHP}
    Find the $2 \times 2$ matrix-valued function $\boldsymbol{A}(\zeta)$ that satisfies the following properties:
    \begin{enumerate}
        \item \textbf{Analyticity:} $\boldsymbol{A}(\zeta)$ is analytic off the rays $\arg(\pm\zeta)=0$ and $\arg(\zeta)=\pm 2 \pi/3$.
        \item \textbf{Jump Condition:} The boundary values taken by $\boldsymbol{A}(\zeta)$ are related by the jump condition $\boldsymbol{A}_{+}(\zeta)=\boldsymbol{A}_{-}(\zeta)\boldsymbol{V}^{(\boldsymbol{A})}(\zeta)$, where the jump matrix is as shown in Figure \ref{AiryParametrixJumps}.
        \begin{figure}[h]
            \centering
            \scalebox{.85}{
                \begin{tikzpicture}[]
                    
                    \draw[fill=black] (0,0) circle (.1cm) node[below=.2cm]{\large$0$};
                    \draw[->, very thick] (0,0) -- ++(0: 3cm) node[right=0cm] {\large$\boldsymbol{U}_{1}(-\zeta^{3/2})$}; 
                    \draw[->, very thick] (0,0) -- ++(120: 3cm) node[above=0cm] {\large$\boldsymbol{L}_{1}^{-1}(\zeta^{3/2})$};
                    \draw[->, very thick] (0,0) -- ++(180: 3cm) node[left=0cm] {\large$\boldsymbol{T}(1)$}; 
                    \draw[->, very thick] (0,0) --++ (-120:3cm) node[below=0cm] {\large$\boldsymbol{L}_{1}^{-1}(\zeta^{3/2})$};

                \end{tikzpicture} }
            \caption{The jumps of the standard Airy parametrix $\boldsymbol{A}(\tau)$.}
            \label{AiryParametrixJumps}
        \end{figure}
        \item \textbf{Normalization:} As $\zeta \to \infty$,
        \begin{equation}\label{AiryNormalization}
            \boldsymbol{A}(\zeta)\boldsymbol{V}^{-1} \zeta^{-\sigma/4} = \I +
            \begin{bmatrix}
                \Oh\left( \zeta^{-3} \right) & \Oh\left( \zeta^{-1} \right) \\
                \Oh\left( \zeta^{-2} \right) & \Oh\left( \zeta^{-3} \right)
            \end{bmatrix},
        \end{equation}
        where $\boldsymbol{V}$ is the unimodular and unitary matrix
        \begin{equation}\label{VMatDef}
            \boldsymbol{V} := \frac{1}{\sqrt{2}}
            \begin{bmatrix}
                1 & -i \\
                -i & 1
            \end{bmatrix}.
        \end{equation}
    \end{enumerate}
\end{RHP}
The Airy parametrix is solved in terms of the Airy function $\Ai(\cdot)$. For the explicit formula see Appendix A of \cite{RationalPainleveII}.

\section{Behavior of Re$(2h+\lambda)$ when $x$ is real.} \label{proofApp}

In this appendix we provide proofs of statements \eqref{symOfhg0} and \eqref{Re(2h+lambda) on the real line}. In both statements $x$ is a fixed, real number (and hence, in the pole-free region). The first statement is when $x \in \R$, then
\begin{equation}
    \Real \left(2h\left(-\overline{z}; x \right) + \lambda(x) \right) =  \Real \left(2h(z;x)+\lambda(x)\right).
\end{equation}
We prove the equivalent statement that 
\begin{equation}
    \Real \left( h(-\overline{z}:x) \right) = \Real\left( h(z;x) \right).
\end{equation}
\begin{proof}
    Consider the function $T(x) := iS(x)$. Then, $T(x)$ is one of the solutions to the cubic equation
    \begin{equation}\label{Tcubic}
    T^3 -xT -2 =0.
    \end{equation}
    The discriminant of \eqref{Tcubic} is $4x^3-108$, which vanishes at $x=3$. Indeed, at $x=3$, \eqref{Tcubic} has a double root at $T=-1$ and a simple root at $T=2$. Since $S(x)$ is continuous at $x=3$, it follows that $T(3)=2$. Further,  $4x^3-108$ does not vanish along the real line (except at $x=3$), so when $x$ is real $T(x)$ is the largest real root of \eqref{Tcubic}. So, when $x<3$, $T(x)$ will be the only real root of \eqref{Tcubic}; on the other hand, when $x>3$, $T(x)$ will be the largest root among the three distinct real roots of \eqref{Tcubic}. Notice, for each $x \in \R$ the cubic polynomial $p(T):=T^3-xT-2$ is negative at $T=0$. Indeed, $P(0)=-2$. However, since $P(T) \to +\infty$ as $T \to +\infty$, $P(T)$ must have at least one positive root. Therefore, $T(x)$, being the largest real root of $p$,  must be positive, which implies that $S(x)$ is negative-imaginary for $x \in \R$.

    Since the square root in \eqref{Delta a and b} is principal, it follows that $\Delta(x)$ is on the positive real axis and 
    \begin{equation}\label{relaBetaAndbxReal}
        a(x) = -\overline{b(x)}.
    \end{equation}
    Further, \eqref{relaBetaAndbxReal} (along with basic trigonometric identities) implies when $x$ is real, the square-root function $r(z;x)$ (defined in \eqref{defOfr(z)}) exhibits the symmetry 
    \begin{equation}\label{symOfrWhenxReal}
        r\left(-\overline{z};x \right) = - \overline{r(z;x)}.
    \end{equation}

    Fix $x \in \R$ and consider the bivariate function of real variables $u$ and $v$ defined by
    \begin{equation}
        j(u,v;x) \equiv j(u,v):= h(u+iv) - h(-u+ iv) =  h(z) - h\left( -\overline{z} \right)
    \end{equation}
    (here $z=u+iv$ and $j(u,v)$ is well defined whenever $h(z)$ is).  Notice, \eqref{symOfhg0} holds if and only if the real part of $j(u,v)$ is identically zero. Moreover, since $j(0,0)=0$, it suffices to show that the real part of the partials of $j$ vanish. Using the Cauchy-Riemann equations one can show this is equivalent to  $\Real(h'(z)+h'(-\overline{z}))=0$ and $\Imag(h'(z)-h'(-\overline{z}))=0$.

    Looking at \eqref{defhg0} and \eqref{gprimeG0} we see that $h'(z)=i(S+2z)r(z)$. Find $t\in \R$ so that $S(x)=it$. We compute
    \begin{align}
        \begin{split}
            h'(z) \pm h'(-\overline{z}) & = i (it+ 2z) r(z) \pm i(it- 2\overline{z})r(-\overline{z}) \\
            &=i (it+ 2z) r(z) \pm  i(-it+2\overline{z})\overline{r(z)} \\
            &= i\big((it+2z)r(z) \pm \overline{(it+2z)r(z)} \big).
        \end{split}
    \end{align}
    Thus, we have that $h'(z)+h'(-\overline{z})$ is purely imaginary and $h'(z)-h'(-\overline{z})$ is purely real, our desired results.
\end{proof}
We now prove when $x \in \R$, $\Real(2h(z;x)+\lambda(x))$ can at most vanish two times on the real line (in the $z$-plane).
\begin{proof}
    Fix $x \in \R$, and consider the function $\mathcal{F}:\R \to \R$ defined by
   \begin{equation}
    \mathcal{F}(u) := \Real(2h(u;x)+\lambda(x)).
   \end{equation}
   Rolle's theorem tells us for any two zeros of $\mathcal{F}$, there exists a point between them such that $\Real(h')$ vanishes at that point. In particular, if $\mathcal{F}$ has three or more zeros, then $\Real(h')$ would vanish at least two times on $\R$. Thus, in order to show that $\mathcal{F}$ can have at most two zeros on $\R$ it suffices to show that $\Real(h')$ only vanishes once on $\R$, namely at $0$. Recall
   \begin{equation}
    h'(z) = i(S +2z)r(z).
   \end{equation}
   Using \eqref{relaBetaAndbxReal}  we see that $h'(0)=-S|b|$. Seeing as $S$ is purely imaginary, we have that $\Real(h'(0))=0$. Since $h'$ only vanishes at $z=a$, $z=b$, and $z=c$ (none of which are real numbers), it follows $|h'(u)|>0$ for $u \in \R$. This means we need only to control  $\arg(h'(u))$. Moreover, the symmetry \eqref{symOfrWhenxReal} implies that for $u \in \R$, we have
   \begin{equation}
    \arg(h'(-u)) = \pi - \arg(h(u)).
   \end{equation}
   Hence, it suffices to show that for $u>0$ $\arg(h'(u)) \in (\pi/2, \pi)$.

   For the upper bound we see that, since $S$ is negative-imaginary and $u$ is positive-real, it follows
   \begin{equation}
    \arg(h'(u)) = \arg(i) + \arg(S+2u) + \arg(r(u)) < \frac{\pi}{2} + \frac{\arg(-a)+\arg(\overline{a})}{2} = \pi. 
   \end{equation} 
   Next, a direct calculation shows that for, $u> 0$,
   \begin{equation}
    \frac{\dd}{\dd u} \arg(h'(u)) =  \frac{-\Imag(a) \Real(a)^2 \left( |a|^2-3u^2 \right)}{\left( \Imag(a)^2 + \left( \Real(a)-u \right)^2 \right)\left( \Imag(a)^2+ u^2 \right)\left( \Imag(a)^2 + \left( \Real(a)+u \right)^2 \right)}.
   \end{equation}
   So, $\arg(h'(u))$ is strictly decreasing on the interval $(|a|/\sqrt{3},\infty)$. Further, the normalization of $h$ implies that $h'(z)= i4z^2+ \Oh(1)$ as $z \to \infty$. Thus, as $u \to +\infty$, $\arg(h'(u)) \to \pi/2$, and for finite, positive $u$, $\arg(h'(u))>\pi/2$. This completes our proof. 
\end{proof}

\end{document}